\newtheorem{lemma}{Lemma}
\newtheorem{theorem}{Theorem}
\newtheorem{proposition}[theorem]{Proposition}
\newtheorem{corollary}{Corollary}
\newtheorem{definition}{Definition}
\newtheorem{claim}{Claim}
\newcommand{\cw}{\mathsf{ctw}}
\newcommand{\ola}{\textsc{Ola}}
\newcommand{\vars}{\mathrm{vars}}
\newcommand{\cls}{\mathrm{cls}}
\newcommand{\fstsmt}[2]{#1_{\leq #2}}
\newcommand{\lstsmt}[2]{#1_{> #2}}
\newcommand{\onesmt}[2]{#1_{#2}}
\newcommand{\cutvector}[2]{\mathsf{cuts}\langle #1,#2\rangle}
\newcommand{\Oh}{\mathcal{O}}
\def\cqedsymbol{\ifmmode$\lrcorner$\else{\unskip\nobreak\hfil
\penalty50\hskip1em\null\nobreak\hfil$\lrcorner$
\parfillskip=0pt\finalhyphendemerits=0\endgraf}\fi} 
\newcommand{\cqed}{\renewcommand{\qed}{\cqedsymbol}}
\newcommand{\blue}[1]{#1}
\newcommand{\red}[1]{#1}
\newcounter{col} \newcounter{row}
\newlength{\shift} \newlength{\tisize}
\title{Exploring the complexity of layout parameters in tournaments and semi-complete digraphs\thanks{The research of F. Barbero and C. Paul is supported by the DE-MO-GRAPH project ANR-16-CE40-0028. The research of Mi. Pilipczuk is supported by Polish National Science Centre grant UMO-2013/11/D/ST6/03073. Mi. Pilipczuk is also supported by the Foundation for Polish Science via the START stipend programme.}}
\author{Florian Barbero \thanks{LIRMM, Universit\'e de Montpellier, France, \texttt{florian.barbero@lirmm.fr}} 
   \and Christophe Paul \thanks{LIRMM, CNRS, Universit\'e de Montpellier, France, \texttt{christophe.paul@lirmm.fr}} 
   \and Micha\l{}' Pilipczuk \thanks{University of Warsaw, Poland, \texttt{michal.pilipczuk@mimuw.edu.pl}} }
\date{\today}
\begin{document}

\maketitle

\begin{abstract}
A simple digraph is \emph{semi-complete} if for any two of its vertices $u$ and $v$, 
at least one of the arcs $(u,v)$ and $(v,u)$ is present.
We study the complexity of computing two layout parameters of semi-complete digraphs: cutwidth and optimal linear arrangement (\ola{}).
We prove that:
\begin{itemize}[nosep]
\item Both parameters are $\mathsf{NP}$-hard to compute
and the known exact and parameterized algorithms for them have essentially optimal running times, assuming the Exponential Time Hypothesis.
\item The cutwidth parameter admits a quadratic Turing kernel, whereas it does not admit any polynomial kernel unless $\mathsf{NP}\subseteq \mathsf{coNP}/\textrm{poly}$.
By contrast, \ola{} admits a linear kernel.
\end{itemize}
These results essentially complete the complexity analysis of computing cutwidth and \ola{} on semi-complete digraphs.
Our techniques can be also used to analyze the sizes of minimal obstructions for having small cutwidth under the induced subdigraph relation.
\end{abstract}

\tableofcontents

\newpage
\section{Introduction}\label{sec:intro}

A directed graph (digraph) is \emph{simple} if it does not contain a self-loop or multiple arcs with the same head and tail.
A simple digraph is \emph{semi-complete} if for any pair of its vertices $u$ and $v$, at least one of the arcs $(u,v)$ or $(v,u)$ is present.
If moreover exactly one of them is present for each pair $u,v$, then a semi-complete digraph is called a \emph{tournament}.
Tournaments and semi-complete digraphs \blue{form rich and interesting subclasses} of directed graphs; we refer to the book of Bang-Jensen and Gutin~\cite{Bang-Jensen-Gutin} for an overview.

We study two layout parameters for tournaments and semi-complete digraphs: cutwidth and optimal linear arrangement (\ola{}).
Suppose $\pi$ is an ordering of the vertices of a digraph $D$.
With each prefix of $\pi$ we associate a \emph{cut} defined as the set of arcs with head in the prefix and tail outside of it.
The \emph{width} of $\pi$ is defined as the maximum size among the cuts associated with the prefixes of $\pi$.
The \emph{cutwidth} of $D$, denoted $\cw(D)$, is the minimum width among orderings of the vertex set of $D$.
Optimal linear arrangement (\ola{}) is defined similarly, but when defining the width of $\pi$, called in this context the \emph{cost} of $\pi$, 
we take the sum of the cutsizes associated with prefixes, instead of the maximum.
Then the \ola{}-cost of a digraph $D$, denoted $\ola(D)$, is the minimum cost among vertex orderings of $D$.

\paragraph{Known results.}
The study of cutwidth in the context of tournaments and semi-complete digraphs started with the work of Chudnovsky, Fradkin, and Seymour~\cite{CFS12,CS11,FradkinS15}, 
who identified this layout parameter as the \blue{accurate}
dual notion to immersions in semi-complete digraphs.
In particular, it is known that excluding a fixed digraph as an immersion yields a constant upper bound on the cutwidth of a semi-complete digraph~\cite{CFS12,Pilipczuk13}.
Due to this connection, cutwidth played a pivotal role in the proof of Chudnovsky and Seymour that the immersion order is a well quasi-order on tournaments~\cite{CS11}. 

The algorithmic properties of cutwidth were preliminarily investigated by Chudnovsky, Fradkin, and Seymour~\cite{CFS12,FradkinS15,Fra11}.
In Fradkin's PhD thesis~\cite{Fra11}, several results on the tractability of computing the cutwidth are presented.
In particular, it is shown that the cutwidth of a tournament can be computed optimally by just sorting vertices according to their outdegrees, whereas in semi-complete digraphs
a similar approach yields a polynomial-time $2$-approximation algorithm. The problem becomes NP-hard on super-tournaments, that is, when multiple parallel arcs are allowed.
Later, the third author together with Fomin proposed a parameterized algorithm for computing the cutwidth of a semi-complete digraph with running time $2^{\Oh(\sqrt{k\log k})}\cdot n^2$~\cite{FominP13,Pil13}, 
where $n$ is the number of vertices and $k$ is the target width.
Using the same techniques, \ola{} in semi-complete digraphs can be solved in time $2^{\Oh(k^{1/3}\sqrt{\log k})}\cdot n^2$~\cite{FominP13,Pil13}, where $k$ is the target cost.
It was left open whether the running times of these parameterized algorithms are optimal~\cite{Pil13}.
In fact, even settling the $\mathsf{NP}$-hardness of computing cutwidth and \ola{} in semi-complete digraphs was open~\cite{Fra11,Pil13}.

\paragraph{Our contribution.} 
We study two aspects of the computational complexity of computing cutwidth and \ola{} of semi-complete digraphs: optimality of parameterized algorithms and kernelization. First, we prove that these problems are $\mathsf{NP}$-hard and we provide almost tight lower bounds for the running times of algorithms solving them, based on the Exponential Time Hypothesis (ETH). 
\blue{More precisely, assuming ETH we prove that the cutwidth of a semi-complete digraph cannot be computed in time $2^{o(n)}$ nor in time $2^{o(\sqrt{k})}\cdot n^{\Oh(1)}$. The same arguments yields $2^{o(n)}$ and $2^{o(k^{1/3})}\cdot n^{\Oh(1)}$ lower bounds on the time to compute the \ola{}-cost of a semi-complete digraph. (See Theorem~\ref{thm:lower-bound} in Section~\ref{sec:lower-bound} for a precise statement.)} It follows that the known parameterized algorithms of Fomin and Pilipczuk~\cite{FominP13} are optimal under ETH, up to $\sqrt{\log k}$ factor in the exponent. Note that both cutwidth and \ola{}-cost can be computed in time $2^n\cdot n^{\Oh(1)}$ using standard dynamic programming on subsets, so we obtain tight lower bounds also for exact exponential-time algorithms. Interestingly, the lower bounds for both problems are based on the same reduction. 

\blue{
Next we turn our attention to the kernelization complexity of  computing the two layout parameters (see Section~\ref{sec:turing-kernel}).
We first observe that, when respectively parameterized by the target cost and the target width, the two problems behave differently. 
As far as \ola{} is concerned, we prove that there is a polynomial-time algorithm that, given an arbitrary digraph $D$ and a positive integer $c$, 
either correctly concludes that $\ola(D)>c$, or finds a digraph $D'$ on at most $2c$ vertices such that $\ola(D')=\ola(D)$. 
That is, computing the \ola{}-cost of an arbitrary digraphs admits a kernel of size linear in the target cost (see Theorem~\ref{thm:ola-kernel}). 
On the other hand, a simple \textsf{AND}-composition~\cite{Dru15,CFK14} shows that, under the assumption that $\mathsf{NP} \not \subseteq \mathsf{coNP}/\textrm{poly}$, 
such a polynomial size kernel is unlikely to exist for the problem of computing the cutwidth of a semi-complete digraph (see Theorem~\ref{thm:no-kernel}).
}

\blue{
Surprisingly, pre-processing for computing the cutwidth of a semi-complete digraph turns out to be efficient if we consider the alternative notion of kernelization called {\em{Turing kernelization}}. In this framework, which has also been studied intensively in the literature (cf. the discussion in~\cite{CFK14}), it is not required that the instance at hand is reduced to one equivalent small instance,  but rather that the whole problem can be solved in polynomial time assuming oracle access to an algorithm solving instances of size bounded by a function of the parameter. }

\blue{
We design a polynomial-time algorithm that given a semi-complete digraph $D$ and integer $c$, either correctly concludes that $\cw(D)>c$ or  outputs a list of at most $n$ induced subdigraphs $D_1,\ldots,D_{\ell}$ of $D$, each with at most $\Oh(c^2)$ vertices, such that $\cw(D) \leq c$ if and only if $\cw(D_i)\leq c$ for each $i\in \{1,2,\ldots,\ell\}$ (see Theorem~\ref{thm:turing-kernel}). Observe that this algorithm is a so-called {\em{{\sc{and}}-Turing kernel}}, meaning that it just computes the output list without any oracle calls,  and the answer to the input instance is the conjunction of the answers to the output small instances. This places the problem of computing the cutwidth of a semi-complete digraph among very few known examples of natural problems where classic and Turing kernelization have different computational power~\cite{BFF12,Jan17,GW16,SKM12,TTV17}. Moreover, this is the first \blue{known to us} polynomial {\sc{and}}-Turing kernel for a natural problem:  examples of Turing kernelization known in the literature are either {\sc{or}}-Turing kernels~\cite{BFF12,GW16,SKM12},  or adaptative kernels that fully exploit the oracle model~\cite{Jan17,TTV17}. As separating classic and Turing kernelization is arguably one of the most important complexity-theoretical open problems within parameterized complexity~\cite{DowneyF13,HKS13,CFK14}, we find this new example intriguing. 
}

The proof of the Turing kernel relies on the notion of a {\em{lean ordering}}; see e.g.~\cite{GPR16,Tho90}.
Intuitively, a vertex ordering is lean if it is tight with respect to cut-flow duality: there are systems of arc-disjoint paths which certify that cutsizes along the ordering cannot be improved.
Lean orderings and decompositions are commonly used in the analysis of obstructions for various width notions, as well as for proving well quasi-order results.
In particular, the concept of a lean ordering for cutwidth of digraphs was used by Chudnovsky and Seymour in their proof that the immersion order is a well quasi-order on tournaments~\cite{CS11}.

As a byproduct of our approach to proving Theorem~\ref{thm:turing-kernel}, we obtain also polynomial upper bounds on the sizes of minimal obstructions to having small cutwidth. For a positive integer $c$, a digraph $D$ is called \emph{$c$-cutwidth-minimal} if the cutwidth of $D$ is at least $c$,  but the cutwidth of every proper induced subdigraph of $D$ is smaller than $c$. We show that every $c$-cutwidth-minimal semi-complete digraph has at most $\Oh(c^2)$ vertices (see Theorem~\ref{thm:obstructions-semi}) and that every $c$-cutwidth-minimal tournament has at most $2c+2\lceil \sqrt{2c}\rceil+1$ vertices~\ref{thm:obstructions-tour}. See Section~\ref{sec:obstructions}. We remark that from the well-quasi order result of Chudnovsky and Seymour~\cite{CS11}, it follows the number of minimal {\em{immersion}} obstructions for tournaments of cutwidth at most $c$ is finite. However, this holds only for tournaments, yields a non-explicit upper bound on obstruction sizes, and applies to immersion and not induced subdigraph obstructions. As discussed in Section~\ref{sec:obstructions}, these bounds  have direct algorithmic applications for parameterized graph modification problems related to cutwidth, e.g., $c$-{\sc Cutwidth Vertex Deletion}: remove at most $k$ vertices from a given digraph to obtain a digraph of cutwidth at most $c$.

\paragraph{Organization.}
In Section~\ref{sec:prelims} we establish basic notation and recall some background from complexity theory, concerning kernelization and the Exponential Time Hypothesis.
In Section~\ref{sec:polynomial} we give polynomial-time algorithms for computing the cutwidth and the \ola-cost of a tournament, as well as $2$-approximation algorithms for these parameters on semi-complete digraphs.
Section~\ref{sec:turing-kernel} is devoted to the kernelization complexity. We first describe the complexity of computing the cutwidth and the \ola-cost of a semi-complete digraph under the classic notion of
kernelization, and then we give a quadratic Turing kernel for the cutwidth case.
In Section~\ref{sec:obstructions} we utilize the tools developed in the previous sections to give upper bounds on the sizes of $c$-cutwidth-minimal tournaments and semi-complete digraphs.
Section~\ref{sec:lower-bound} focuses on lower bounds under ETH: we prove that in the semi-complete setting, both computing the cutwidth and the \ola-cost are $\mathsf{NP}$-complete problems, and the
running times of known algorithms for them are essentially tight under ETH.
The shape of the lower bound construction suggests that the problem may be tractable if almost all vertices are not incident to any symmetric arcs (two oppositely oriented arcs between the same pair of vertices).
Indeed, in Section~\ref{sec:pure} we present an appropriate parameterization that captures this scenario, and prove a fixed-parameter tractability result for it.
Section~\ref{sec:conclusions} concludes the paper by giving some final remarks.


\section{Preliminaries and basic results}\label{sec:prelims}

\paragraph{Notations.}  
We use standard graph notation for digraphs. For a digraph $D$, the vertex and edge sets of $D$ are denoted by $V(D)$ and $E(D)$, respectively. 
For $X,Y \subseteq V(D)$, we denote $E(X,Y)= \{(u,v) \in E(D)\colon u \in X, v \in Y\}$.
The {\em{subdigraph induced}} by $X$ is the digraph $D[X]=(X,E(X,X))$.
Note that an induced subdigraph of a tournament is also a tournament, likewise for semi-complete digraphs. 
The \emph{inneighborhood} of a vertex $u$ is $N_D^-(u) = \{v \in V(D)\colon (v,u) \in E(D)\}$, a vertex $v \in N_D^-(u)$ being an \emph{inneighbour} of $u$. 
The \emph{indegree} of $u$ is $d_D^-(u) = |N_D^-(u)|$. We may drop the subscript when it is clear from the context. 
We define similarly the \emph{outneighborhood} $N^+(u)$ and the \emph{outdegree} $d^+(u)$.
All digraphs considered in this paper are simple, i.e., they do not contain a self-loop or multiple arcs with the same head and tail.
For definitions of tournaments and semi-complete digraphs, see the first paragraph of \blue{Section~\ref{sec:intro}.}
If present in a digraph, the arcs $(u,v)$ and $(v,u)$ are called \emph{symmetric arcs}. 

For two integers $p \leq p'$, let $[p,p'] \subseteq \mathbb{Z}$ be the set of integers between $p$ and $p'$. 
If $p < p'$, we set $[p',p] = \emptyset$ by convention. A {\em{vertex ordering}} of a digraph $D$ is a bijective mapping $\pi\colon V(D)\rightarrow [1,n]$, where $n=|V(D)|$. 
A vertex $u \in V(D)$ is \emph{at position $i$ in $\pi$} if $\pi(u) = i$. 
We denote this unique vertex by $\onesmt{\pi}{i}$. 
The {\em{prefix}} of length $i$ of $\pi$ is $\fstsmt{\pi}{i}= \{\onesmt{\pi}{j} \colon j \in [1,i]\}$; 
we set $\fstsmt{\pi}{i} = \emptyset$ when $i \leq 0$, and $\fstsmt{\pi}{i} = V(D)$ when $n \leq i$. 
We extend this notation to prefixes and suffixes of orderings naturally, e.g., $\lstsmt{\pi}{i} = V(D)\setminus \fstsmt{\pi}{i}$ is the set of the last $n-i$ vertices in $\pi$.
\blue{The notions of restriction of an ordering to a subset of vertices and of concatenation of orderings are defined naturally.}

An arc $(\onesmt{\pi}{i},\onesmt{\pi}{j}) \in E(D)$ is a \emph{feedback arc for $\pi$} if $i > j$, that is, if $\onesmt{\pi}{i}$ is after $\onesmt{\pi}{j}$ in $\pi$. 
Given a digraph $D=(V,E)$, an ordering $\pi$ of $V$ and an integer $i$, we define the \emph{cut $E^i_\pi$} as the set of feedback arcs $E(\lstsmt{\pi}{i},\fstsmt{\pi}{i})$. 
The tuple $\cutvector{D}{\pi} = (|E^0_\pi|, |E^1_\pi|, \ldots, |E^n_\pi|)$ is called the \emph{cut vector} of $\pi$, and we denote $\cutvector{D}{\pi}(i) = |E_\pi^i|$.
Let $\preceq$ be the product order on tuples: for $n$-tuples $A,B$, we have $A \preceq B$ iff $A(i) \leq B(i)$ for all $i \in [0,n]$. 
We define $A \prec B$ as $A \preceq B$ and $A \neq B$.
We say that a vertex ordering $\pi$ is \emph{minimum for $D$} if for all vertex orderings $\pi'$ of $D$ we have $\cutvector{D}{\pi} \preceq \cutvector{D}{\pi'}$.
Note that a minimum vertex ordering may not exist.

The {\em{width}} of a vertex ordering $\pi$ of a digraph $D$, denoted $\cw(D,\pi)$, is equal to $\max\{\cutvector{D}{\pi}\}$, where $\max$ on a tuple yields the largest coordinate. 
The {\em{cutwidth}} of $D$, denoted $\cw(D)$, is the minimum width among vertex orderings of $D$.
Similarly, the {\em{cost}} of $\pi$, denoted $\ola(D,\pi)$, is equal to $\sum\{\cutvector{D}{\pi}\}$, where $\sum$ on a tuple yields the sum of coordinates.
This is equivalent to summing $|i-j|$ for all feedback arcs $(\onesmt{\pi}{j},\onesmt{\pi}{i})$ in $\pi$.
The {\em{OLA-cost}} of $D$, denoted $\ola(D)$, is the minimum cost among vertex orderings of $D$. 
A vertex ordering $\pi$ of $D$ satisfying $\cw(D) = \cw(D,\pi)$, or $\ola(D) = \ola(D,\pi)$, is respectively called \emph{$\cw$-optimal} or \emph{\ola-optimal} for $D$. 
Note that a minimum ordering for $D$, if existent, is always \emph{$\cw$-optimal} and \emph{\ola-optimal} for $D$. 

\paragraph{Kernelization.}
A {\em{kernelization algorithm}} (or {\em{kernel}}, for short) is a polynomial-time algorithm that given some instance of a parameterized problem, returns an equivalent instance whose size is bounded by a computable function of the input parameter; this function is called the {\em{size}} of the kernel.
We are mostly interested in finding polynomial kernels, as admitting a kernel of any computable size is equivalent to fixed-parameter tractability of the problem~\cite{DowneyF13,CFK14}. 

Based on the complexity hypothesis that $\mathsf{NP}\not\subseteq \mathsf{coNP}/\textrm{poly}$, various parameterized reduction and composition techniques allow to rule out the existence of a polynomial-size kernel. We refer to e.g.~\cite{CFK14} for an overview of this methodology. 
The concept of {\em{$\mathsf{AND}$-composition}} is one of these techniques. For a parameterized problem $\Pi$ (see~\cite{DowneyF13,CFK14} for basic definitions of parameterized complexity), an 
{\em{$\mathsf{AND}$-composition}} is a polynomial-time algorithm that a sequence of instances $(I_1,k),(I_2,k),\ldots, (I_t,k)$, all with the same parameter $k$,
outputs a single instance $(I^\star,k^\star)$, such that the following holds:
\begin{itemize}
\item $(I^\star,k^\star)\in \Pi$ if and only if $(I_i,k)\in \Pi$ for all $1\leq i\leq t$; and
\item $k^\star\leq q(k)$ for some polynomial $q$.
\end{itemize}

\begin{theorem}[\cite{Dru15}]\label{th:and}
Let $\Pi$ be an $\mathsf{NP}$-hard parameterized problem for which there exists an $\mathsf{AND}$-composition. 
Then, unless $\mathsf{NP}\subseteq \mathsf{coNP}/\textrm{poly}$, $\Pi$ does not admit a polynomial kernel.
\end{theorem}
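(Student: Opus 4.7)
The plan is to derive non-existence of a polynomial kernel by composing the $\mathsf{AND}$-composition with the hypothetical kernel so as to obtain a so-called $\mathsf{AND}$-\emph{distillation} of an $\mathsf{NP}$-hard language into $\Pi$, and then to invoke Drucker's core technical result that such distillations collapse the polynomial hierarchy.

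First, assume for contradiction that $\Pi$ admits a polynomial kernel of size $p(k)$. Since $\Pi$ is $\mathsf{NP}$-hard, fix an $\mathsf{NP}$-hard language $L$ together with a polynomial-time many-one reduction from $L$ to $\Pi$. Given $t$ instances $x_1,\ldots,x_t$ of $L$, all of the same length $n$, I would reduce each $x_i$ to an instance $(I_i,k_i)$ of $\Pi$ via this reduction. The $\mathsf{AND}$-composition requires the parameters to coincide; this can be arranged by a standard padding step, so that we may assume a common parameter $k\leq \mathrm{poly}(n)$ across all the $(I_i,k)$. This preliminary normalization is a routine but necessary calibration step.

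Next, feed $(I_1,k),\ldots,(I_t,k)$ into the $\mathsf{AND}$-composition to obtain a single instance $(I^\star,k^\star)$ of $\Pi$ with $k^\star\leq q(k)=\mathrm{poly}(n)$, and then feed $(I^\star,k^\star)$ into the assumed polynomial kernel to obtain an equivalent instance $(I',k')$ of size at most $p(k^\star)=\mathrm{poly}(n)$. Chaining the two correctness guarantees, $(I',k')$ is a YES-instance of $\Pi$ if and only if every $x_i$ lies in $L$, and its total bit-length is polynomial in $n$ alone, \emph{independent of $t$}. In other words, we have polynomially compressed the conjunction of $t$ arbitrary $n$-bit instances of an $\mathsf{NP}$-hard language into a single $\mathrm{poly}(n)$-bit instance of $\Pi$. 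This is exactly an $\mathsf{AND}$-distillation of $L$ into $\Pi$.

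The hard part, and the actual depth of the argument, is to conclude that the existence of such a distillation forces $\mathsf{NP}\subseteq \mathsf{coNP}/\textrm{poly}$. This is Drucker's main contribution and the principal obstacle: earlier work (Bodlaender--Downey--Fellows--Hermelin, Fortnow--Santhanam) handled the $\mathsf{OR}$-analogue by relatively clean complexity-theoretic arguments, whereas the $\mathsf{AND}$-version resisted such treatment for several years. Drucker resolves it via an information-theoretic analysis in the oracle-communication framework: an $\mathsf{AND}$-distillation is modelled as a protocol, and a careful statistical-distance argument shows how to extract, in a nonuniform way, a polynomial-size advice string that enables a $\mathsf{coNP}$ verifier to certify NO-instances of $L$. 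Granting this deep machinery, combining it with the distillation constructed above places the $\mathsf{NP}$-hard language $L$ into $\mathsf{coNP}/\textrm{poly}$, hence $\mathsf{NP}\subseteq \mathsf{coNP}/\textrm{poly}$, contradicting the hypothesis and ruling out the polynomial kernel for $\Pi$.
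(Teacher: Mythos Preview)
The paper does not prove this theorem at all: it is stated in the preliminaries as a cited result of Drucker~\cite{Dru15} and used as a black box (see the proof of Theorem~\ref{thm:no-kernel}). There is therefore no ``paper's own proof'' to compare against.

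That said, your sketch is the standard derivation one finds in the literature (e.g.\ in~\cite{CFK14}): chain the $\mathsf{AND}$-composition with the putative polynomial kernel to obtain an $\mathsf{AND}$-distillation of an $\mathsf{NP}$-hard language, and then invoke Drucker's information-theoretic theorem that such a distillation implies $\mathsf{NP}\subseteq\mathsf{coNP}/\textrm{poly}$. Two small remarks. First, the detour through an auxiliary $\mathsf{NP}$-hard language $L$ is unnecessary: since $\Pi$ itself is $\mathsf{NP}$-hard, you may take $L$ to be the unparameterized version of $\Pi$ and skip the initial reduction. Second, the ``standard padding step'' to equalize parameters is not automatic for an arbitrary parameterized problem; one typically bypasses this by grouping the input instances according to parameter value and composing each group separately (there are only polynomially many groups since $k\le\mathrm{poly}(n)$), or by observing that the composition framework as usually stated already handles bounded rather than equal parameters. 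Apart from these technicalities, your outline is correct.
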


The notion of {\em{Turing kernelization}} has been introduced as an alternative to classic kernelization (cf.
the discussion in~\cite{CFK14}).
The idea is to relax the notion of efficient pre-processing to allow producing multiple instances of small size, so that we can obtain positive results for problems which do not admit polynomial-size kernels in the strict sense. In this framework, it is not required that the instance at hand is reduced to one equivalent small instance, 
but rather that the whole problem can be solved in polynomial time assuming oracle access to an algorithm solving instances of size bounded by a function of the parameter.
More precisely, a Turing kernel of size $q(k)$ for a parameterized problem $\Pi$ is an algorithm that, given an instance $(I,k)$, resolves whether $(I,k)\in \Pi$ in polynomial time when given access to
an oracle that resolves belonging to $\Pi$ for instances of size at most $q(k)$. Each oracle call is counted as a single step of the algorithm. As with classic kernelization, we typically assume that $q(k)$ is computable,
and in practice we are looking for {\em{polynomial Turing kernels}} where $q(k)$ is a polynomial.

\paragraph{Exponential-Time Hypothesis.} 
The \emph{Exponential Time Hypothesis} (ETH) of Impagliazzo et al.~\cite{IPZ01} states that for some constant \blue{$\alpha > 0$}, there is no algorithm for {\sc{3SAT}} that would run in time $2^{\blue{\alpha}\cdot n} \cdot (n + m)^{\Oh(1)}$, 
where $n$ and $m$ are the numbers of variables and clauses of the input formula, respectively. 
Using the {\em{Sparsification Lemma}}~\cite{IPZ01} one can show that under ETH, there is a constant $\blue{\alpha}>0$ such that {\sc{3SAT}} cannot be solved in time $2^{\blue{\alpha} \cdot m} \cdot (n + m)^{\Oh(1)}$.
In this work we use the {\sc{NAE-3SAT}} problem (for Not-All-Equal), which is a variant of {\sc{3SAT}} where a clause is considered satisfied only when at least one, but not all of its literals are satisfied.
Schaefer~\cite{Sch78} gave a linear reduction from {\sc{3SAT}} to {\sc{NAE-3SAT}}, which immediately yields:

\begin{corollary}\label{cor:ETH} 
Unless ETH fails, {\sc{NAE-3SAT}} cannot be solved in time $2^{o(m)}\cdot (n+m)^{\Oh(1)}$, where $n$ and $m$ are the numbers of variables and clauses of the input formula, respectively. 
\end{corollary}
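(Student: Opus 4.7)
The plan is to chain two ingredients: the sparsified form of ETH for \textsc{3SAT}, and Schaefer's linear-size reduction~\cite{Sch78} from \textsc{3SAT} to \textsc{NAE-3SAT}. The sparsification lemma of Impagliazzo, Paturi, and Zane~\cite{IPZ01}, invoked in the paragraph preceding the corollary, asserts that under ETH there is a constant $\alpha>0$ for which no algorithm decides \textsc{3SAT} in time $2^{\alpha m}\cdot (n+m)^{\Oh(1)}$; equivalently, \textsc{3SAT} cannot be solved in time $2^{o(m)}\cdot (n+m)^{\Oh(1)}$. This is the only hardness assumption I would import.

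The second ingredient is Schaefer's polynomial-time reduction, which transforms a \textsc{3SAT} instance $\varphi$ with $n$ variables and $m$ clauses into an equisatisfiable \textsc{NAE-3SAT} instance $\varphi'$ with $n'=\Oh(n+m)$ variables and $m'=\Oh(n+m)$ clauses. Before applying it, I would perform a one-line preprocessing step on $\varphi$: discard any variable that does not occur in any clause (this cannot affect satisfiability). After this cleanup, $n\leq 3m$, and hence Schaefer's reduction produces an instance with $n'+m'=\Oh(m)$.

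Now suppose towards contradiction that \textsc{NAE-3SAT} admitted an algorithm running in time $2^{o(m')}\cdot (n'+m')^{\Oh(1)}$. Composing Schaefer's reduction with this algorithm would decide the original \textsc{3SAT} instance $\varphi$ in time $2^{o(m')}\cdot (n'+m')^{\Oh(1)} = 2^{o(m)}\cdot (n+m)^{\Oh(1)}$, contradicting the consequence of ETH stated above. There is no genuine obstacle here; the only point that requires attention is to verify that Schaefer's reduction is truly linear, so that $m'=\Oh(m)$ after the trivial variable cleanup, which is what ensures $2^{o(m')}$ in the conclusion translates back to $2^{o(m)}$ in the hypothesis. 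Once this is in place, the corollary follows by direct contraposition.
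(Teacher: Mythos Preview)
Your proposal is correct and matches the paper's approach exactly: the paper does not give a formal proof but simply states that the corollary follows immediately from the Sparsification Lemma~\cite{IPZ01} together with Schaefer's linear reduction from \textsc{3SAT} to \textsc{NAE-3SAT}~\cite{Sch78}. You have merely spelled out the details (including the harmless cleanup ensuring $n\le 3m$) that the paper leaves implicit.
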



\section{Polynomial-time and approximation algorithms} 
\label{sec:polynomial}


The starting point of our study is the approach used in the earlier works by Fradkin~\cite{Fra11} and by the third author~\cite{Pil13,Pilipczuk13},
namely to sort the vertices of the given semi-complete digraph according to non-decreasing indegrees, and argue that this ordering has to resemble an optimum one. As shown by Fradkin~\cite{Fra11}, this statement may be made precise for tournaments: any indegree ordering has optimum cutwidth. A slight modification enabled Fradkin~\cite{Fra11} to establish a polynomial-time $2$-approximation for semi-complete digraphs. We choose to include the proofs of these two results (see Theorems~\ref{thm:poly-time} and~\ref{thm:apx} below) for several reasons. First, the applicability of the approach to \ola{} is a new contribution of this work. Moreover, the fine understanding of optimum orderings is a basic tool needed in the proofs of our main results. Finally, the abovementioned results of Fradkin~\cite{Fra11} were communicated only in her PhD  thesis and, to the best of our knowledge, were neither included in any published work, nor we have found any reference to them. We believe that these fundamental observations deserve a better publicity.

The core idea is to work in the more general setting of {\em{fractional tournaments}}, a linear relaxations of tournaments. Formally, a {\em{fractional tournament}} is a pair $T=(V,\omega)$, where $V$ is a finite vertex set and $\omega\colon V^2\to \mathbb{R}_{\geq 0}$ is a weight function that satisfies the following properties: 
$\omega(u,u)=0$ for all $u\in V$, and $\omega(u,v)+\omega(v,u)=1$ for all pairs of different vertices $u,v$. 
Thus, by requiring the weights to be integral we recover the original definition of a tournament. 
We extend the notation for digraphs to fractional tournaments as follows. For $X,Y\subseteq V$ we define 
$\omega(X,Y)=\sum_{x\in X,\, y\in Y} \omega(x,y)$, and for $u\in V$ we define $\omega^-(u)=\omega(V,\{u\})$ and $\omega^+(u)=\omega(\{u\},V)$.
The notions of (minimum) vertex orderings, cut vectors, cutwidth, and OLA-cost are extended naturally: the cardinality of any cut $E(X,Y)$ is replaced by the sum of weights $\omega(X,Y)$.

Suppose $T=(V,\omega)$ is a fractional tournament. We say that a vertex ordering $\pi$ of $T$ is {\em{sorted}} if for any pair of different vertices $u$ and $v$,
if $\omega^-(u)<\omega^-(v)$, then $\pi(u)<\pi(v)$; in other words, the vertices are sorted according to their indegrees. The following lemma extends the sorting arguments of Fradkin~\cite{Fra11}, and
encapsulates the essence of our approach.

\begin{lemma}\label{lem:minimum-sorted}
A vertex ordering of a fractional tournament is minimum iff it is sorted.
\end{lemma}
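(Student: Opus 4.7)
The plan is to derive an explicit formula for every entry of the cut vector that decomposes it into two parts: a term depending only on the indegrees of the vertices in the prefix, and a purely combinatorial term depending only on the length of the prefix. Once we have this formula, both implications will follow almost immediately.

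More precisely, fix a fractional tournament $T=(V,\omega)$ and an ordering $\pi$. For any index $i\in[0,n]$, writing $F=\fstsmt{\pi}{i}$ and $L=\lstsmt{\pi}{i}$, I would compute
\[
\sum_{u\in F}\omega^-(u)=\omega(V,F)=\omega(F,F)+\omega(L,F).
\]
Using that $\omega(u,v)+\omega(v,u)=1$ for distinct vertices and $\omega(u,u)=0$, the internal term telescopes to $\omega(F,F)=\binom{i}{2}$. Therefore
\[
\cutvector{T}{\pi}(i)=\omega(L,F)=\sum_{u\in F}\omega^-(u)-\binom{i}{2}.
\]
This is the key identity: at position $i$, the cut is determined solely by the multiset of indegrees in the prefix $\fstsmt{\pi}{i}$, shifted by a constant depending only on $i$.

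From here the forward direction is immediate. If $\pi$ is sorted, then for every $i$ the prefix $\fstsmt{\pi}{i}$ realises the minimum of $\sum_{u\in F}\omega^-(u)$ over all $F\subseteq V$ with $|F|=i$, namely the sum of the $i$ smallest indegrees. For any other ordering $\pi'$ the prefix $\fstsmt{\pi'}{i}$ has size $i$ as well, so $\cutvector{T}{\pi'}(i)\geq \cutvector{T}{\pi}(i)$ by the identity above. This holds for every $i$, giving $\cutvector{T}{\pi}\preceq \cutvector{T}{\pi'}$, i.e., $\pi$ is minimum.

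For the converse I would argue contrapositively: suppose $\pi$ is not sorted. Let $i_0$ be the smallest position for which there exists a vertex $u$ with $\pi(u)>i_0$ and $\omega^-(u)<\omega^-(\onesmt{\pi}{i_0})$. Then the prefix $\fstsmt{\pi}{i_0}$ does not consist of $i_0$ vertices of smallest indegree, because swapping $\onesmt{\pi}{i_0}$ with a minimum-indegree vertex of $\lstsmt{\pi}{i_0}$ strictly decreases $\sum_{u\in F}\omega^-(u)$ with $|F|=i_0$. Applying the identity to any sorted ordering $\pi^\star$ then yields $\cutvector{T}{\pi^\star}(i_0)<\cutvector{T}{\pi}(i_0)$, so $\cutvector{T}{\pi}\not\preceq \cutvector{T}{\pi^\star}$ and $\pi$ is not minimum.

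There is no real obstacle here; the only subtlety worth pausing on is that the notion of minimum ordering is with respect to the coordinate-wise product order, and a priori a minimum ordering need not exist in arbitrary digraphs. The identity above sidesteps this issue entirely for fractional tournaments: because the constant $\binom{i}{2}$ is independent of $\pi$, minimising the cut vector coordinate-wise reduces to a single purely set-theoretic problem (picking prefixes of $i$ smallest indegrees for every $i$), which is simultaneously solvable exactly by sorted orderings.
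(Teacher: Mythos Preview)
Your proof is correct and follows essentially the same approach as the paper: both derive the identity $\cutvector{T}{\pi}(i)=\sum_{u\in\fstsmt{\pi}{i}}\omega^-(u)-\binom{i}{2}$ via the relation $\omega(F,F)=\binom{|F|}{2}$, and then observe that minimising each coordinate reduces to choosing prefixes of smallest total indegree, which characterises sorted orderings. Your write-up of the converse direction is slightly more explicit (choosing a particular position $i_0$ and comparing against a sorted $\pi^\star$), but the underlying argument is identical.
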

\begin{proof}
Let $T=(V,\omega)$ be a fractional tournament. By definition, a vertex ordering $\sigma$ of $T$ is minimum if and only if for each vertex ordering $\pi$ and each $i \in [0,|V|]$, the following inequality~($\diamondsuit$) holds: $\cutvector{T}{\sigma}(i)\leq \cutvector{T}{\pi}(i)$. Observe that, since $T$ is a fractional tournament, for any vertex ordering $\pi$ of $T$ and position $i \in [0,|V|]$ we have that\begin{gather*}
\cutvector{T}{\pi}(i) = \omega(\lstsmt{\pi}{i},\fstsmt{\pi}{i}) = \omega(V,\fstsmt{\pi}{i}) - \omega(\fstsmt{\pi}{i},\fstsmt{\pi}{i}) \text{, and}\\
\omega(\fstsmt{\pi}{i},\fstsmt{\pi}{i}) =\sum_{u\in \fstsmt{\pi}{i}} \omega(u,u)+\sum_{\substack{\{u,v\}\subseteq \fstsmt{\pi}{i},\\ u\neq v}} (\omega(u,v)+\omega(v,u))=\binom{i}{2}.
\end{gather*}
Hence, the inequality ($\diamondsuit$) is equivalent to $\omega(V,\fstsmt{\sigma}{i}) \leq \omega(V,\fstsmt{\pi}{i})$.
It is clear that the vertex ordering $\sigma$ minimizes $\omega(V,\fstsmt{\sigma}{i})$ only when $\fstsmt{\sigma}{i}$ consists of $i$ vertices of $T$ with the smallest indegrees $\omega^-$. But the latter condition holds for all $i \in [0,|V|]$ if and only if $\sigma$ is sorted.
\end{proof}

\begin{theorem}\label{thm:poly-time}
The cutwidth and \ola{} of a tournament can be computed in polynomial time.
\end{theorem}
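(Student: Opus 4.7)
The plan is to deduce Theorem~\ref{thm:poly-time} almost immediately from Lemma~\ref{lem:minimum-sorted}. A tournament $T$ is a special case of a fractional tournament $(V,\omega)$ in which $\omega$ takes values in $\{0,1\}$, and under this identification the indegree $\omega^-(u)$ coincides with the usual indegree $d^-_T(u)$. Hence Lemma~\ref{lem:minimum-sorted} applies verbatim: every ordering that lists the vertices in non-decreasing order of indegree is a minimum vertex ordering of $T$, and such an ordering trivially exists.

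The algorithm I would propose is therefore the following. First, compute $d^-_T(u)$ for every $u\in V(T)$; this takes $\mathcal{O}(n^2)$ time. Second, sort the vertices by non-decreasing indegree to obtain an ordering $\sigma$; ties may be broken arbitrarily. Third, compute the cut vector $\cutvector{T}{\sigma}$ in $\mathcal{O}(n^2)$ time by scanning the arcs. Finally, return $\max \cutvector{T}{\sigma}$ for the cutwidth, and $\sum \cutvector{T}{\sigma}$ for the \ola-cost.

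The correctness is immediate from Lemma~\ref{lem:minimum-sorted}: since $\sigma$ is minimum, we have $\cutvector{T}{\sigma} \preceq \cutvector{T}{\pi}$ for every other ordering $\pi$, and both the coordinate-wise maximum and the coordinate-wise sum are monotone under $\preceq$. Consequently $\sigma$ is simultaneously $\cw$-optimal and \ola-optimal, giving both $\cw(T)=\max \cutvector{T}{\sigma}$ and $\ola(T)=\sum \cutvector{T}{\sigma}$.

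There is no real obstacle here since the main work has already been done in Lemma~\ref{lem:minimum-sorted}; the only tiny point worth writing carefully is the observation that a minimum ordering optimizes \emph{every} width functional that is monotone under the product order $\preceq$, which covers $\max$ and $\sum$ in one stroke and explains why the same sorted ordering handles both layout parameters. The running time is clearly polynomial, in fact $\mathcal{O}(n^2)$ on a tournament given by its adjacency matrix.
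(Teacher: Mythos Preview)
Your proposal is correct and follows essentially the same approach as the paper: sort the vertices by indegree, invoke Lemma~\ref{lem:minimum-sorted} to conclude that this sorted ordering is minimum, and hence simultaneously $\cw$-optimal and \ola-optimal. Your write-up is slightly more detailed in spelling out the algorithm and the monotonicity of $\max$ and $\sum$ under~$\preceq$, but the argument is identical in substance.
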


The proof of Theorem~\ref{thm:poly-time}, even in the more general setting of fractional tournaments, is now immediate. We just sort the vertices according to their indegrees $\omega^-$. By Lemma~\ref{lem:minimum-sorted}, the obtained ordering is minimum, hence it is both $\cw$-optimal and \ola-optimal.

\begin{figure}[h]
\centering
\begin{tikzpicture} 
\tikzstyle{vertex}=[draw,circle,minimum size=0.5cm,inner sep=0pt]
\node[vertex] (u) at (-1.2,0) {$u$};
\node[vertex] (3) at (0,-0.6) {};
\node[vertex] (1) at (0,.6) {};
\node[vertex] (4) at (1.2,-.6) {};
\node[vertex] (2) at (1.2,.6) {};
\draw[<->,>=latex] (1) -- (2);
\draw[<->,>=latex] (1) -- (3);
\draw[<->,>=latex] (1) -- (4);
\draw[<->,>=latex] (2) -- (3);
\draw[<->,>=latex] (2) -- (4);
\draw[<->,>=latex] (3) -- (4);
\draw[->,>=latex] (u) -- (3);
\draw[->,>=latex] (u) -- (1);
\draw[->,>=latex] (4) to[out = -45,in = -135] (u);
\draw[->,>=latex] (2) to[out = 45,in = 135] (u);
\end{tikzpicture} \vspace{-5mm}
\caption{A semi-complete digraph without any minimum ordering. Since $u$ has both the minimum indegree and the minimum outdegree, it minimizes the first or last cut depending whether it is the first or last vertex of an ordering. 
}\label{fig:noMin}
\end{figure}
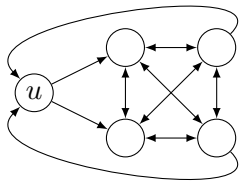

Observe that Lemma~\ref{lem:minimum-sorted} cannot be generalized to the semi-complete setting, as there are semi-complete digraphs that do not admit any minimum ordering; see Figure~\ref{fig:noMin} for an example. However, relaxing a given semi-complete digraph to a fractional tournament yields a $2$-approximation algorithm for general semi-complete digraphs.

Precisely, for a semi-complete digraph $D$, consider its {\em{relaxation}} $T_D$ which is a fractional tournament on the vertex set $V(D)$, where for every pair of different vertices $u$ and $v$, we put:
\begin{itemize}[nosep]
\item $\omega(u,v)=1$ and $\omega(v,u)=0$, when $(u,v)$ is present in $D$ but $(v,u)$ is not present; and
\item $\omega(u,v)=\omega(v,u)=1/2$, when $(u,v)$ and $(v,u)$ is a pair of symmetric arcs in $D$.
\end{itemize}
We put $\omega(u,u)=0$ for every vertex $u$, thus $T_D$ is indeed a fractional tournament.
Observe that for any pair of vertices $u,v$, we have $|E(\{u\},\{v\})|/2\leq \omega_{T_D}(\{u\},\{v\})\leq |E(\{u\},\{v\})|$.
Therefore, for every vertex ordering $\pi$ of $D$ and every index $i\in [0,n]$, it holds that
\begin{align*}
\cutvector{D}{\pi}(i)/2 \leq \cutvector{T_D}{\pi}(i)\leq \cutvector{D}{\pi}(i).
\end{align*}
In particular we have
$\cw(D)/2\leq \cw(T_D) \leq \cw(D)$ and $\ola(D)/2\leq \ola(T_D) \leq \ola(D)$. As a consequence, we have:

\begin{theorem}\label{thm:apx}
There exists a polynomial-time algorithm that given a semi-complete digraph $D$, 
outputs an ordering of its vertices of width, and respectively cost, upper bounded by twice the cutwidth, respectively \ola{}-cost, of $D$.
\end{theorem}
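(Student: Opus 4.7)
The plan is to exhibit the algorithm explicitly and then verify both approximation guarantees simultaneously via the relaxation $T_D$ introduced just above the theorem. The algorithm is: given $D$, construct $T_D$ in polynomial time (a trivial scan of arc pairs), compute for each $u\in V(D)$ the half-integral weighted indegree $\omega^-_{T_D}(u)$, and return any ordering $\pi$ of $V(D)$ that sorts vertices by non-decreasing $\omega^-_{T_D}$. This clearly runs in polynomial time.

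Next I would invoke Lemma~\ref{lem:minimum-sorted} on the fractional tournament $T_D$ to conclude that $\pi$ is a \emph{minimum} vertex ordering of $T_D$, i.e., $\cutvector{T_D}{\pi}\preceq \cutvector{T_D}{\sigma}$ coordinatewise for every vertex ordering $\sigma$ of $V(D)=V(T_D)$. In particular, $\cw(T_D,\pi)=\cw(T_D)$ and $\ola(T_D,\pi)=\ola(T_D)$.

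Finally I would combine this with the pointwise sandwich inequality $\cutvector{D}{\pi}(i)/2\leq \cutvector{T_D}{\pi}(i)\leq \cutvector{D}{\pi}(i)$ already derived in the excerpt. Let $\sigma^\star$ be an optimal ordering for the relevant parameter on $D$. For cutwidth:
\[
\cw(D,\pi)\;\leq\; 2\cw(T_D,\pi)\;=\;2\cw(T_D)\;\leq\;2\cw(T_D,\sigma^\star)\;\leq\;2\cw(D,\sigma^\star)\;=\;2\cw(D),
\]
and for the optimal linear arrangement the identical chain goes through with $\ola$ in place of $\cw$ (using the fact that summing the sandwich inequality over $i$ preserves the factor $2$). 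Thus $\pi$ simultaneously witnesses both approximation guarantees.

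The only delicate point, and therefore the place I would be most careful, is not any calculation but rather the logical use of Lemma~\ref{lem:minimum-sorted}: the lemma guarantees minimality in the product order, which is a priori stronger than what is needed for either of the two parameters individually. I would emphasize that it is precisely this joint minimality — the same $\pi$ simultaneously dominates every other ordering in every coordinate — that allows a single sorted ordering to approximate both $\cw$ and $\ola$ within factor $2$, without any tuning of the algorithm to the chosen parameter.
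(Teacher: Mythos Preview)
Your proposal is correct and follows essentially the same approach as the paper: construct the relaxation $T_D$, output a sorted ordering $\pi$ with respect to indegrees in $T_D$, invoke Lemma~\ref{lem:minimum-sorted} to conclude $\pi$ is minimum (hence optimal for both $\cw$ and $\ola$) on $T_D$, and then sandwich $D$ against $T_D$ via the pointwise factor-$2$ inequalities. Your chain $\cw(D,\pi)\leq 2\cw(T_D,\pi)=2\cw(T_D)\leq 2\cw(T_D,\sigma^\star)\leq 2\cw(D,\sigma^\star)=2\cw(D)$ is exactly the paper's argument unrolled, and your closing remark about the product-order minimality enabling a single ordering to approximate both parameters is a nice clarifying point the paper leaves implicit.
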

\begin{proof}
Given a semi-complete digraph~$D$, we compute its relaxation $T_D$ and we sort the vertices according to their indegrees in $T_D$,
obtaining a vertex ordering $\pi$.
Lemma~\ref{lem:minimum-sorted} ensures us that 
\begin{align*}
\cw(T_D,\pi)=\cw(T_D)\qquad\textrm{and}\qquad \ola(T_D,\pi)=\ola(T_D).
\end{align*}
On the other hand, by inequalities 
\begin{align*}
\cw(D)/2\leq \cw(T_D) \leq \cw(D)\qquad\textrm{and}\qquad\ola(D)/2\leq \ola(T_D) \leq \ola(D)
\end{align*}
we have that
\begin{align*}
\cw(D,\pi)/2\leq \cw(T_D,\pi),\qquad & \ola(D,\pi)/2\leq \ola(T_D,\pi),\\
\cw(T_D)\leq \cw(D),\qquad & \ola(T_D)\leq \ola(D).
\end{align*}
By combining these together, we obtain the required upper bounds:
\begin{align*}
\cw(D,\pi)\leq 2\cw(D)\qquad\textrm{and}\qquad \ola(D,\pi)\leq 2\ola(D),
\end{align*}
which means that the ordering $\pi$ can be reported by the algorithm.
\end{proof}

Finally, we remark that Theorem~\ref{thm:apx} can be generalized in several directions. 
First, here we considered parameters cutwidth and \ola-cost, which are defined by taking the maximum and the sum over the cut vector
corresponding to an ordering. However, we used only a few basic properties of the max and sum functions; the whole reasoning would go through for any function over the cut vectors that is monotone
with respect to the $\preceq$ order, and scales by $\alpha$ when the cut vector is multiplied by $\alpha$.
Second, instead of semi-complete digraphs, we could consider weighted semi-complete digraphs defined as follows: we have a weight function $\omega\colon V^2\to \mathbb{R}_{\geq 0}$,
where we require that $\omega(u,u)=0$ for every vertex $u$, and $\omega(u,v)+\omega(v,u)>0$ for every pair of different vertices $u$ and $v$.
Such a weighted semi-complete digraph can be rescaled to a fractional tournament by taking the normalized weight function $\omega'(u,v)=\frac{\omega(u,v)}{\omega(u,v)+\omega(v,u)}$.
Then, the same reasoning as above yields an approximation algorithm for the cutwidth and the \ola-cost with approximation factor $\frac{\omega^{\max}_D}{\omega^{\min}_D}$, where
\begin{align*}
\omega^{\max}_D = \underset{\substack{u, v \in V(D)\\u \neq v}}{\max} \; (\omega(u,v) + \omega(v,u))\qquad \text{and}\qquad  \omega^{\min}_D = \underset{\substack{u, v \in V(D)\\u \neq v}}{\min} \; (\omega(u,v) + \omega(v,u)).
\end{align*}

\newcommand\redd[1]{#1}

\section{Kernelization aspects of {\sc{Cutwidth}} and {\sc{Ola}} }
\label{sec:turing-kernel}

\subsection{On polynomial kernels}

\blue{As announced in the introduction, the {\sc{Cutwidth}} and {\sc{Ola}} problems, when respectively parameterized by the target width and target cost, 
behave differently with respect to the kernelization paradigm. As we shall see next, the problem of computing the \ola{}-cost admits a linear kernel in arbitrary digraphs, 
while computing the {\sc{Cutwidth}} of semi-complete digraphs is unlikely to admit a polynomial size kernel. 
The proofs of these two results, stated as Theorems~\ref{thm:ola-kernel} and~\ref{thm:no-kernel} below, directly follow from the understanding of the contribution of strongly connected components in optimal orderings.
On one side, the contribution to the \ola{}-cost of each strongly connected component is at least linear in its size, implying a linear kernel for this parameter.
On the other side, we can observe that the cutwidth of digraph is the maximum over the cutwidth of its strongly connected components, 
which implies that, like many other width parameters, cutwidth is an \emph{$\mathsf{AND}$-composable} parameter~\cite{Dru15,CFK14}. 
}


\begin{theorem} \label{thm:ola-kernel}
The {\sc{Ola}} problem in \blue{arbitrary} digraphs, parameterized by the target cost $k$, admits a kernel with at most $2k$ vertices.
\end{theorem}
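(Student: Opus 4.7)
The plan is to reduce $D$ to the subdigraph induced by the vertices lying on some directed cycle, i.e., to discard every singleton strongly connected component (SCC). The whole kernel is this single reduction rule.

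First I would establish the identity $\ola(D)=\sum_{C}\ola(D[C])$, where the sum ranges over the SCCs $C$ of $D$. For the ``$\leq$'' direction I take the condensation of $D$, order its nodes topologically, and within each SCC use an \ola-optimal ordering of $D[C]$. In the resulting concatenation no inter-SCC arc is a feedback arc, so the cost is exactly $\sum_C \ola(D[C])$. For the ``$\geq$'' direction I take an arbitrary ordering $\pi$ of $D$ and argue arc-by-arc: for each feedback arc $(u,v)$ with $u,v$ in a common SCC $C$, its contribution $\pi(u)-\pi(v)$ to $\ola(D,\pi)$ is at least its contribution in the restriction of $\pi$ to $C$, because restricting can only shrink the distance between $u$ and $v$. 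Inter-SCC feedback arcs only add to $\ola(D,\pi)$. Summing over all arcs yields $\ola(D,\pi)\geq \sum_C \ola(D[C])$.

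Since a singleton SCC contributes $0$ to the sum, deleting all singleton SCCs from $D$ produces a digraph $D'$ with $\ola(D')=\ola(D)$; this is the reduction rule. It remains to bound $|V(D')|$. For any SCC $C$ with $|C|\geq 2$, strong connectivity forces at least one feedback arc across each of the $|C|-1$ nontrivial cuts of any ordering of $D[C]$, so $\ola(D[C])\geq |C|-1\geq |C|/2$. Summing over non-singleton SCCs gives $\ola(D)\geq |V(D')|/2$, hence $\ola(D)\leq k$ implies $|V(D')|\leq 2k$. If instead $|V(D')|>2k$, the algorithm outputs a fixed small NO-instance, for example a directed cycle on $k+2$ vertices (whose \ola-cost is $k+1>k$).

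The only mildly delicate step is the ``$\geq$'' direction of the decomposition identity, which is essentially the observation that passing from $\pi$ to its restriction $\pi|_C$ preserves the relative order of $C$-vertices while only shrinking their pairwise distances. Everything else reduces to computing SCCs in linear time and applying the elementary bound $\ola(D[C])\geq |C|-1$ on each non-singleton SCC.
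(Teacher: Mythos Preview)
Your proposal is correct and follows essentially the same approach as the paper: decompose $\ola(D)$ as the sum over strongly connected components, discard singleton SCCs, and use the bound $\ola(D[C])\geq |C|-1\geq |C|/2$ for each remaining component to conclude $|V(D')|\leq 2k$. You spell out the ``$\geq$'' direction of the decomposition identity and the construction of a trivial NO-instance more explicitly than the paper does, but the argument is the same.
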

\begin{proof}
\blue{
Let $D$ be a digraph. As the set of strongly connected components of a digraph naturally defines an acyclic digraph, we can choose an ordering $C_1,\ldots,C_p$ of the strongly connected components so that whenever $u\in C_i$ and $v\in C_j$ for some $i<j$,
then the arc $(v,u)$ is not present in $D$ (such an ordering is given by a topological ordering of the DAG of the strongly connected components of $D$).
}

Observe that since $C_1,\ldots,C_p$ are pairwise disjoint, we have
\begin{equation*}
\ola(D)\geq \ola(D[C_1])+\ldots+\ola(D[C_p]).
\end{equation*}
On the other hand, there exists a vertex ordering of $D$ of cost equal to $\ola(D[C_1])+\ldots+\ola(D[C_p])$: one may simply concatenate optimum-cost orderings of $D[C_1], D[C_2],\ldots, D[C_p]$.
This means that in fact
\begin{equation}\label{eq:ola-sum}
\ola(D)=\ola(D[C_1])+\ldots+\ola(D[C_p]).
\end{equation}
Now observe that if any strongly connected component $C_i$ has only one vertex, then its removal from the digraph does not change the OLA-cost. 
Let then $D'$ be obtained from $D$ by removing all the strongly connected components of size one; then $\ola(D')=\ola(D)$.
Clearly, if $D'$ has at most $2k$ vertices, then it can be reported by the algorithm.
We argue that otherwise, if $D'$ has more than $2k$ vertices, then the algorithm can conclude that $\ola(D)=\ola(D')>k$.

Take any connected component $C_i$, and examine $D[C_i]$. Since $D[C_i]$ is strongly connected, for any its vertex ordering $\pi$, all the cuts 
$E(\lstsmt{\pi}{j},\fstsmt{\pi}{j})$ for $j\in \{1,2,\ldots,|C_i|-1\}$ will be nonempty. Consequently, the cost of any ordering $\pi$ will be at least $|C_i|-1$, so
\begin{equation}\label{eq:sub1}
\ola(C_i)\geq |C_i|-1.
\end{equation}
Observe that if $|C_i|>1$ then $|C_i|-1\geq |C_i|/2$. Hence, by~\eqref{eq:ola-sum} and~\eqref{eq:sub1} we have
\begin{equation*}
\ola(D)= \sum_{i=1}^p \ola(D[C_i])\geq \sum_{i=1}^p (|C_i|-1)\geq \frac{1}{2}\,\sum_{i\colon |C_i|>1}\, |C_i|=\frac{1}{2}|V(D')|.
\end{equation*}
We conclude that if $D'$ has more than $2k$ vertices, then $\ola(D)>k$.  
\end{proof}

\blue{Let us now turn to the {\sc{Cutwidth}} problem in semi-complete digraphs. We show that this problem admits a simple  $\mathsf{AND}$-composition. }

\begin{theorem}\label{thm:no-kernel}
Unless $\mathsf{NP}\subseteq \mathsf{coNP}/\textrm{poly}$, there exists no polynomial-size kernelization algorithm for {\sc{Cutwidth}} problem in semi-complete digraphs when parameterized by the target width.
\end{theorem}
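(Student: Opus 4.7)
The plan is to apply Theorem~\ref{th:and} by exhibiting a polynomial-time $\mathsf{AND}$-composition for {\sc{Cutwidth}} on semi-complete digraphs. The required $\mathsf{NP}$-hardness is supplied by Theorem~\ref{thm:lower-bound}. So the whole argument boils down to showing that the cutwidth of a carefully glued semi-complete digraph equals the maximum of the cutwidths of its parts.

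Given a sequence of instances $(D_1,k),\ldots,(D_t,k)$, all semi-complete, I would build a single semi-complete digraph $D^\star$ together with output parameter $k^\star=k$ by placing the $D_i$ in a linear ``block order'' with purely forward inter-block arcs. Formally, $V(D^\star)$ is the disjoint union of the $V(D_i)$, all arcs of every $D_i$ are retained, and for every pair $u\in V(D_i)$, $v\in V(D_j)$ with $i<j$ a single forward arc $(u,v)$ is added (and no reverse arc). This ensures that $D^\star$ is simple and semi-complete, is computable in polynomial time, and trivially $k^\star\leq k$, so the polynomial bound on the output parameter is for free.

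The core claim is the equality $\cw(D^\star)=\max_{i=1,\ldots,t}\cw(D_i)$, from which the $\mathsf{AND}$-composition property follows immediately: $\cw(D^\star)\leq k$ iff $\cw(D_i)\leq k$ for every $i$. For the upper bound, I would concatenate $\cw$-optimal orderings of $D_1,\ldots,D_t$; since no inter-block arc goes backwards, every cut of the concatenation coincides with a cut of a single within-block sub-ordering, so the width of the concatenation is at most $\max_i \cw(D_i)$. For the lower bound, given any ordering $\pi$ of $D^\star$, I would restrict it to $V(D_i)$ to obtain an ordering $\pi_i$ of $D_i$ and verify that for every position $j$ in $\pi_i$, the cut $E^{j}_{\pi_i}$ of $D_i$ is contained in the cut $E^{p}_{\pi}$ of $D^\star$, where $p$ is the $\pi$-position of the vertex at $\pi_i$-position $j$; this is direct from the definitions since the endpoints of any arc in $E^j_{\pi_i}$ retain their relative $\pi$-order. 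Consequently $\cw(D_i)\leq \cw(D_i,\pi_i)\leq \cw(D^\star,\pi)$, and taking $\pi$ to be a $\cw$-optimum ordering of $D^\star$ finishes this direction.

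The proof is structurally simple and I do not expect a serious obstacle: the only subtle point is making sure that the construction preserves semi-completeness while producing no backward inter-block arc that could spoil the restriction argument, and both of these requirements are satisfied by the ``only forward inter-block arcs'' design. With the claimed equality at hand, Theorem~\ref{th:and} combined with $\mathsf{NP}$-hardness from Theorem~\ref{thm:lower-bound} yields the stated kernelization lower bound.
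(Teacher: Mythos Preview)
Your proposal is correct and follows essentially the same approach as the paper: the identical forward-arcs construction for $D^\star$, the concatenation argument for the upper bound, and the induced-subdigraph (restriction) argument for the lower bound, followed by invoking Theorem~\ref{th:and} together with the $\mathsf{NP}$-hardness of Theorem~\ref{thm:lower-bound}. The only cosmetic difference is that you state and prove the slightly stronger equality $\cw(D^\star)=\max_i\cw(D_i)$, whereas the paper directly argues the equivalence $\cw(D^\star)\leq c\iff \forall i\,\cw(D_i)\leq c$.
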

\begin{proof}
We give an $\mathsf{AND}$-composition algorithm for the problem.
\blue{Let $(D_1,c), (D_2,c),\dots, (D_t,c)$ be the input instances, where each digraph $D_i$ is semi-complete.} The algorithm constructs a semi-complete digraph $D^\star$ by taking the disjoint union of $D_1,D_2,\ldots,D_t$, and for all $1\leq i<j\leq t$, adding arcs $(u,v)$ for all $u\in V(D_i)$ and $v\in V(D_j)$. The output instance is $(D^\star,c)$, so it remains to argue that $\cw(D^\star)\leq c$ if and only if $\cw(D_i)\leq c$ for all $i=1,2,\ldots,t$.
In one direction, if each $D_i$ admits a vertex ordering $\pi_i$ of width at most $c$, then the concatenation 
\blue{$\pi_1\cdot \pi_2\cdot \ldots\cdot \pi_t$} of these orderings is a vertex ordering of $D^\star$ of width at most $c$.
Conversely, each $D_i$ is an induced subdigraph of $D^\star$, so if $\cw(D_i)>c$ then also $\cw(D^\star)>c$.
We conclude that indeed we have $\cw(D^\star)\leq c$ if and only if $\cw(D_i)\leq c$ for all $i=1,2,\ldots,t$.

Now the statement directly follows from Theorem~\ref{th:and} and the fact that computing the cutwidth of a semi-complete digraph is an $\mathsf{NP}$-complete problem (see Theorem~\ref{thm:lower-bound} in Section~\ref{sec:lower-bound}).
\end{proof}

\subsection{A Turing kernel for the {\sc{Cutwidth}} problem}


We now prove that an instance of the {\sc{Cutwidth}} problem can be reduced to polynomially many instances of quadratic size which altogether are equivalent to the original instance; this yields
a quadratic Turing kernel for the problem, as announced in Section~\ref{sec:intro}. 
The following technical statement, which we will prove next, describes the output of this procedure in details.

\begin{proposition}\label{prop:turing-kernel}
There exists a polynomial-time algorithm that given a semi-complete digraph $D$ and integer $c$, either correctly concludes that $\cw(D)>c$ or 
outputs a list of at most $n$ induced subdigraphs $D_1,\ldots,D_{\ell}$ of $D$, each with at most \redd{$24c^2+40c+1$} vertices, 
such that $\cw(D) \leq c$ if and only if $\cw(D_i)\leq c$ for each $i\in \{1,2,\ldots,\ell\}$.
\end{proposition}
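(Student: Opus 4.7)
The plan is to leverage the indegree-sorted ordering from Theorem~\ref{thm:apx} together with a sliding-window construction. First, apply the $2$-approximation to compute in polynomial time the sorted ordering $\sigma$ of $D$. If $\cw(D,\sigma) > 2c$, then by the approximation guarantee $\cw(D) > c$, and we correctly reject. Otherwise $\cw(D,\sigma) \leq 2c$, and in particular $\cw(D) \leq 2c$. Setting $w := 24c^2 + 40c + 1$, if $n \leq w$ we simply output $D$ itself as the only subdigraph; otherwise, for each $i \in \{1,\ldots,n-w+1\}$ we output the induced subdigraph $D_i := D[\{\onesmt{\sigma}{i},\onesmt{\sigma}{i+1},\ldots,\onesmt{\sigma}{i+w-1}\}]$. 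In any case at most $n$ induced subdigraphs are produced, each with at most $w$ vertices.

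The direction $\cw(D) \leq c \Rightarrow \cw(D_i) \leq c$ for all $i$ is immediate: the restriction of any $\cw$-optimal ordering of $D$ to $V(D_i)$ witnesses $\cw(D_i) \leq c$, since removing vertices only removes arcs from cuts. The nontrivial direction is the contrapositive of the converse: if $\cw(D) > c$, then some $D_i$ has $\cw(D_i) > c$. The plan is to exhibit an inclusion-minimal induced subdigraph $H \subseteq D$ with $\cw(H) > c$ (which must exist), and to prove both that $|V(H)| \leq w$ and that $V(H)$ lies within some block of $w$ consecutive positions of $\sigma$. Together these will give $V(H) \subseteq V(D_i)$ for some $i$, and hence $\cw(D_i) \geq \cw(H) > c$, the desired contradiction.

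The size bound $|V(H)| \leq w$ will be established via \emph{lean orderings}. Among the $\cw$-optimal orderings of $H$, pick one, say $\pi$, whose cut vector is $\preceq$-minimal; by a standard Menger-type argument $\pi$ is then lean, meaning each cut size is certified by an arc-disjoint path system and cannot be reduced by local swaps. The $c$-cutwidth-minimality of $H$ forces every prefix cut of $\pi$ to achieve the maximum value $\cw(H)=c+1$: otherwise a vertex incident only to arcs of some smaller cut could be removed without lowering the cutwidth, contradicting minimality. Combining the resulting arc-disjoint path systems across all cuts with the semi-completeness of $H$ through a double-counting argument then yields the quadratic bound $|V(H)| \leq 24c^2 + 40c + 1 = w$.

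For the localization of $V(H)$ within a window of $\sigma$, the plan is to compare indegrees in $D$ with indegrees in $H$ along the sorted ordering: if the positions in $\sigma$ of two vertices of $V(H)$ were separated by more than $w - |V(H)|$ foreign vertices, then the resulting indegree gap in $D$ would, together with $\cw(D,\sigma) \leq 2c$, be incompatible with the existence of the lean $\cw$-optimal ordering $\pi$ of $H$ of width exactly $c+1$. The main obstacle, and the technical heart of the proof, lies precisely in aligning the internal ordering of $H$ with the sorted ordering of the ambient digraph $D$, since indegrees inside $H$ can differ substantially from those in $D$; closing this gap will require a careful averaging argument that exploits both the $2c$ bound on the width of $\sigma$ and the leanness of $\pi$.
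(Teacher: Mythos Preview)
Your overall strategy---slide a window of length $w$ along a $2$-approximate ordering and argue that any minimal obstruction $H$ to $\cw\leq c$ must be captured by some window---is a natural idea, but it differs substantially from the paper's proof and, as written, has two genuine gaps.

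\textbf{The size bound on $H$ is not established.} Your argument that in a lean $\cw$-optimal ordering $\pi$ of $H$ ``every prefix cut of $\pi$ achieves the maximum value $\cw(H)=c+1$'' is simply false: the cuts at positions $0$ and $|V(H)|$ are empty, and intermediate cuts need not all attain the maximum (the $c$-cutwidth-minimal tournaments in Figure~\ref{fig:minTournament} already show this). The deduction ``a vertex incident only to arcs of some smaller cut could be removed without lowering the cutwidth'' does not follow either, since such a vertex is typically incident to arcs in many cuts simultaneously. More fundamentally, the paper's quadratic bound on $(c{+}1)$-cutwidth-minimal semi-complete digraphs (Theorem~\ref{thm:obstructions-semi}) is \emph{derived from} Proposition~\ref{prop:turing-kernel}, not used to prove it; invoking it here would be circular, and you have not supplied an independent proof.

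\textbf{The localization of $V(H)$ inside a $\sigma$-window is not argued.} Even granting $|V(H)|\leq w$, you still need that all vertices of $H$ occupy an interval of length $\leq w$ in $\sigma$. You acknowledge this as ``the technical heart of the proof'' but only promise ``a careful averaging argument''. There is no reason a priori why the vertices of a small obstruction cannot be spread far apart in the sorted ordering of the ambient digraph $D$; bounding this spread genuinely requires structural control over $\sigma$, and nothing in your outline provides it.

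For comparison, the paper avoids minimal obstructions entirely. It first makes the $2$-approximate ordering \emph{lean} (Lemma~\ref{lem:lean}), then identifies \emph{milestones}---positions whose cutsize is locally minimal over a span of $6c$---and shows via a submodularity/uncrossing argument (Lemma~\ref{lem:save-milestone}) that any ordering of width $\leq c$ can be locally modified to use each milestone as a prefix. Milestones occur every $\Oh(c^2)$ positions (Lemma~\ref{lem:milestone}), so they partition $D$ into $\Oh(c^2)$-sized blocks, and optimal orderings of (slightly enlarged) blocks can be \emph{glued} at the milestones into an ordering of $D$. The direction ``all $\cw(D_i)\leq c$ implies $\cw(D)\leq c$'' is thus proved constructively, not by finding an obstruction.
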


Proposition~\ref{prop:turing-kernel} immediately implies that the problem admits a Turing kernel. 

\begin{theorem}\label{thm:turing-kernel}
The \textsc{Cutwidth} problem in semi-complete digraphs, parameterized by the target width $c$, admits a Turing kernel with $\Oh(n)$ calls to an oracle solving instances of size $\Oh(c^2)$.
\end{theorem}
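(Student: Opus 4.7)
The plan is to derive Theorem~\ref{thm:turing-kernel} as an immediate corollary of Proposition~\ref{prop:turing-kernel}. Given an input instance $(D,c)$, I would first invoke the polynomial-time algorithm of the proposition. If that algorithm certifies $\cw(D)>c$, we return \textsc{no} directly, without consulting the oracle. Otherwise, the algorithm produces a list $D_1,\ldots,D_\ell$ of $\ell\leq n$ induced subdigraphs of $D$, each on at most $24c^2+40c+1=\Oh(c^2)$ vertices, with the guarantee that $\cw(D)\leq c$ holds if and only if $\cw(D_i)\leq c$ for every $i\in\{1,\ldots,\ell\}$.

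Next, I would query the oracle on each pair $(D_i,c)$. Since every $D_i$ has $\Oh(c^2)$ vertices, its encoding has size polynomial in the parameter $c$, so each query lies within the required size budget of the oracle. We make at most $n$ such oracle calls, each counted as a single step of the algorithm, and we output \textsc{yes} if and only if every oracle answer is \textsc{yes}. Correctness is immediate from the equivalence stated in Proposition~\ref{prop:turing-kernel}: the conjunction of oracle answers coincides with the truth value of $\cw(D)\leq c$. The whole procedure runs in time polynomial in $|V(D)|$, since the reduction supplied by the proposition is polynomial and only $\Oh(n)$ oracle invocations are made.

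There is essentially no obstacle beyond what is already handled by Proposition~\ref{prop:turing-kernel} itself; all the non-trivial work---extracting, via lean-ordering techniques, a short list of small induced subdigraphs whose cutwidths jointly determine $\cw(D)$---is encapsulated in that proposition. As a side remark, since the input answer is computed as the conjunction of oracle answers on a list of instances that the algorithm produces non-adaptively, this construction is in fact an \textsc{and}-Turing kernel, in line with the discussion in Section~\ref{sec:intro}.
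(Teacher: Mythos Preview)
Your proposal is correct and takes essentially the same approach as the paper's proof, which simply applies the algorithm of Proposition~\ref{prop:turing-kernel} and then uses one oracle call per output subdigraph $D_i$ to decide whether $\cw(D_i)\leq c$. Your write-up is somewhat more detailed (explicitly handling the case where the proposition reports $\cw(D)>c$, and noting the \textsc{and}-Turing kernel structure), but the argument is the same.
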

\begin{proof}
Apply the algorithm of Proposition~\ref{prop:turing-kernel} and use an oracle call for each output induced subdigraph $D_i$ to resolve whether its cutwidth is at most $c$.
\end{proof}

The remainder of this section is devoted to the proof of Proposition~\ref{prop:turing-kernel}.
\blue{The line of the reasoning is as follows. We first compute a $2$-approximate ordering using Theorem~\ref{thm:apx}, and then we exhaustively improve it
until it becomes a lean ordering (see definition below)}. Let $\sigma$ be the obtained ordering, and consider the sequence of cutsizes along $\sigma$. The next observation is crucial. Due to leanness, if some cutsize in this sequence 
is smaller or equal than $\Omega(c)$ cutsizes to the left and to the right, then there is some optimum-width ordering that uses the corresponding cut; that is, the prefix of $\sigma$ up to this cut is also 
a prefix of some optimum-width ordering. We call such cuts {\em{milestones}}. It is not hard to prove that a milestone can be found every $\Oh(c^2)$ vertices in the ordering $\sigma$. 
Thus we are able to partition the digraph
into pieces of size $\Oh(c^2)$ that may be treated independently. Each of these pieces gives rise to one digraph $D_i$ in the output of the kernelization algorithm.

Before proceeding to the proof itself, we first prove an auxiliary statement that intuitively says the following:
if two vertices are far from each another in a given ordering of width at most~$c$, then their relative ordering remains unchanged in any other ordering of width at most~$c$.

\begin{lemma}\label{lem:dispersed}
Let $D=(V,E)$ be a semi-complete digraph and let $\pi$ be its vertex ordering of width at most $c$. Suppose $u,v\in V$ are such that $\pi(u)<\pi(v)-3c$.
Then, for every vertex ordering $\sigma$ of $D$ of width at most $c$, we have that $u$ is placed before $v$ in $\sigma$.
\end{lemma}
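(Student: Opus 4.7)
The plan is to derive a contradiction from the assumption that $\sigma(v)<\sigma(u)$ by exhibiting more than $c$ arcs in a specific cut of $\sigma$. The key insight is that many vertices lying between $u$ and $v$ in $\pi$ must participate in length-two paths $u\to w\to v$, and such paths force arcs across any cut of $\sigma$ that separates $u$ from $v$.

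First I would introduce $W$ as the set of vertices strictly between $u$ and $v$ in $\pi$; the hypothesis gives $|W|\geq 3c$. Since $D$ is semi-complete, for every $w\in W$ at least one of $(u,w),(w,u)$ is an arc, and similarly for $(v,w),(w,v)$. The arcs $(w,u)$ with $w\in W$ all belong to the cut $E^{\pi(u)}_\pi$, and the arcs $(v,w)$ with $w\in W$ all belong to the cut $E^{\pi(v)-1}_\pi$; both cuts have size at most $c$. Hence, writing $W_u=\{w\in W\colon (u,w)\in E\}$ and $W_v=\{w\in W\colon (w,v)\in E\}$, we get $|W\setminus W_u|\leq c$ and $|W\setminus W_v|\leq c$, and by inclusion–exclusion $|W_u\cap W_v|\geq |W|-2c\geq c$. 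So there are at least $c$ vertices $w$ providing a directed path $u\to w\to v$.

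Next, assume for contradiction that $\sigma(v)<\sigma(u)$ and look at the cut $E^{\sigma(v)}_\sigma$. Since $u$ is to the right of this cut and $v$ to the left, for each $w\in W_u\cap W_v$ the path $u\to w\to v$ must cross the cut: if $\sigma(w)\leq\sigma(v)$ then $(u,w)\in E^{\sigma(v)}_\sigma$, otherwise $(w,v)\in E^{\sigma(v)}_\sigma$. These contributed arcs are pairwise distinct since the $w$'s are distinct, which already places at least $c$ arcs in $E^{\sigma(v)}_\sigma$.

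The final step — and the main piece of case-analysis one has to be careful with — is to push this count strictly above $c$. By semi-completeness exactly one of $(u,v)$ or $(v,u)$ is in $E$ (or both). If $(u,v)\in E$, this arc itself lies in $E^{\sigma(v)}_\sigma$ and is distinct from all the previously counted arcs, giving $|E^{\sigma(v)}_\sigma|\geq c+1$, a contradiction. If instead $(v,u)\in E$, then $(v,u)$ is a backward arc in $\pi$ and actually sits in both $E^{\pi(u)}_\pi$ and $E^{\pi(v)-1}_\pi$; re-running the earlier counting with this extra arc in each cut tightens the bounds to $|W\setminus W_u|\leq c-1$ and $|W\setminus W_v|\leq c-1$, whence $|W_u\cap W_v|\geq |W|-2(c-1)\geq c+2$, again giving more than $c$ arcs in $E^{\sigma(v)}_\sigma$. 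Either way $\sigma$ cannot have width at most $c$, contradicting the hypothesis.
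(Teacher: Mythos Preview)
Your proof is correct and follows essentially the same approach as the paper: both arguments count the vertices $w$ strictly between $u$ and $v$ in $\pi$ that yield length-two paths $u\to w\to v$, bounding the exceptions via the cuts $E_\pi^{\pi(u)}$ and $E_\pi^{\pi(v)-1}$, and split into the same two cases depending on whether $(v,u)\in E$ (tightening the bounds to $c-1$) or $(u,v)\in E$ (adding the direct arc). The only cosmetic difference is that the paper phrases the conclusion as ``$c+1$ arc-disjoint paths from $u$ to $v$, each of which must cross every cut between $v$ and $u$ in $\sigma$,'' whereas you pick the specific cut $E_\sigma^{\sigma(v)}$ and count arcs there directly; the content is identical.
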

\begin{proof}
Let $X$ be the set of vertices placed between $u$ and $v$ in $\pi$ (exclusive). As $\pi(u)< \pi(v)-3c$, we have $|X|\geq 3c$.
We are going to exhibit $c+1$ arc-disjoint paths leading from $u$ to $v$. This implies that in any ordering $\sigma$ of width at most $c$, $u$ must be placed before $v$, as
otherwise each of these paths would contribute to the size of every cut in $\sigma$ between $v$ and $u$.

Suppose first that $(v,u)\in E$.
Observe that $|X\cap N^-(u)|\leq c-1$, since each of arc with tail in $X\cap N^-(u)$ and head being $u$ contributes to the size of the cut $E_\pi^{\pi(u)}$, and we moreover have the arc $(v,u)\in E$.
Similarly, $|X\cap N^+(v)|\leq c-1$. If we now denote $Y=X\setminus (N^-(u)\cup N^+(v))$, then we infer that $|Y|\geq 3c-2(c-1)=c+2$.
Since $D$ is semi-complete, we have that for each $w\in Y$ there are arcs $(u,w)$ and $(w,v)$, which form a path of length $2$ from $u$ to $v$. This is a family of $c+2$ arc-disjoint paths from $u$ to $v$.

Suppose now that $(v,u)\notin E$, hence $(u,v)\in E$ since $D$ is semi-complete.
Exactly as above we infer that $|X\cap N^-(u)|\leq c$ and $|X\cap N^+(v)|\leq c$ (we now do not have the additional arc $(v,u)$), which implies that $|Y|\geq 3c-2c=c$.
Again, vertices of $Y$ give rise to a family of $c$ arc-disjoint paths of length $2$ from $u$ to $v$, to which we can add the one-arc path consisting of $(u,v)$ to get $c+1$ arc-disjoint paths from $u$ to $v$.
\end{proof}

The essence of our approach is encapsulated in Lemma~\ref{lem:save-milestone} below, which provides a sufficient condition for a cut in a given ordering $\pi$ so that it can be assumed to be used in an optimum ordering $\sigma$.

\begin{lemma}\label{lem:save-milestone}
Let $D=(V,E)$ be a semi-complete digraph. Let $\pi$ and $\sigma$ be two vertex orderings of $D$ such that $\cw(D,\sigma) \leq \cw(D,\pi)=c$.
Suppose further that \redd{$m\in [3c,|V|-3c]$} is such that in $D$ there is a family of $|E_\pi^m|$ arc-disjoint paths leading from \redd{$\lstsmt{\pi}{m+3c}$} to \redd{$\fstsmt{\pi}{m-3c}$}. 
Then there exists a vertex ordering $\sigma^*$ such that:
\begin{itemize}[nosep]
\item $\fstsmt{\sigma^*}{m}=\fstsmt{\pi}{m}$;
\item for every $j$ with \redd{$j\leq m-3c$} or \redd{$j>m+3c$}, we have $\sigma^*_j=\sigma_j$;
\item $\cw(D,\sigma^*)\leq \cw(D,\sigma)$.
\end{itemize}
\end{lemma}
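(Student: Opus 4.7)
The plan is to construct $\sigma^*$ explicitly and verify the three bullets in turn. Write $L=\fstsmt{\pi}{m-3c}$, $A=\fstsmt{\pi}{m}$, $B=\lstsmt{\pi}{m}$, and $R=\lstsmt{\pi}{m+3c}$, and let $W$ denote the set of vertices placed by $\sigma$ at the window positions $[m-3c+1,m+3c]$.

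First I will apply Lemma~\ref{lem:dispersed} twice to locate $L$ and $R$ inside $\sigma$: every vertex of $L$ must come before every vertex of $\lstsmt{\pi}{m}$ in $\sigma$, and every vertex of $A$ must come before every vertex of $R$ in $\sigma$, since in each case the relevant $\pi$-positions differ by more than $3c$. A counting argument then yields the inclusions $L\subseteq\fstsmt{\sigma}{m}$ and $R\subseteq\lstsmt{\sigma}{m}$ and, sharpening this, $\fstsmt{\sigma}{m-3c}\subseteq A$ and $\lstsmt{\sigma}{m+3c}\subseteq B$. So outside the window, $\sigma$ already separates $A$ from $B$, and inside the window we must have $|A\cap W|=|B\cap W|=3c$.

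I then define $\sigma^*$ to coincide with $\sigma$ outside the window, and inside the window to list first the $3c$ vertices of $A\cap W$ in the order they appear in $\sigma$, followed by the $3c$ vertices of $B\cap W$ in the order they appear in $\sigma$. The first bullet $\fstsmt{\sigma^*}{m}=A$ is immediate by construction, and the second bullet follows because the inclusions $\fstsmt{\sigma}{m-3c}\subseteq A$ and $\lstsmt{\sigma}{m+3c}\subseteq B$ imply that the prefix of $\sigma^*$ of length $m-3c$ is exactly $\fstsmt{\sigma}{m-3c}$ in the same order, and analogously on the $B$-side beyond position $m+3c$.

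The width bound is where the main difficulty lies. Cuts at positions outside the window are literally unchanged, so their sizes match $\sigma$'s. At position $m$, the identity $E_{\sigma^*}^m=E(B,A)=E_\pi^m$ holds; since $L\subseteq\fstsmt{\sigma}{m}$ and $R\subseteq\lstsmt{\sigma}{m}$, each of the hypothesised $|E_\pi^m|$ arc-disjoint paths from $R$ to $L$ crosses $E_\sigma^m$, giving $|E_\sigma^m|\geq|E_\pi^m|=|E_{\sigma^*}^m|$ and therefore $|E_{\sigma^*}^m|\leq\cw(D,\sigma)$. The delicate case is the intermediate positions $j\in[m-3c+1,m+3c]\setminus\{m\}$. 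For $j\leq m$ (the case $j>m$ being symmetric), I intend to compare $|E_{\sigma^*}^j|$ to $|E_\sigma^q|$, where $q$ is the $\sigma$-position of the $j$-th $A$-vertex. A direct decomposition of both cuts along the $A/B$ partition of $\fstsmt{\sigma}{q}$ shows that $\sigma^*$'s cut differs from $\sigma$'s by gaining the arcs of $E(\fstsmt{\sigma}{q}\cap B,\ \fstsmt{\sigma}{q}\cap A)$ and losing the arcs entering $\fstsmt{\sigma}{q}\cap B$ from positions after $q$. The main obstacle will be showing that the loss dominates the gain: my plan is to inject each gained arc into a distinct lost one by appealing to the arc-disjoint paths from $R$ to $L$, each of which must enter $\fstsmt{\sigma}{q}$ at some point, and those entering through a vertex of $\fstsmt{\sigma}{q}\cap B$ furnish the witnesses. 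Making this injection precise via a careful flow or charging argument, and dualising it for $j>m$, is where I expect the bulk of the technical work to concentrate.
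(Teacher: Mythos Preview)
Your construction of $\sigma^*$ and the verification of the first two bullets coincide with the paper's proof (the paper also concatenates the restriction of $\sigma$ to $\fstsmt{\pi}{m}$ with its restriction to $\lstsmt{\pi}{m}$, and derives the second bullet from the same two applications of Lemma~\ref{lem:dispersed}). The difference is in how you handle the width bound at the window positions.

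The paper does not attempt any per-arc charging. Instead, with $X=\fstsmt{\sigma}{q}$ (your $P\cup Q$) it invokes \emph{submodularity of directed cuts}:
\[
\delta(X\cap A)+\delta(X\cup A)\ \le\ \delta(X)+\delta(A),
\]
and then observes that $L\subseteq X\cup A$ while $(X\cup A)\cap R=\emptyset$ (the latter because $R$ lies after all of $A$ in $\sigma$, hence after the $A$-vertex $\onesmt{\sigma}{q}$). The arc-disjoint paths therefore give $\delta(X\cup A)\ge |E_\pi^m|=\delta(A)$, and one line of arithmetic yields $\delta(\fstsmt{\sigma^*}{j})=\delta(X\cap A)\le\delta(X)\le\cw(D,\sigma)$. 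The case $j>m$ is the mirror image with $\fstsmt{\sigma^*}{j}=X\cup A$.

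Your decomposition ``gain $=|E(Q,P)|$, loss $=|E(\lstsmt{\sigma}{q},Q)|$'' is correct and your injection can be made to work, but two points deserve care. First, your phrase ``each of which must enter $\fstsmt{\sigma}{q}$'' is not justified in general: you do not know $L\subseteq\fstsmt{\sigma}{q}$, so some of the $d$ paths may never enter $\fstsmt{\sigma}{q}$. What saves you is that you only need this for the paths \emph{through a gained arc}. Second, the injection relies on the (unstated) fact that, since there are $d=|E_\pi^m|$ arc-disjoint paths and $E_\pi^m$ separates $R$ from $L$, each path uses \emph{exactly one} arc of $E_\pi^m$ and every arc of $E_\pi^m$ lies on a unique path; hence each gained arc $(b,p)\in E(Q,P)\subseteq E_\pi^m$ determines a unique path whose segment from $R$ to $b$ stays entirely in $B$, starts outside $\fstsmt{\sigma}{q}$, ends at $b\in Q$, and therefore contains a lost arc in $E(B\setminus Q,Q)$. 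With these two clarifications your argument goes through; it is essentially submodularity unpacked by hand, which buys you an elementary proof at the cost of more bookkeeping, whereas the paper's one-line appeal to submodularity is shorter and symmetrises more cleanly to the $j>m$ case.
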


The intuition behind Lemma~\ref{lem:save-milestone} is as follows.
Consider $\sigma^*$ as rearranged $\sigma$. The second condition says that the rearrangement is local: it affects
only vertices at positions in the range \redd{$[m-3c+1,m+3c]$}. The third condition says that the rearrangement does not increase the width. Finally, the first condition is crucial: $\sigma^*$ uses
the prefix $\fstsmt{\pi}{m}$ of $\pi$ as one of its prefixes. Thus, the intuition is that any ordering can be locally rearranged while preserving the width so that prefix $\fstsmt{\pi}{m}$ is used,
provided there is a large arc-disjoint flow locally near $m$. See Figure~\ref{fig:rearrangement}~below.

\usetikzlibrary{shapes}
\begin{figure}[htbp]
\centering
\begin{tikzpicture}[scale=1.0]
   \tikzstyle{vertex1c}=[fill,circle,minimum size=0.2cm,inner sep=0pt]
   \tikzstyle{vertex2c}=[draw,circle,minimum size=0.2cm,inner sep=0pt]
   \tikzstyle{vertex2d}=[draw,diamond,minimum size=0.22cm,inner sep=0pt]
   \tikzstyle{vertex1d}=[fill,diamond,minimum size=0.22cm,inner sep=0pt]
   
   \foreach \x/\y in {c/-,d/+} {
      \foreach \i [evaluate=\i as \j using \y1*(\i*0.4-0.15)] [evaluate=\i as \k using \y1*(\i*0.4+2.35)] in {1,2,...,6} {
   \node[vertex1\x] (p\x{}b\i) at (\k,1.5) {};
   \node[vertex2\x] (p\x{}w\i) at (\j,1.5) {};
   }}
   
   \foreach \i/\poscb/\posdb/\posscw/\possdw/\postcw/\postdw in {
   1/4.75/2.25/-4.35/-1.05/4.35/0.25,
   2/3.95/2.75/-2.75/-0.65/2.75/0.65,
   3/3.55/3.55/-1.85/0.25/1.85/1.05,
   4/3.15/3.95/-0.25/1.05/1.05/1.45,
   5/2.25/4.35/0.65/1.85/0.65/1.85,
   6/1.45/4.75/1.45/3.15/0.25/3.15
   } 
{
   \node[vertex1c] (scb\i) at (-\poscb,.3) {};
   \node[vertex1c] (tcb\i) at (-\poscb,-.9) {};
   \node[vertex1d] (sdb\i) at (\posdb,.3) {};
   \node[vertex1d] (tdb\i) at (\posdb,-.9) {};
   \node[vertex2c] (scw\i) at (\posscw,.3) {};
   \node[vertex2d] (sdw\i) at (\possdw,.3) {};
   \node[vertex2c] (tcw\i) at (-\postcw,-.9) {};
   \node[vertex2d] (tdw\i) at (\postdw,-.9) {};
}
  
\foreach \i in {1,2,...,6} {
 \foreach \x in {db,dw,cb,cw} {
   \draw[gray!60,thick] (s\x\i) -- (t\x\i);
  }
}

   \foreach \h/\name/\indice in {1/$\pi$/$m+3c$,0/$\sigma$/$m$,-1/$\sigma^*$/$m-3c$} {
   \draw[thin,->,arrows={-latex'}] (-6,\h*1.2) -- (6,\h*1.2);
   \node[anchor=west] at (6,\h*1.2-0.1) {\name};
   
   \node at (-5.4,\h*1.2+0.2) {\ldots};
   \node at (5.4,\h*1.2+0.2) {\ldots};
   \draw[loosely dashed] (\h*2.5,1.9) -- (\h*2.5,-1.5);
   \node at (\h*2.4,2.1) {\indice};
   }
   
   
   \draw[dashed] (0,1.9) -- (0,-1.5);
   
\end{tikzpicture}
\caption{Situation in the proof of Lemma~\ref{lem:save-milestone}. Vertex orderings $\pi$, $\sigma$, and $\sigma^*$ are depicted, where equal vertices in $\sigma$ and $\sigma^*$ are connected by a grey line.
Circle nodes belong to $\fstsmt{\pi}{m}$, and diamond nodes belong to $\lstsmt{\pi}{m}$. White nodes belong to \redd{$\lstsmt{\pi}{m-3c} \cap \fstsmt{\pi}{m+3c}$}. 
Claim~\ref{cl1} says that the black nodes cannot be swapped in $\sigma$ with a node of the complementary shape.\label{fig:rearrangement}}
\end{figure}
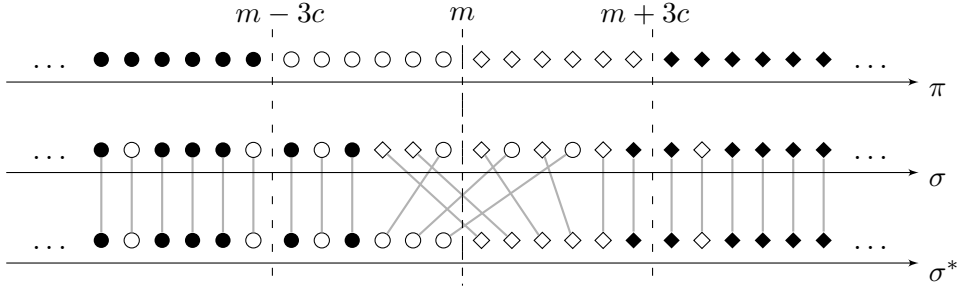

\begin{proof}[Proof of Lemma~\ref{lem:save-milestone}]
We first establish the following basic observation on the relation between orderings $\pi$ and $\sigma$. 

\begin{claim}\label{cl1}
In the ordering $\sigma$, every vertex of \redd{$\fstsmt{\pi}{m-3c}$} is placed before every vertex of $\lstsmt{\pi}{m}$,
and every vertex of $\fstsmt{\pi}{m}$ is placed before every vertex of \redd{$\lstsmt{\pi}{m+3c}$}.
\end{claim}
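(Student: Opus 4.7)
The plan is to reduce both halves of the claim to direct applications of Lemma~\ref{lem:dispersed}, which is exactly tailored to this kind of situation: it guarantees that any two vertices more than $3c$ apart in an ordering of width at most $c$ must appear in the same relative order in every other ordering of width at most $c$. Here the hypotheses of that lemma are satisfied since $\cw(D,\pi)=c$ and $\cw(D,\sigma)\leq c$.

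For the first assertion, I would take arbitrary $u\in\fstsmt{\pi}{m-3c}$ and $v\in\lstsmt{\pi}{m}$. By definition of prefixes and suffixes this yields $\pi(u)\leq m-3c$ and $\pi(v)\geq m+1$, hence $\pi(v)-\pi(u)\geq 3c+1$, i.e. $\pi(u)<\pi(v)-3c$. Lemma~\ref{lem:dispersed} then immediately places $u$ before $v$ in $\sigma$. For the second assertion I would proceed identically with $u\in\fstsmt{\pi}{m}$ and $v\in\lstsmt{\pi}{m+3c}$: this time $\pi(u)\leq m$ and $\pi(v)\geq m+3c+1$, so again $\pi(u)<\pi(v)-3c$, and Lemma~\ref{lem:dispersed} supplies the required relative order in $\sigma$.

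I do not expect any real obstacle here, since the substantive work, namely constructing the arc-disjoint family of paths that forces the relative order of far-apart vertices, has already been carried out in the proof of Lemma~\ref{lem:dispersed}. The claim is essentially a convenient restatement of that lemma in the particular form needed for the rearrangement argument in the proof of Lemma~\ref{lem:save-milestone}, where the prefix $\fstsmt{\pi}{m-3c}$ and the suffix $\lstsmt{\pi}{m}$ (respectively $\fstsmt{\pi}{m}$ and $\lstsmt{\pi}{m+3c}$) play the role of the ``frozen'' parts of $\sigma$ on which no vertex swap is permitted.
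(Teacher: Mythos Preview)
Your proposal is correct and is essentially identical to the paper's proof: the paper also takes arbitrary $u\in\fstsmt{\pi}{m-3c}$ and $v\in\lstsmt{\pi}{m}$, observes that $\pi(u)<\pi(v)-3c$, and invokes Lemma~\ref{lem:dispersed} directly, noting that the second statement follows by the same argument. You have simply spelled out the inequalities a bit more explicitly.
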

\begin{proof}
Consider any \redd{$u\in \fstsmt{\pi}{m-3c}$} and $v\in\lstsmt{\pi}{m}$. 
Then, $\pi(u) < \pi(v)-3c$, so the first statement follows immediately from Lemma~\ref{lem:dispersed}.
The proof of the second statement is the same.
\cqed
\end{proof}

Let $\sigma_{\leq}$ and $\sigma_>$ denote the restriction of $\sigma$ to $\fstsmt{\pi}{m}$ and $\lstsmt{\pi}{m}$, respectively.
Then, define $\sigma^*$ to be the concatenation of $\sigma_{\leq}$ and $\sigma_>$; see Figure~\ref{fig:rearrangement} for reference.
By the construction we have $\fstsmt{\pi}{m} = \fstsmt{\sigma^*}{m}$, so the first condition is satisfied.
For the second condition, observe that by Claim~\ref{cl1}, every vertex of \redd{$\fstsmt{\pi}{m-3c}$} is before every vertex of $\lstsmt{\pi}{m}$ in $\sigma$.
It follows that in $\sigma$, the first vertex of $\lstsmt{\pi}{m}$ appears only after a prefix of at least \redd{$m-3c$} vertices of $\fstsmt{\pi}{m}$. 
In the construction of $\sigma^*$ from $\sigma$, 
the vertices of that prefix stay at their original positions, so $\onesmt{\sigma^*}{j}=\onesmt{\sigma}{j}$ for all \redd{$j\leq m-3c$}.
A symmetric argument shows that $\onesmt{\sigma^*}{j}=\onesmt{\sigma}{j}$ also for all \redd{$j>m+3c$}.

It remains to prove that $\cw(D,\sigma^*)\leq \cw(D,\sigma)$.
Consider any $j\in [0,|V|]$; we need to prove that $|E_{\sigma^*}^j|\leq \cw(D,\sigma)$.
By the second condition we have that $E_{\sigma^*}^j=E_{\sigma}^j$ when  \redd{$j\leq m-3c$} or \redd{$j\geq m+3c$}, and $|E_{\sigma}^j|\leq \cw(D,\sigma)$ by definition.
Hence, we are left with checking the inequality for $j$ satisfying \redd{$m-3c<j< m+3c$}.

In the following, for a vertex subset $A$ we denote $\delta(A)=|E(V\setminus A,A)|$. We will use the submodularity of directed cuts:
$\delta(A\cap B)+\delta(A\cup B)\leq \delta(A)+\delta(B)$
for all vertex subsets $A,B$. In these terms, we need to prove that $\delta(\fstsmt{\sigma^*}{j})\leq \cw(D,\sigma)$.

Let $x$ be the vertex at position $j$ in $\sigma^*$ and let $X$ be the set containing $x$ and all vertices placed before $x$ in $\sigma$.
Suppose first that $j\leq m$. Then, by the construction we have $x\in \fstsmt{\pi}{m}$ and $\fstsmt{\sigma^*}{j}=X\cap \fstsmt{\pi}{m}$.
By the submodularity of cuts we have
\begin{equation}\label{eq:sub}
\blue{\delta(\fstsmt{\sigma^*}{j})=\delta(X\cap \fstsmt{\pi}{m})\leq \delta(X)+\delta(\fstsmt{\pi}{m})-\delta(X\cup \fstsmt{\pi}{m}).}
\end{equation}
As $X$ is a prefix of $\sigma$ by definition, we have $\delta(X)\leq \cw(D,\sigma)$.
\blue{Hence, by~\eqref{eq:sub}, in order to prove that  $\delta(\fstsmt{\sigma^*}{j})\leq \cw(D,\sigma)$, it suffices to prove that $\delta(X\cup \fstsmt{\pi}{m})\geq \delta(\fstsmt{\pi}{m})$.}

Denote $d=\delta(\fstsmt{\pi}{m})=|E_\pi^m|$ and recall that there is a family of $d$ arc-disjoint paths leading from \redd{$\lstsmt{\pi}{m+3c}$} to \redd{$\fstsmt{\pi}{m-3c}$}.
In particular, this means that for each set $A$ with \redd{$A\supseteq \fstsmt{\pi}{m-3c}$ and $A\cap \lstsmt{\pi}{m+3c}=\emptyset$}, each of these paths has to contribute to $\delta(A)$,
implying $\delta(A)\geq d$.

Therefore, it suffices to show that \redd{$X\cup \fstsmt{\pi}{m}\supseteq \fstsmt{\pi}{m-3c}$ and $(X\cup \fstsmt{\pi}{m})\cap \lstsmt{\pi}{m+3c}=\emptyset$}.
While the first assertion is trivial, the second is equivalent to \redd{$X\cap \lstsmt{\pi}{m+3c}=\emptyset$}.
For this, observe that by definition \blue{no vertex of $X$ is placed  after $x$ in $\sigma$}, 
and $x$ belongs to $\fstsmt{\pi}{m}$.
\blue{Moreover,}
by Claim~\ref{cl1} all vertices of \redd{$\lstsmt{\pi}{m+3c}$} are placed in $\sigma$ after all vertices of $\fstsmt{\pi}{m}$, in particular after $x$.
This implies that $X$ and \redd{$\lstsmt{\pi}{m+3c}$} are disjoint. By the discussion above, this proves that $\delta(X\cup \fstsmt{\pi}{m})\geq d$ and, consequently also that $\delta(\fstsmt{\sigma^*}{j})\leq \cw(D,\sigma)$.

The proof for the case $j>m$ is completely symmetric, however we need to observe that now $x\in \lstsmt{\pi}{m}$ and $\fstsmt{\sigma^*}{j}=X\cup \fstsmt{\pi}{m}$.
By applying the same submodularity argument~\eqref{eq:sub}, we are left with proving that $\delta(X\cap \fstsmt{\pi}{m})\geq \delta(\fstsmt{\pi}{m})$,
which follows by a symmetric reasoning.
\end{proof}

Our goal now is to construct an approximate ordering $\pi$ where we will be able to find many positions $m$ to which Lemma~\ref{lem:save-milestone} can be applied.
We first recall the concept of a lean ordering, which will be our main tool for finding families of arc-disjoint paths.

\begin{definition}
A vertex ordering $\pi$ of a digraph $D = (V,E)$ is called \emph{lean} if for each $0 \leq a \leq b \leq n$,
the maximum size of a family of arc-disjoint paths from $\lstsmt{\pi}{b}$ to $\fstsmt{\pi}{a}$ in $D$ is equal to $\min_{a\leq i\leq b} |E_\pi^i|$.
\end{definition}

Note that by Menger's theorem, the maximum size of a family of arc-disjoint paths from $\lstsmt{\pi}{b}$ to $\fstsmt{\pi}{a}$ is equal to the minimum size of 
\blue{a cut} 
separating \blue{$\lstsmt{\pi}{b}$ from $\fstsmt{\pi}{a}$. }
Thus, in a lean ordering we have that the minimum cutsize between any disjoint prefix and suffix is actually realized by one of 
the cuts along the ordering.

The notion of a lean ordering is the cutwidth analogue of a {\em{lean decomposition}} in the treewidth setting, cf.~\cite{Tho90}.
An essentially equivalent notion of {\em{linked orderings}} was used by Chudnovsky and Seymour~\cite{CS11} in the context of immersions in tournaments.
Also, Giannopoulou et al.~\cite{GPR16} used this concept to study immersion obstructions for the cutwidth of undirected graphs.
A careful analysis of the arguments of~\cite{CS11,GPR16} yields the following.

\begin{lemma}[\cite{CS11,GPR16}]\label{lem:lean}
There is a polynomial-time algorithm that given a vertex ordering $\pi$ of a digraph $D$, 
computes a lean vertex ordering $\pi^*$ of $D$ satisfying $\cw(D,\pi^*) \leq \cw(D,\pi)$.
\end{lemma}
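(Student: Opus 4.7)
The plan is to maintain an ordering $\pi$ and iteratively apply a local rearrangement that strictly decreases a polynomially-bounded potential while never increasing the cutwidth. Concretely, I would take $\Phi(\pi) := \sum_{i=0}^{|V|} |E_\pi^i|$, namely the $\ola$-cost of $\pi$, as the potential: it is a nonnegative integer bounded by $|V| \cdot |E(D)|$, so a strict decrease at each iteration ensures termination in polynomially many steps. When the process halts, $\pi$ is lean.

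For the detection step, for each pair $(a, b)$ with $0 \leq a \leq b \leq |V|$ I would compute via a standard max-flow routine the minimum value $\delta(A) := |E(V \setminus A, A)|$ over sets $A$ satisfying $\fstsmt{\pi}{a} \subseteq A \subseteq V \setminus \lstsmt{\pi}{b}$. By Menger's theorem this minimum equals the maximum number of arc-disjoint paths from $\lstsmt{\pi}{b}$ to $\fstsmt{\pi}{a}$; hence $\pi$ is lean iff for every pair $(a, b)$ this minimum coincides with $\min_{a \leq i \leq b} |E_\pi^i|$. Otherwise I pick a violating pair together with a minimizing witness $A$, and build $\pi'$ as the concatenation of the restrictions of $\pi$ to $\fstsmt{\pi}{a}$, to $M_1 := A \setminus \fstsmt{\pi}{a}$, to $M_2 := (\lstsmt{\pi}{a} \cap \fstsmt{\pi}{b}) \setminus A$, and to $\lstsmt{\pi}{b}$, each kept in its original relative order.

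Clearly $\pi$ and $\pi'$ agree at positions outside $[a, b]$. A direct bookkeeping shows that for $i \in [a, |A|]$ one has $\fstsmt{\pi'}{i} = \fstsmt{\pi}{j(i)} \cap A$, whereas for $i \in [|A|, b]$ one has $\fstsmt{\pi'}{i} = \fstsmt{\pi}{j(i)} \cup A$, for suitable $j(i) \in [a, b]$, and the induced map $i \mapsto j(i)$ is a bijection between positions $[a+1, b]$ of $\pi'$ and of $\pi$. Both $\fstsmt{\pi}{j} \cap A$ and $\fstsmt{\pi}{j} \cup A$ satisfy the containment constraints defining valid witnesses, so the $\delta$-minimality of $A$ forces $\delta(\fstsmt{\pi}{j} \cap A), \delta(\fstsmt{\pi}{j} \cup A) \geq \delta(A)$, which combined with the submodularity of the directed in-cut function
\[
\delta(\fstsmt{\pi}{j} \cap A) + \delta(\fstsmt{\pi}{j} \cup A) \leq \delta(\fstsmt{\pi}{j}) + \delta(A)
\]
yields $\delta(\fstsmt{\pi}{j} \cap A), \delta(\fstsmt{\pi}{j} \cup A) \leq |E_\pi^j|$. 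Hence every cut of $\pi'$ at position $i \in [a+1, b]$ is bounded by $|E_\pi^{j(i)}|$, which gives $\cw(D, \pi') \leq \cw(D, \pi)$. Crucially, at $i = |A|$ the prefix is exactly $A$, so $|E_{\pi'}^{|A|}| = \delta(A) < |E_\pi^{j(|A|)}|$, yielding a strict decrease; summing over all positions we obtain $\Phi(\pi') < \Phi(\pi)$.

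Putting everything together, each iteration runs in polynomial time, dominated by $O(|V|^2)$ max-flow computations, and the number of iterations is at most $|V| \cdot |E(D)|$, so the overall algorithm is polynomial. I anticipate that the main obstacle will be the careful bookkeeping identifying each prefix of $\pi'$ inside $[a, b]$ with an intersection or union of a prefix of $\pi$ with $A$, and verifying that the induced correspondence $i \mapsto j(i)$ is a bijection on $[a+1, b]$. Once this structural description is pinned down, both the submodularity argument and the strict-decrease computation become short and direct.
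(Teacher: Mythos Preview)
Your proposal is correct and follows essentially the same approach as the paper: use the \ola-cost as a strictly decreasing potential, locate a violating pair $(a,b)$ via max-flow, take a separating set $A$ and rebuild the ordering as the concatenation $\pi|_A\cdot\pi|_B$ (your four-piece description is identical to this since $\fstsmt{\pi}{a}\subseteq A$ and $\lstsmt{\pi}{b}\subseteq B$), then invoke submodularity to bound each new cut by the corresponding old one with strict inequality at position $|A|$. The paper's write-up is terser---it defers the submodularity computation to \cite{CS11,GPR16} and does not explicitly insist that $A$ be a \emph{minimum} cut---whereas you spell out the bijection $i\mapsto j(i)$ and use the minimality of $A$ to make the per-position inequality go through; both are sound renderings of the same argument.
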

\begin{proof}
To prove the statement we may adapt the proof of Lemma 3.1 from~\cite{CS11}, or the proof of Lemma 13 from~\cite{GPR16}.

Suppose there exist two integers $a$ and $b$ certifying that $\pi$ is not a lean ordering.
That is, if we denote $d=\min_{i\in [a,b]} |E_\pi^i|$, then there is no family of $d$ arc-disjoint paths leading from $\lstsmt{\pi}{b}$ to $\fstsmt{\pi}{a}$.
By Menger's Theorem there exists a partition $(A,B)$ of $V(D)$ such that $\fstsmt{\pi}{a}\subseteq A$, $\lstsmt{\pi}{b}\subseteq B$, and $|E(B,A)|<d$. 
Such a partition can be identified in polynomial time using a min-cut/max-flow algorithm. 

Consider the vertex ordering $\pi'$ obtained by concatenating the restriction $\pi_A$ of $\pi$ to $A$ with the restriction of \blue{$\pi_B$ of $\pi$ to $B$; that is, $\pi'=\pi_A\cdot \pi_B$.} 
The following now holds:

\begin{claim}\label{cl:subm-cite}
For every $i'\in [0,|V|]$, if the vertex $\onesmt{\pi'}{i'}$ is at position $i$ in $\pi$, then we have that $|E_{\pi'}^{i'}|\leq |E_{\pi}^i|$, and moreover this
inequality is strict for $i'=|A|$.
\end{claim}

The proof of Claim~\ref{cl:subm-cite} follows easily from submodularity of cuts, see the proof of 3.1 in~\cite{CS11} and the proof of Lemma~13 in~\cite{GPR16}; this latter paper considers undirected setting,
but the submodularity argument holds in the directed setting in the same way.
From Claim~\ref{cl:subm-cite} it follows that $\cw(D,\pi')\leq \cw(D,\pi)$ and $\ola(D,\pi')< \ola(D,\pi)$ (note that Lemma~13 in~\cite{GPR16} uses exactly this potential as the minimization goal).

Therefore, the algorithm starts with the input ordering $\pi$ and as long as the leanness condition is not satisfied for some $a$ and $b$, it computes the refined ordering $\pi'$ as above
and sets $\pi:=\pi'$. Note that the width of the ordering cannot increase in this process, while the \ola{}-cost strictly decreases at each iteration. Since the \ola{}-cost is at most $|V|^3$ at the beginning,
we infer that the algorithm outputs some lean ordering after at most a cubic number of iterations.
\end{proof}

Next, we introduce the concept of a \emph{milestone}. Intuitively, a milestone is a position where Lemma~\ref{lem:save-milestone} can be applied, provided the ordering is lean.

\begin{definition}
Let $\pi$ be a vertex ordering of a digraph $D = (V,E)$, and let $\alpha$ be a positive integer.
An integer $m\in [0,|V|]$ is a \emph{$\pi$-milestone of $D$ of span $\alpha$} if $|E_\pi^m| \leq |E_\pi^i|$ for each integer $i$ with $m-\alpha \leq i\leq m+\alpha$.
\end{definition}

Note that if $\pi$ is lean and $m$ is a $\pi$-milestone of span $\alpha$, then $\min_{m-\alpha\leq i\leq m+\alpha} |E_\pi^i|=|E_\pi^m|$, hence there is a family of $|E^\pi_m|$ arc-disjoint paths leading from
$\lstsmt{\pi}{m+\alpha}$ to $\fstsmt{\pi}{m-\alpha}$. Thus, a $\pi$-milestone of span \redd{$3c$} satisfies the prerequisite of Lemma~\ref{lem:save-milestone} about the existence of arc-disjoint paths. 
We now observe that, in an ordering of small width, every large enough set of consecutive positions contains a milestone.

\begin{lemma} \label{lem:milestone}
Let $D=(V,E)$ be a digraph and let $\pi$ be a vertex ordering of $D$ of width at most $c$. 
Then for any integers $p\in[0,|V|]$ and $\alpha\geq 0$, there exists a $\pi$-milestone $m\in [p-\alpha\cdot c,p+\alpha\cdot c]$ of span $\alpha$.
\end{lemma}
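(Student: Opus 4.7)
The plan is to construct a descent sequence starting at $p$ and argue that it reaches a milestone in at most $c$ steps, each of length at most $\alpha$. Formally, I set $m_0 = p$ and iteratively define $m_0, m_1, \ldots$ as follows: while $m_j$ is not a $\pi$-milestone of span $\alpha$, there exists by definition some integer $i \in [m_j - \alpha, m_j + \alpha] \cap [0,|V|]$ with $|E_\pi^i| < |E_\pi^{m_j}|$; pick any such $i$ and set $m_{j+1} := i$. The process terminates at the first index $t$ for which $m_t$ satisfies the milestone condition.

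The crucial observation is that this descent halts after at most $c$ iterations. Indeed, the values $|E_\pi^{m_j}|$ form a strictly decreasing sequence of non-negative integers, and $|E_\pi^{m_0}| \leq \cw(D,\pi) \leq c$. Hence $t \leq c$. Note that the process is well defined at the boundaries: if ever $m_j$ reaches $0$ or $|V|$, then $|E_\pi^{m_j}| = 0$, which trivially satisfies the milestone inequality against every other cut, so the descent stops there.

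Finally, each descent step shifts the position by at most $\alpha$, so the terminal milestone $m := m_t$ satisfies
\[
|m - p| \;\leq\; \sum_{j=0}^{t-1} |m_{j+1} - m_j| \;\leq\; t \cdot \alpha \;\leq\; c \cdot \alpha,
\]
which places $m$ inside $[p - \alpha c,\, p + \alpha c]$, as required. There is no real obstacle here: the argument is purely a monovariant descent, with the width bound serving as the potential that caps the number of steps. The only minor care needed is to keep the intermediate positions inside $[0,|V|]$, which is handled by intersecting the search window with $[0,|V|]$ and by observing that the boundary positions are themselves milestones.
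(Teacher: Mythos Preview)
Your proof is correct and follows essentially the same approach as the paper: an iterative descent starting at $p$, with the strictly decreasing cutsizes (bounded above by $c$) limiting the number of steps, and each step moving by at most $\alpha$. Your explicit handling of the boundary positions $0$ and $|V|$ is a minor added detail, but the core argument is identical to the paper's.
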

\begin{proof}
We look for a milestone by means of an iterative procedure.
Initially set $m:=p$. 
While $m$ is not a $\pi$-milestone of span $\alpha$, there exists an integer $j\in [m-\alpha,m+\alpha]$ such that $|E_\pi^j| < |E_\pi^m|$.
Then continue the iteration setting $m:=j$.
Observe that with each iteration, the cutsize $|E_\pi^m|$ strictly decreases.
Since $\cw(D,\pi)\leq c$, it follows that the number of the number of iterations before finding a $\pi$-milestone of span $\alpha$ is at most $c$.
Each iteration replaces $m$ with a number differing from it by at most $\alpha$, hence the final $\pi$-milestone $m$ satisfies $|p-m|\leq \alpha\cdot c$, as required.
\end{proof}

With all the tools gathered, we can finish the proof of Proposition~\ref{prop:turing-kernel}.

\begin{proof}[Proof of Proposition~\ref{prop:turing-kernel}]
By Theorem~\ref{thm:apx}, we can compute in polynomial time a vertex ordering $\pi_0$ of $D$ such that \blue{$\cw(D,\pi_0) \leq 2 \cdot\cw(D)$}. 
If $\cw(D,\pi_0) > 2c$, we can conclude that $\cw(D)>c$ and report this answer, so let us assume that $\cw(D,\pi_0) \leq 2c$. 
By applying the algorithm of Lemma~\ref{lem:lean} to $\pi_0$, we can compute in polynomial time a lean ordering $\pi$ such that $\cw(D,\pi) \leq \cw(D,\pi_0)\leq 2c$.
In the following we assume w.l.o.g. that \redd{$|V|>12c$}, for otherwise we can output a list consisting only of $D$.

Call a set of $\pi$-milestones {\em{dispersed}} if these $\pi$-milestone pairwise differ by more than \redd{$12c$}.
Observe that $0$ and $|V|$ are always $\pi$-milestones, and they differ by more than \redd{$12c$}.
Starting from the set $\{0,|V|\}$, we compute an inclusion-wise maximal dispersed set $0=m_0 < m_1 < m_2 < \ldots < m_\ell =|V|$ of $\pi$-milestones of span \redd{$6c$}.
More precisely, whenever some $\pi$-milestone of span \redd{$6c$} can be added to the set without spoiling the dispersity requirement, we do it, until no further such milestone can be added.
Observe that then we have that \redd{$m_{i+1}-m_i \leq 24c^2+24c+1$} for each $i\in [1,\ell-1]$, for otherwise the range \redd{$[m_i+12c+1,m_{i+1}-12c-1]$} would contain more than \redd{$24c^2$} vertices, 
so by Lemma~\ref{lem:milestone} we would be able to find in it a $\pi$-milestone of span \redd{$6c$} that could be added to the constructed dispersed set.

Thus, $\pi$ is partitioned into $\ell$ blocks $B_1,\ldots,B_{\ell}$, each of length at most \redd{$24c^2+24c+1$}, 
such that the $j$-th block $B_j$ is equal to $\{\onesmt{\pi}{m_{j-1}+1},\onesmt{\pi}{m_{j-1}+2},\ldots, \onesmt{\pi}{m_{j}}\}$.
For each $j\in [1,\ell]$, let $A_j$ be defined as $B_j$ augmented with the following vertices:
\begin{itemize}[nosep]
\item vertices at positions in ranges \redd{$[\max(1,m_{j-1}-6c+1),m_{j-1}]$ and $[m_{j}+1,\min(|V|,m_{j}+6c)]$},
\item all heads of arcs from \redd{$E^{m_{j-1}-6c}_\pi$}, and all tails of arcs from \redd{$E^{m_j+6c}_\pi$}.
\end{itemize}
Since the width of $\pi$ is at most $2c$, we have that \redd{$|A_j|\leq |B_j|+16c\leq 24c^2+40c+1$}.

For $j\in [1,\ell]$, let us denote $D_j=D[A_j]$. 
To prove the theorem, it now suffices to show that $\cw(D)\leq c$ if and only if $\cw(D_j)\leq c$ for each $j\in [1,\ell]$.
The forward direction is trivial, since cutwidth is closed under taking induced subdigraphs.
Hence, we are left with showing that if $\cw(D_j)\leq c$ for each $j\in [1,\ell]$, then $\cw(D)\leq c$.

Take any $j\in [1,\ell-1]$. As $m_j$ is a $\pi$-milestone of span \redd{$6c$}, we have \redd{$\underset{i \in [m_j-6c, m_j+6c]}\min |E_\pi^i|=|E_\pi^{m_j}|$.}
Since $\pi$ is lean, there is a family $\mathcal{F}_j$ of $|E_\pi^{m_j}|$ arc-disjoint paths in $D$ leading from \redd{$\lstsmt{\pi}{m_j+6c}$ to $\fstsmt{\pi}{m_j-6c}$}.
We can assume w.l.o.g. that each internal (non-endpoint) vertex of each of these paths has position between \redd{$m_j+6c$ and $m_j-6c+1$} in $\pi$.
Hence, in particular, each path of $\mathcal{F}_j$ starts with an arc of \redd{$E_\pi^{m_{j}+6c}$} and ends with an arc of \redd{$E_\pi^{m_{j}-6c}$}.
This implies that for each $j\in [1,\ell]$, all the paths of $\mathcal{F}_j$ are entirely contained both in $D_{j}$ and in $D_{j+1}$.

\begin{figure}[h]
\centering
\begin{tikzpicture}
   \draw[line width=1pt,->,>=latex] (-6,0) -- (6,0);
   \node[anchor=west] at (6,-0.05) {$\pi$};

   \draw[loosely dashed] (-5.5,0.7) -- (-5.5,-1);
   \draw[loosely dashed] (-1,0.7) -- (-1,-1);
   \draw[loosely dashed] (2,0.7) -- (2,-1);
   \draw[loosely dashed] (5.5,0.7) -- (5.5,-1);
   \draw (-4,0.7) -- (-4,-1);
   \draw (0.5,0.7) -- (0.5,-1);
   \draw (4,0.7) -- (4,-1);

   \draw[<->] (-5.4,-.8) -- (-4.1,-.8) node[below,midway] {$6c$};
   \draw[<->] (-0.9,-.8) -- (0.4,-.8) node[below,midway] {$6c$};
   \draw[<->] (0.6,-.8) -- (1.9,-.8) node[below,midway] {$6c$};
   \draw[<->] (4.1,-.8) -- (5.4,-.8) node[below,midway] {$6c$};

   \node at (-3.5,0.6) {\small $m_{j-1}$};
   \node at (0.8,0.6) {\small $m_j$};
   \node at (4.5,0.6) {\small $m_{j+1}$};

   \draw [decorate,decoration={brace,amplitude=6pt,raise=4pt},yshift=0pt]
(-4,0.8) -- (0.5,0.8) node [black,midway,yshift=0.55cm] {$B_j$};
   \draw [decorate,decoration={brace,amplitude=6pt,raise=4pt},yshift=0pt]
(2,-1.1) -- (-5.5,-1.1) node [black,midway,yshift=-.65cm] {$A_j \setminus \{u,v\}$};
\draw [decorate,gray,decoration={brace,amplitude=6pt,raise=4pt},yshift=0pt]
(5.5,-1.25) -- (-1,-1.25) node [gray,midway,yshift=-.65cm] {$A_{j+1} \setminus \{x,y\}$};
   \node[fill] (1) at (-3,0) {};
   \node[fill] (2) at (-1.6,0) {};
   \node[fill] (3) at (-.8,0) {};
   \node[fill] (4) at (0.2,0) {};
   \node[fill] (5) at (1.6,0) {};
   \node[fill] (6) at (2.4,0) {};
   \node[fill] (7) at (4.8,0) {};
   \node at (2.4,-0.3) {$u$};
   \node at (4.8,-0.3) {$v$};
   \node at (-3,0.3) {$x$};
   \node at (-1.6,0.3) {$y$};
   \draw[thick,->,arrows={-latex'}] (7) to [bend right = 20] (5);
   \draw[thick,->,arrows={-latex'}] (5) to [bend right = 20] (2);
   \draw[thick,->,arrows={-latex'}] (6) to [bend left = 20] (3);
   \draw[thick,->,arrows={-latex'}] (3) to [bend left = 25] (4);
   \draw[thick,->,arrows={-latex'}] (4) to [bend left = 20] (1);
\end{tikzpicture}
\caption{\blue{The thick arcs represent the paths of $\mathcal{F}_j$ that are contained both in $D_{j}$ and in $D_{j+1}$.}}
\end{figure}
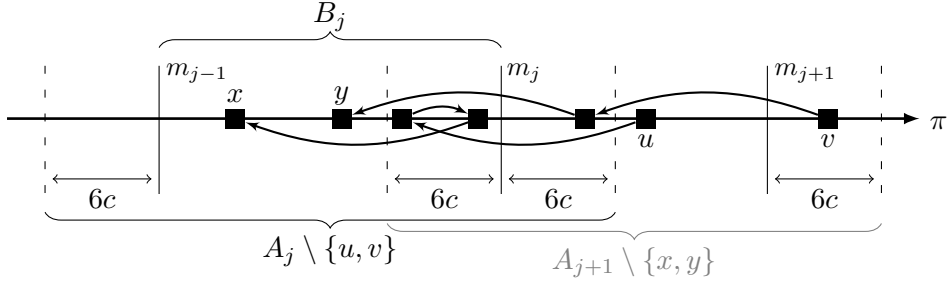

Consider any $j\in [1,\ell]$, and for simplicity assume for now that $j\neq 1$ and $j\neq \ell$.
Let $\pi'$ be the restriction of $\pi$ to the vertex set of $D_j$; obviously the width of $\pi'$ is at most $2c$.
Further, let $m'$ be the position of $\onesmt{\pi}{m_{j-1}}$ in $\pi'$, so that $\fstsmt{\pi'}{m'}=\fstsmt{\pi}{m_{j-1}}\cap V(D_j)$.
Observe that since all vertices at positions between \redd{$m_{j-1}-6c+1$} and \redd{$m_{j-1}+6c$} in $\pi$ are included in the vertex set of $D_j$,
they are at positions between \redd{$m'-6c+1$ and $m'+6c$} in $\pi'$, and hence the paths of $\mathcal{F}_{j-1}$ in $D_j$ lead from 
\redd{$\lstsmt{\pi'}{m'+6c}$ to $\fstsmt{\pi'}{m'-6c}$}. Their number is $|E^{m_{j-1}}_\pi|$, which is equal to the cutsize at position $m'$ in $\pi'$, by the construction of $D_j$ and $\pi'$.

We conclude that Lemma~\ref{lem:save-milestone} can be applied to position $m'$ in the ordering $\pi'$ of $D_j$.
If we now use it on any $\cw$-optimal vertex ordering $\sigma$ of $D_j$, we obtain a $\cw$-optimal vertex ordering $\sigma^*$ of $D_j$ such that
$\fstsmt{\sigma^*}{m'}=\blue{\fstsmt{\pi'}{m'}}=\fstsmt{\pi}{m_{j-1}}\cap V(D_j)$.
Note that by Lemma~\ref{lem:save-milestone}, $\sigma^*$ differs from $\sigma$ by a rearrangement of vertices at positions between \redd{$m'-6c+1$} and \redd{$m'+6c$}.

Now we define $m''$ to be the position of $\onesmt{\pi}{m_j}$ in $\pi'$, so that $\fstsmt{\pi'}{m''}=\fstsmt{\pi}{m_j}\cap V(D_j)$.
A symmetric reasoning, which uses the fact that $\mathcal{F}_j$ is also entirely contained in $D_j$, shows that Lemma~\ref{lem:save-milestone} can be also applied to position $m''$ in the ordering $\pi'$ of $D_j$.
Then we can use this lemma on the $\cw$-optimal vertex ordering $\sigma^*$, yielding a $\cw$-optimal ordering $\sigma^{**}$ such that 
$\fstsmt{\sigma^{**}}{m''}=\fstsmt{\pi'}{m''}=\fstsmt{\pi}{m_j}\cap V(D_j)$.
Again, by Lemma~\ref{lem:save-milestone} we have that $\sigma^*$ and $\sigma^{**}$ differ by a rearrangement of vertices at positions \redd{$m''-6c+1$ and $m''+6c$}.
Since \redd{$m_j-m_{j-1}>12c$} by construction, we infer that this rearrangement does not change the prefix of length $m'$, and hence we still have 
$\fstsmt{\sigma^{**}}{m'}=\fstsmt{\pi'}{m'}=\fstsmt{\pi}{m_{j-1}}\cap V(D_j)$.
The ordering $\sigma^{**}$ obtained in this manner shall be called $\sigma^j$.
For $j=1$ and $j=\ell$ we obtain $\sigma^j$ in exactly the same way, except we apply Lemma~\ref{lem:save-milestone} only once, for the position not placed at the end of the sequence.

All in all, for each $j\in [1,\ell]$ we have obtained a $\cw$-optimal ordering $\sigma^j$ of $D_j$ such that the vertices of $B_j$ form an infix (a sequence of consecutive elements) of $\sigma^j$,
while vertices to the left of this infix are the vertices of $V(D_j)\cap \fstsmt{\pi}{m_{j-1}}$ and vertices to the right of this infix are the vertices of $V(D_j)\cap \lstsmt{\pi}{m_j}$.
Define an ordering
$\tilde\sigma$ of $D$ by first restricting every ordering $\sigma^j$ to $B_j$, and then concatenating all the obtained orderings for $j=1,2,\ldots,\ell$.
Since we assumed that $\cw(D_j)\leq c$ for each $j\in [1,\ell]$, and each ordering $\sigma^j$ is $\cw$-optimal on $D_j$, we have that $\cw(D_j,\sigma^j)\leq c$ for each $j\in [1,\ell]$.
From the construction of $D_j$, and in particular the fact that all the arcs of $E_\pi^{m_{j-1}}$ and $E_\pi^{m_j}$ are contained in $D_j$, it follows that the infix of cutvector $\cutvector{D^j}{\sigma^j}$
corresponding to the vertices of $B_j$ is equal to the infix of the cutvector 
$\cutvector{D}{\blue{\tilde\sigma}}$ corresponding to the vertices of $B_j$. This shows that 
\begin{align*}
\cw(D,\blue{\tilde\sigma})=\max_{i\in [0,|V|]}\ \cutvector{D}{\blue{\tilde\sigma}}(i) \leq \underset{\substack{j \in [0,\ell]\\i\in [0,|V(D_j)|]}}{\max}\ \cutvector{D^j}{\sigma^j}(i) =\max_{j \in [0,\ell]}\ \cw(D_j,\sigma^j)\leq c,
\end{align*}
hence we are done.
\end{proof}


\section{Cutwidth-minimal semi-complete digraphs}
\label{sec:obstructions}

Recall here that a digraph $D$ is called \emph{$c$-cutwidth-minimal} if the cutwidth of $D$ is at least $c$,  but the cutwidth of every proper induced subdigraph of $D$ is smaller than $c$.
In this section we provide upper bounds on the sizes of $c$-cutwidth-minimal tournaments and semi-complete digraphs.
It turns out that the number of vertices in any $c$-cutwidth-minimal semi-complete digraph is bounded quadratically in $c$ (see Theorem~\ref{thm:obstructions-semi} below), while for
$c$-cutwidth-minimal tournaments we can even give an almost tight upper bound that is linear in $c$ (see Theorem~\ref{thm:obstructions-tour} below).
Essentially, the first result follows easily by considering applying the algorithm of Theorem~\ref{thm:turing-kernel} on a $c$-cutwidth-minimal semi-complete digraph for parameter $c-1$.
For the second result we use the understanding of minimum orderings in tournaments in the spirit of Lemma~\ref{lem:minimum-sorted}. 
Finally, we also discuss direct algorithmic applications of both these theorems.

\subsection{Upper bound for $c$-cutwidth-minimal tournaments}\label{sec:obst-tour}

\blue{We first provide a linear bound on the sizes of $c$-cutwidth-minimal tournaments.}
Our main tool will be the notion of a \emph{degree tangle}, introduced in~\cite{Pilipczuk13} as a certificate for large cutwidth.

\begin{definition}
For a digraph $D$ and nonnegative integers $k$ and $\alpha$, a {\em{$(k,\alpha)$-degree tangle}} is a vertex set $W \subseteq V(D)$  such that $|W| \geq k$, and for each $u,v \in W$ we have $|d^-(u) - d^-(v)| \leq \alpha$.
\end{definition}

In~\cite{Pilipczuk13,Pil13} it is essentially shown that if a semi-complete digraph $D$ admits a $(k,\alpha)$-degree tangle, then the cutwidth of $D$ is at least linear in $k-\alpha$.
In (fractional) tournaments we can establish a quadratic bound, as shown next.
\blue{Relying on the characterization of minimum orderings (Lemma~\ref{lem:minimum-sorted}), Lemma~\ref{lem:pdt} below provides a 
slightly finer understanding of the relation between degree tangles and cutwidth than~\cite{Pilipczuk13,Pil13}.}

\begin{lemma}\label{lem:pdt}
Let $k$ and $\alpha$ be nonnegative integers. If a fractional tournament $T$ contains a $(2k+1,\alpha)$-degree tangle, then $\cw(T) \geq \frac {k(k+1-\alpha)}2$.
\end{lemma}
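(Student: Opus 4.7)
The plan is to wield Lemma~\ref{lem:minimum-sorted} in full strength. Let $\sigma$ be a sorted ordering of $T$; by that lemma, $\sigma$ is a minimum ordering of $T$ and in particular $\cw$-optimal, so $\cw(T)=\max_m\cutvector{T}{\sigma}(m)$. The computation in the proof of Lemma~\ref{lem:minimum-sorted} moreover yields the clean identity $\cutvector{T}{\sigma}(m)=\sum_{j=1}^{m}\omega^-(\sigma_j)-\binom{m}{2}$, which will be our main workhorse; it suffices to exhibit a cut position $m^{\star}$ with $\cutvector{T}{\sigma}(m^{\star})\geq \tfrac{k(k+1-\alpha)}{2}$.

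Label the tangle vertices as $w_1,\ldots,w_{2k+1}$ so that their positions $p_i:=\sigma(w_i)$ are strictly increasing, and put $d:=\omega^-(w_1)$, so $\omega^-(w_i)\in[d,d+\alpha]$ for every $i$. I will take two candidate cut positions around the median $w_{k+1}$, namely $m_1:=p_k$ and $m_2:=p_{k+1}$, and establish two \emph{complementary} lower bounds. The first uses only that $w_1,\ldots,w_k$ each contribute an indegree of at least $d$ to the prefix of length $p_k$, while all other prefix vertices contribute non-negatively; plugging into the identity yields $\cutvector{T}{\sigma}(m_1)\geq kd-\binom{p_k}{2}$, a bound that is sharp when $d$ is close to $k$. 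The second is ``dual'': the $k$ vertices $w_{k+2},\ldots,w_{2k+1}$ lie in the suffix $\lstsmt{\sigma}{p_{k+1}}$ each with indegree at most $d+\alpha$, hence outdegree at least $|V|-1-d-\alpha$; a direct counting that bounds how much of this out-mass can stay in the suffix (using $\omega(u,v)+\omega(v,u)=1$ on the $\binom{k}{2}$ pairs of $w_i$'s in the suffix) gives
\[
\cutvector{T}{\sigma}(m_2)\;\geq\; k\bigl(p_{k+1}-1-d-\alpha+\tfrac{k+1}{2}\bigr),
\]
which is sharp when $d$ is close to $k-\alpha$. Combined with the elementary bounds $p_k\geq k$ and $p_{k+1}\geq k+1$, a short calculation splitting on whether $d\geq k-\alpha/2$ or $d\leq k-\alpha/2$ shows that the maximum of the two estimates is always at least $\tfrac{k(k+1-\alpha)}{2}$; the two ranges together exhaust the legal interval for $d$.

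The main technical obstacle I expect is handling the case where the positions $p_k,p_{k+1}$ are much larger than $k,k+1$ because many non-$W$ vertices interleave with $W$ in $\sigma$; then the term $\binom{p_k}{2}$ threatens to swamp the linear contribution $kd$. The resolution is that sortedness forces every vertex whose position lies in $[p_1,p_{2k+1}]$ to have indegree in $[d,d+\alpha]$, so $\sum_{j\leq p_k}\omega^-(\sigma_j)$ is in fact at least $(p_k-p_1+1)d$ rather than merely $kd$; this extra mass compensates precisely for the growth of $\binom{p_k}{2}$, and an analogous refinement propagates through the second estimate. The bound is tight for the rotational tournament on $2k+1$ vertices, in which every vertex has indegree exactly $k$ and both estimates meet at $\tfrac{k(k+1)}{2}$.
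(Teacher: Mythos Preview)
Your plan has a genuine gap. The case split on $d\gtrless k-\alpha/2$ implicitly assumes the tangle starts at position $1$; once $p_1>1$ both of your estimates can fall strictly below $\tfrac{k(k+1-\alpha)}{2}$ simultaneously. Concretely, take $k=2$, $\alpha=0$, $n=8$, and a fractional tournament whose sorted indegree sequence is $(d_1,d_2,4.25,4.25,4.25,4.25,4.25,4.25)$ with $d_1+d_2=2.5$ (this satisfies all Landau-type inequalities, so it is realisable). Then $W$ sits at positions $3,\ldots,7$, $p_k=4$, $p_{k+1}=5$, $d=4.25$. Your first estimate gives $kd-\binom{p_k}{2}=8.5-6=2.5$, your second gives $k(p_{k+1}-1-d+(k+1)/2)=2(5-1-4.25+1.5)=0.5$; the maximum is $2.5<3=\tfrac{k(k+1)}{2}$, so your argument does not reach the claimed bound. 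Your ``refinement'' only rescues the case where non-$W$ vertices \emph{interleave} with $W$; it does nothing about non-$W$ vertices sitting \emph{before} $w_1$, because those can have indegree far below $d$ and you only credit them with~$0$.

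The paper's proof sidesteps this by bounding \emph{differences} of cut values rather than individual cuts: since every $\cutvector{T}{\sigma}(i)\in[0,\cw(T)]$, one has $\cutvector{T}{\sigma}(p+k-1)-\cutvector{T}{\sigma}(p-1)\le\cw(T)$, and this difference telescopes to $\sum_{j=p}^{p+k-1}(\omega^-(\sigma_j)-(j-1))$, which involves only local data at positions $p,\ldots,p+k-1$ and is therefore insensitive to what happens before position~$p$. The correct case split is then on $d+1\gtrless p+k-\alpha/2$, i.e.\ on $d$ \emph{relative to the tangle's position}, not on $d$ alone. Your argument becomes correct if you replace your first estimate by this difference (and symmetrically for the second), or equivalently if you first subtract off $\cutvector{T}{\sigma}(p_1-1)\ge 0$ before estimating.
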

\begin{proof}
Let $W$ be the $(2k+1,\alpha)$-degree tangle present in the fractional tournament $T$. Consider a sorted ordering $\pi$ of $T$. 
By Lemma \ref{lem:minimum-sorted}, we have $0 \leq \cutvector{T}{\pi}(i) \leq \cw(T)$ for any $i \in [0,n]$. Observe that
$$\cutvector{T}{\pi}(i) = \omega(V,\fstsmt{\pi}{i}) - \omega(\fstsmt{\pi}{i},\fstsmt{\pi}{i}) = \sum_{u \in \fstsmt{\pi}{i}} \omega^-(u) - \dbinom{i}{2} = \sum_{j = 1}^i (\omega^-(\onesmt{\pi}{j}) - (j-1)).$$
We infer that for any position $p \in [0,n-2k]$, the two following inequalities hold:
\begin{align}
\sum_{j = p}^{p+k-1} (\omega^-(\onesmt{\pi}{j})+1-j) = ~ \cutvector{T}{\pi}(\blue{p+k-1})- \cutvector{T}{\pi}(p-1) & \leq \cw(T)\label{eq:pos}\\
\sum_{j = p+k+1}^{p+2k} -(\omega^-(\onesmt{\pi}{j})+1-j) = ~ \cutvector{T}{\pi}(p+k)- \cutvector{T}{\pi}(p+2k) & \leq \cw(T)\label{eq:neg}
\end{align}

Set $p = \underset{v \in W}\min~\pi(v)$, that is, $p$ is the lowest position occupied by a vertex of $W$. Let $\delta = \omega^-(\pi_p)$. 
\blue{Since $\pi$ is sorted, without loss of generality we may assume that $W = \{\onesmt{\pi}{j} \colon j \in [p,p+2k]\}$.}
As $W$ is a $(2k+1,\alpha)$-degree tangle, for each $j \in [p,p+2k]$, we have $\delta \leq \omega^-(\pi_j) \leq \delta+\alpha$. 

Suppose $\delta+1 \geq p+k-\frac{\alpha}2$. We use~\eqref{eq:pos} together with $\omega^-(\pi_j) \geq \delta$ and re-indexing $i' = p+k-j$ to get: 
\begin{equation*}
\frac {k(k+1-\alpha)}2 = \sum_{i' = 1}^k (i'-\frac{\alpha}2) = \sum_{j = p}^{p+k-1} (p+k-j-\frac{\alpha}{2}) \leq \sum_{j = p}^{p+k-1} (\delta+1-j) \leq \cw(T).
\end{equation*}
Otherwise, we have $\delta+1 < p+k-\frac{\alpha}2$. Then we use~\eqref{eq:neg} together with $\omega^-(\onesmt{\pi}{j}) \leq \delta + \alpha$ and re-indexing $i' = j-p-k$ to get:
\begin{equation*}
\frac {k(k+1-\alpha)}2 = \sum_{i' = 1}^k (i'-\frac{\alpha}2) =  \sum_{j = p+k+1}^{p+2k} (j-p-k-\frac{\alpha}{2}) < \sum_{j = p+k+1}^{p+2k} -(\delta+\alpha+1-j) \leq \cw(T).
\end{equation*}
In both cases, we conclude that $\frac {k(k+1-\alpha)}2 \leq \cw(T)$.
\end{proof}

\blue{We are now ready to show that $c$-cutwidth-minimal tournaments have sizes linear in $c$.}

\begin{theorem}\label{thm:obstructions-tour}
For every $c \in \mathbb{N}$, every $c$-cutwidth-minimal tournament has at most $2c+2\lceil \sqrt{2c}\rceil+1$ vertices.
\end{theorem}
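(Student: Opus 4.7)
I would fix a sorted ordering $\pi$ of $T$, which by Lemma~\ref{lem:minimum-sorted} attains the minimum cut vector, so $\cw(T,\pi)=\cw(T)\geq c$. Write $d_i=d^-(\pi_i)$ and $c_j=\cutvector{T}{\pi}(j)=\sum_{i=1}^{j}d_i-\binom{j}{2}$, and set $s=\lceil\sqrt{2c}\rceil$, so that $s^2\geq 2c$.

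The first step is a bound on indegree multiplicities via Lemma~\ref{lem:pdt}. If $2s+1$ vertices of $T$ shared a common indegree $d$, then for any vertex $v$ outside this set the induced subdigraph $T-v$ would contain $2s+1$ vertices whose indegrees lie in $\{d-1,d\}$; this is a $(2s+1,1)$-degree tangle, so by Lemma~\ref{lem:pdt} $\cw(T-v)\geq s\cdot s/2=s^2/2\geq c$, contradicting $\cw(T-v)\leq c-1$ guaranteed by $c$-cutwidth-minimality. Hence every run of equal values in $d_1\leq\cdots\leq d_n$ has length at most $2s$; if $n\leq 2s+1$ there is nothing to prove, so assume $n\geq 2s+2$.

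The core step is a removal argument at a \emph{strict} position $j$, meaning one with $d_j<d_{j+1}$. For any $p\leq j$, the strict gap guarantees that the first $j-1$ vertices of the sorted ordering of $T-\pi_p$ are exactly $\fstsmt{\pi}{j}\setminus\{\pi_p\}$, since subtracting any value in $\{0,1\}$ from $d_q$ for $q\leq j$ keeps it strictly below $d_{j+1}$. A direct computation using the partition identity $d_p+|N^+(\pi_p)\cap\fstsmt{\pi}{j}|=(j-1)+|N^-(\pi_p)\cap\lstsmt{\pi}{j}|$ shows that the cut at position $j-1$ in the sorted ordering of $T-\pi_p$ equals $c_j-|N^-(\pi_p)\cap\lstsmt{\pi}{j}|$. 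Since this cut is at most $\cw(T-\pi_p)\leq c-1$, we obtain $|N^-(\pi_p)\cap\lstsmt{\pi}{j}|\geq c_j-c+1$ for every $p\leq j$ whenever $c_j\geq c$. Summing over $p\leq j$ and using the double-counting identity $\sum_{p\leq j}|N^-(\pi_p)\cap\lstsmt{\pi}{j}|=c_j$ yields $c_j\geq j(c_j-c+1)$, hence $j\leq c$. The symmetric argument, removing $\pi_p$ with $p>j$ and examining the cut at position $j$, gives $n-j\leq c$.

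If some strict $j$ has $c_j\geq c$, these two bounds combine to give $n\leq 2c\leq 2c+2s+1$ and we are done. Otherwise every position with $c_j\geq c$ lies inside a tie-run; selecting an argmax position $j^*$, its tie-run $[a,b]$ has length $u+v\leq 2s$ with $u=j^*-a+1$ and $v=b-j^*$, and inside this run the cut vector takes the closed quadratic form
\[
c_{a-1}=c_{j^*}-\tfrac{u(u+1)}{2},\qquad c_b=c_{j^*}-\tfrac{v(v-1)}{2},
\]
with $c_{a-1},c_b<c\leq c_{j^*}$ forced by the case hypothesis. Combining these two boundary equations with the cluster-size constraint $u+v\leq 2s$ and the removal argument at the strict endpoints $a-1$ and $b$, one pins down both $j^*$ and $n-j^*$ and obtains $n\leq 2c+2s+1$. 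The main obstacle of the proof is this last tie-run accounting: balancing the $u(u+1)/2$ and $v(v-1)/2$ deficits at the endpoints against the cluster length $\leq 2s$ to ensure that the additional slack lost in the tie case is no more than $2s+1$ compared to the clean strict bound $n\leq 2c$.
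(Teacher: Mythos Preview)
Your Case 1 argument is clean and correct, and even improves on the paper's bound in that situation. The gap is in Case 2. There, both strict endpoints satisfy $c_{a-1}<c$ and $c_b<c$, so your removal inequality $|N^-(\pi_p)\cap\lstsmt{\pi}{j}|\ge c_j-c+1$ (and its symmetric version) becomes vacuous at $j=a-1$ and $j=b$: the right-hand side is nonpositive. Hence the sentence ``the removal argument at the strict endpoints $a-1$ and $b$ pins down both $j^*$ and $n-j^*$'' does not go through; neither endpoint yields any bound on $a-1$ or on $n-b$. The accounting you describe, balancing the quadratic deficits $u(u{+}1)/2$ and $v(v{-}1)/2$ against $u+v\le 2s$, only controls the \emph{length of the run}, not the number of vertices lying before or after it. (Also, your closed form assumes the specific case $d=j^*$; for $d=j^*-1$ the two deficits swap to $(u{-}1)u/2$ and $v(v{+}1)/2$, though this is a minor point.)

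The paper sidesteps the strict/non-strict case split by adding one qualitative step you are missing. With $V_<,V_=,V_>$ the vertices of indegree less than, equal to, and greater than $d^-(\pi_i)$, it first shows that \emph{every} $v\in V_>$ has an outneighbor in $V_<\cup V_=$: otherwise removing $v$ leaves the cut $E^i_\pi$ untouched (here the non-strict position $i$ with $c_i\ge c$ is used), contradicting $c$-minimality. Once this is known, removing any single $v\in V_>$ and looking at the cut at position $|V_<\cup V_=|$ in $T-v$ gives $|E(V_>\setminus\{v\},V_<\cup V_=)|\ge |V_>|-1$, because each remaining vertex of $V_>$ contributes at least one backward arc; comparing with $\cw(T-v)\le c-1$ yields $|V_>|\le c$ without ever needing $c_b\ge c$. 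The symmetric argument bounds $|V_<|\le c$, and the degree-tangle bound (essentially your Step 2) gives $|V_=|\le 2\lceil\sqrt{2c}\rceil+1$. In short, the missing idea is to use the large non-strict cut at $i$ once, qualitatively, to force one backward arc per vertex of $V_>$, rather than trying to squeeze your summation inequality at a strict position with small cut.
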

\begin{proof}
Consider a $c$-cutwidth-minimal tournament $T = (V,E)$ and let $\pi$ be a sorted ordering of $T$. By Lemma~\ref{lem:minimum-sorted}, we have $\cw(T,\pi)=\cw(T)\geq c$. 
Let $i\in [0,|V|]$ be such that $\cutvector{T}{\pi}(i) \geq c$. 
We define the vertex set $V_==\{v\in V \colon d^-(v)=d^-(\onesmt{\pi}{i})\}$. 
We let $V_>$ and $V_<$ denote the sets of vertices that respectively appear after and before $V_=$ in $\pi$.

Observe that for each $u \in V_=$, the set $V_= \setminus \{u\}$ is a $(|V_=|-1,1)$-degree tangle for $T[V \setminus \{u\}]$. 
By the minimality of $T$ and Lemma~\ref{lem:pdt}, we may thus assume that $|V_=|\leq 2\lceil \sqrt{2c} \rceil+1$, 
as otherwise the removal of any vertex of $V_=$ would still leave a degree tangle that certifies that the cutwidth is at least $c$.

We now focus on the set $V_>$. Note that $V_<\cup V_=$ is a prefix of $\pi$ and $V_>$ is a suffix of~$\pi$. Consider now removing any vertex $v$ of $V_>$ from the tournament $T$. This operation may decrease the indegrees of vertices of $V_>$ by at most one, so after the removal it will be still true that the indegree of any vertex of
$V_>$ will be at least as large as the indegree of any vertex of $V_<\cup V_=$. Consequently, there is a sorted vertex ordering of $T[V\setminus \{v\}]$ where $V_<\cup V_=$ is a prefix and $V_>\setminus \{v\}$ is
a suffix. Moreover, if $v$ had no outneighbors in $V_<\cup V_=$, then we could choose this sorted ordering so that on $V_<\cup V_=$ it would match $\pi$, implying that $E^i_\pi$ would be also a cut;
note here that no arc of $E^i_\pi$ is incident to $v$, as $v$ has no outneighbors in $V_<\cup V_=$.
By Lemma~\ref{lem:minimum-sorted}, this would mean that the cutwidth of $T[V\setminus \{v\}]$ would be at least $|E^i_\pi|$, which is at least $c$, a contradiction to the minimality of~$T$.
We conclude that each $v\in V_>$ has an outneighbor in $V_<\cup V_=$. 

Since there is a sorted ordering of $T[V\setminus \{v\}]$ where $V_<\cup V_=$ is a prefix and $V_>\setminus \{v\}$ is a suffix,
by Lemma~\ref{lem:minimum-sorted} we infer that the cutwidth of $T[V\setminus \{v\}]$ has to be at least $|E(V_>\setminus\{v\},V_<\cup V_=)|$. 
However, we have just argued that every vertex of $V_>$ has an outneighbor in $V_<\cup V_=$, so $|E(V_>\setminus\{v\},V_<\cup V_=)|\geq |V_>\setminus \{v\}|$. 
By the minimality of $T$, this implies that $|V_>\setminus \{v\}|\leq \cw(T[V\setminus \{v\}])<c$, so $|V_>|\leq c$. 
A symmetric argument shows that $|V_<|\leq c$ as well. The claim follows from combining the obtained upper bounds on the sizes of $V_<$, $V_=$, and $V_>$.
\end{proof}

Note that the bound of Theorem \ref{thm:obstructions-tour} is almost tight, as Figure~\ref{fig:minTournament} displays a $c$-cutwidth-minimal tournament with $2c + 1$ vertices;
we leave the easy verification of minimality to the reader. We also remark that the example in Figure~\ref{fig:minTournament} may be modified by replacing the depicted matching of backwards arcs
by any matching of backward arcs of size $c$ with tails at positions between $c+2$ and $2c+1$ and heads at positions between $1$ and $c$. This yields an exponential number of pairwise non-isomorphic
$c$-cutwidth-minimal tournaments on $2c+1$ vertices.


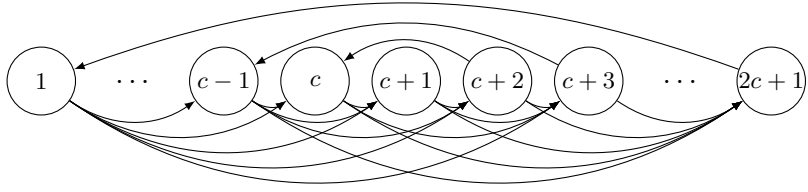
\begin{figure}[t]
\centering
\begin{tikzpicture}
\tikzstyle{vertex}=[draw,circle,minimum size=0.9cm,inner sep=0pt]
\node[vertex] (u) at (0,0) {{\footnotesize{$c+1$}}};
\node[vertex] (1) at (1.2,0) {{\footnotesize{$c+2$}}};
\node[vertex] (2) at (2.4,0) {{\footnotesize{$c+3$}}};
\node (i) at (3.6,0) {$\ldots$};
\node[vertex] (c) at (4.8,0) {{\footnotesize{$2c+1$}}};
\node[vertex] (-1) at (-1.2,0) {{\footnotesize{$c$}}};
\node[vertex] (-2) at (-2.4,0) {{\footnotesize{$c-1$}}};
\node (-i) at (-3.6,0) {$\ldots$};
\node[vertex] (-c) at (-4.8,0) {{\footnotesize{$1$}}};
\draw[->,>=latex] (1) to[out = 145,in = 35] (-1);
\draw[->,>=latex] (2) to[out = 152,in = 28] (-2);
\draw[->,>=latex] (c) to[out = 160,in = 20] (-c);

\foreach \n/\m in {
-c/-2,-c/-1,-c/u,-c/1,-c/2,
      -2/-1,-2/u,-2/1,     -2/c,
            -1/u,     -1/2,-1/c,
                  u/1, u/2, u/c,
                       1/2, 1/c,
                            2/c} {
\draw[->,>=latex] (\n) to [bend right=35] (\m);
}
\end{tikzpicture}
\caption{A $c$-cutwidth-minimal tournament \blue{$T$} with $2c+1$ vertices. \blue{In the layout $\pi$ depicted above, the set of backward arcs of $T$ forms the matching $\{(\onesmt{\pi}{c+1+i},\onesmt{\pi}{c+1-i})\colon i\in[1,c]\}$.}
}\label{fig:minTournament}
\end{figure}

\subsection{Upper bound for $c$-cutwidth-minimal semi-complete digraphs}

\blue{We now prove the quadratic bound on the sizes of $c$-cutwidth-minimal semi-complete digraphs.}

\begin{theorem}\label{thm:obstructions-semi}
For every positive integer $c$, every $c$-cutwidth-minimal semi-complete digraph has at most \redd{$24c^2+1$} vertices.
\end{theorem}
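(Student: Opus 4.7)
The plan is to apply the algorithm of Proposition~\ref{prop:turing-kernel} to $D$ with parameter $c-1$ and then use the $c$-cutwidth-minimality of $D$ to force $D$ itself to appear among the output pieces. Unpacking the hypothesis, $\cw(D) \geq c$, while every proper induced subdigraph $D' \subsetneq D$ satisfies $\cw(D') \leq c-1$. First I would feed $D$ to the algorithm of Proposition~\ref{prop:turing-kernel} with parameter $c - 1$; in its main branch this produces a list $D_1,\ldots,D_\ell$ of induced subdigraphs of $D$, each with at most $24(c-1)^2 + 40(c-1) + 1 = 24c^2 - 8c - 15$ vertices, together with the equivalence that $\cw(D) \leq c-1$ if and only if $\cw(D_i) \leq c-1$ for every $i$.

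Next, I would derive the size bound by contradiction. If every $D_i$ were a proper induced subdigraph of $D$, then minimality would give $\cw(D_i) \leq c-1$ for each $i$, and the equivalence above would force $\cw(D) \leq c-1$, contradicting $\cw(D) \geq c$. Therefore at least one $D_i$ must coincide with $D$, which yields $|V(D)| \leq 24c^2 - 8c - 15 \leq 24c^2 + 1$, as desired.

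The hard part will be the alternative branch, where the algorithm of Proposition~\ref{prop:turing-kernel} short-circuits and just concludes $\cw(D) > c-1$ rather than returning a list; this branch is triggered precisely when the pre-computed $2$-approximation ordering of $D$ already has width exceeding $2(c-1)$, which can legitimately happen even for $c$-minimal $D$. To cover that situation I would bypass the $2$-approximation shortcut and replay the partitioning by hand: start from a $\cw$-optimal ordering of $D$, pass it through Lemma~\ref{lem:lean} to get a lean ordering, choose milestones of span satisfying the hypothesis of Lemma~\ref{lem:save-milestone} scaled to the actual width of that lean ordering rather than to $2(c-1)$, and extract the pieces exactly as in the proof of Proposition~\ref{prop:turing-kernel}. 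The minimality argument of the previous paragraph still applies verbatim to these pieces, so $D$ is again pinned as one of them, and tracking the constants carefully in the $c$-minimal regime recovers the bound $|V(D)| \leq 24c^2 + 1$.
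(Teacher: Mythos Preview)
Your main line of argument matches the paper's exactly: run Proposition~\ref{prop:turing-kernel} with parameter $c-1$, and observe that if every output piece $D_i$ were a proper induced subdigraph then minimality would force $\cw(D_i)\le c-1$ for all $i$, contradicting $\cw(D)\ge c$; hence some $D_i=D$ and the size bound on the pieces bounds $|V(D)|$. You also correctly identify the real difficulty, namely the branch where the $2$-approximation of width at most $2(c-1)$ is not available and the algorithm short-circuits.

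The gap is in how you handle that branch. You propose to start from a $\cw$-optimal ordering of $D$, make it lean, and rerun the milestone partitioning ``scaled to the actual width.'' But the size of each piece produced by the argument of Proposition~\ref{prop:turing-kernel} is $\Theta(w^2)$ where $w$ is the width of the lean ordering you feed in, so with your replay the pieces have $O(\cw(D)^2)$ vertices, not $O(c^2)$. Nothing in the bare hypothesis ``$D$ is $c$-cutwidth-minimal'' bounds $\cw(D)$ in terms of $c$: removing a single vertex from a semi-complete digraph can drop the cutwidth by as much as $n-1$, so a priori $\cw(D)$ could be far larger than $c$. Your sentence ``tracking the constants carefully in the $c$-minimal regime recovers the bound'' is precisely where the argument breaks; there is no such tracking to be done without first controlling $\cw(D)$. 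The paper closes this gap with an extra reduction: it passes to a \emph{$c$-critical} semi-complete subdigraph (minimal also under removal of single arcs from symmetric pairs), shows it has the same vertex set as $D$, and observes that for a $c$-critical non-tournament one arc removal witnesses $\cw(D)\le c$, hence $\cw(D)=c$. With $\cw(D)=c\le 2(c-1)$ for $c\ge 2$, an optimal ordering can be substituted for the $2$-approximation and the short-circuit never fires. You are missing this $\cw(D)=c$ step (and, minorly, the separate treatment of $c=1$, where $c\le 2(c-1)$ fails).
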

\begin{proof}
For $c = 0$ and $c = 1$ the theorem holds trivially,
since the only $0$-cutwidth-minimal semi-complete digraph is an isolated vertex and the only two $1$-cutwidth-minimal semi-complete digraphs are the following: two vertices connected by a pair of symmetric arcs, 
and an oriented triangle (directed cycle on $3$ vertices). 
Thus we may assume that $c > 1$. 
Let $D$ be a $c$-cutwidth-minimal semi-complete digraph. 
We may assume that $D$ is not a tournament, as otherwise Theorem~\ref{thm:obstructions-tour} applies proving the bound.

We say that a $c$-cutwidth-miminal semi-complete digraph $D$ is \emph{$c$-critical} if every semi-complete subdigraph $D'$ of $D$ resulting from the removal of an arc satisfies $\cw(D')<c$.
We claim that it is enough to prove the statement for \emph{$c$-critical} semi-complete digraphs. Indeed, observe that every $c$-critical semi-complete subdigraph $D'$ of $D$ satisfies 
$V(D)=V(D')$, because otherwise, by the minimality of $D$, we would have $c \leq \cw(D') \leq \cw(D[V(D')])<c$, a contradiction.
So assume that $D$ is a $c$-critical semi-complete digraph. We now claim that $\cw(D)=c$. 
To see this consider any semi-complete digraph $D'$ obtained by removing an arc~$e$. 
Since $D$ is not a tournament, such $D'$ exists and, by $c$-criticality, it satisfies $c \leq \cw(D) \leq \cw(D')+1 \leq c$. This implies $\cw(D)=c$.

Consider applying the algorithm of Proposition~\ref{prop:turing-kernel} to the semi-complete digraph $D$ with parameter $c-1$.
Observe that the first step of this algorithm is to compute a vertex ordering $\pi$ of $D$ of width at most $2c-2$ using the approximation algorithm of Theorem~\ref{thm:apx}.
In case the application of Theorem~\ref{thm:apx} does not return such an ordering, the algorithm terminates and concludes that $\cw(D)>c-1$.
Moreover, this conclusion may be drawn only in such case: if the algorithm of Theorem~\ref{thm:apx} succeeds in finding a vertex ordering $\pi$ of width at most $2c-2$, 
the algorithm of Proposition~\ref{prop:turing-kernel} returns a list of induced subdigraphs $D_1,\ldots,D_\ell$ satisfying the asserted properties. 
However, since $\cw(D)=c$ and $c\leq 2c-2$ due to $c\geq 2$, we may apply the algorithm of Proposition~\ref{prop:turing-kernel}
on any ordering $\pi$ of $D$ of width $c$, instead of the one obtained by applying algorithm of Proposition~\ref{prop:turing-kernel}.
Thus we guarantee that the algorithm always produces a list of induced subdigraphs $D_1,\ldots,D_\ell$ with the asserted properties:
for every $i\in[1,\ell]$, $D_i$ is an induced subdigraph of $D$ on at most \redd{$24(c-1)^2+40(c-1)+1 \leq 24c^2+1$} vertices, and $\cw(D)\leq c-1$ if and only if $\cw(D_i)\leq c-1$ for each $i\in [1,\ell]$.

If we now had that each output $D_i$ was smaller by at least one vertex than $D$, then by the $c$-cutwidth minimality of $D$ we would infer that $\cw(D_i)\leq c-1$ for each $i\in [1,\ell]$, implying
$\cw(D)\leq c-1$. This is a contradiction with the assumption $\cw(D) = c$.
Hence, for some $i\in [1,\ell]$ we have $D_i=D$, and hence \redd{$|V(D)| \leq 24c^2+1$}.
\end{proof}

Observe that in the previous proof, to apply the algorithm of Proposition~\ref{prop:turing-kernel}, we use the fact that $\cw(D)=c\leq 2c-2$ and then consider an ordering $\pi$ of optimal width $c$ instead of a $2$-approximation. Analyzing in details the algorithm of Proposition~\ref{prop:turing-kernel} in this context would lead to smaller constants in the bound on the size of $c$-cutwidth minimal semi-complete digraphs, namely $6c^2+O(c)$.

\subsection{Algorithmic applications} 

Consider the {\sc $c$-Cutwidth Vertex Deletion} problem defined as follows: 
given a digraph $D$ and integer $k$, decide whether it is possible to find a set $Z$ of at most $k$ vertices in $D$ such that $\cw(D-Z)\leq c$; here, $c$ is considered a fixed constant.
A set $Z$ with this property is called a {\em{deletion set to cutwidth at most $c$}}.

The upper bound on the sizes of $c$-cutwidth-minimal semi-complete digraphs, proved in the previous section, yields almost directly a number of algorithmic 
corollaries for the {\sc $c$-Cutwidth Vertex Deletion} problem.
More precisely, we show that for a fixed $c$, we can give
\begin{itemize}[nosep]
\item a single-exponential parameterized algorithm, with the running time tight under ETH;
\item an approximation algorithm with the approximation factor depending only on $c$; and
\item a polynomial kernelization algorithm.
\end{itemize}

As a preparation for these results, we first prove that it is possible to efficiently locate small obstacles for cutwidth at most $c$ in a semi-complete digraph of cutwidth larger than $c$.

\begin{lemma}\label{lem:find-minimal}
There exists an algorithm that given a semi-complete digraph $D$ on $n$ vertices and a nonnegative integer $c$, runs in time $2^{\Oh(\sqrt{c\log c})}\cdot n^3$ and either concludes that $\cw(D)\leq c$,
or finds an induced $(c+1)$-cutwidth-minimal subdigraph of $D$.
\end{lemma}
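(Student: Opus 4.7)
The plan is to combine the parameterized algorithm of Fomin and Pilipczuk with a single-pass greedy minimization. Recall that the algorithm of~\cite{FominP13,Pil13} decides whether $\cw(D)\leq c$ in time $2^{\Oh(\sqrt{c\log c})}\cdot n^2$. I would first invoke this subroutine on the input digraph $D$; if it answers positively, report that $\cw(D)\leq c$ and terminate.

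Otherwise, we know $\cw(D)\geq c+1$ and wish to prune $D$ down to a $(c+1)$-cutwidth-minimal induced subdigraph while paying only an extra factor of $n$ in the running time. To this end, fix an arbitrary order $v_1,\ldots,v_n$ of the vertices of $D$, and maintain an induced subdigraph $D'$ initialized to $D$. For $i=1,\ldots,n$, invoke the Fomin--Pilipczuk algorithm on $D'[V(D')\setminus\{v_i\}]$ with parameter $c$: if it reports cutwidth at least $c+1$, replace $D'$ by $D'[V(D')\setminus\{v_i\}]$; otherwise leave $D'$ unchanged. After the sweep, output $D'$.

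Correctness rests on two invariants. First, $\cw(D')\geq c+1$ holds throughout: a vertex is deleted only when its removal preserves this threshold, and non-deletions leave $D'$ untouched. Second, for every vertex $v=v_i$ that remains in the final $D'$, we have $\cw(D'[V(D')\setminus\{v\}])\leq c$. Indeed, at the moment $v_i$ was processed, the algorithm certified $\cw(D'_{\mathrm{then}}[V(D'_{\mathrm{then}})\setminus\{v_i\}])\leq c$, and any subsequent deletions only produce an induced subdigraph of $D'_{\mathrm{then}}[V(D'_{\mathrm{then}})\setminus\{v_i\}]$. Since cutwidth is monotone under the induced subdigraph relation (any vertex ordering restricts to a vertex ordering of no larger width), the cutwidth stays at most $c$. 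Together these facts imply that the returned $D'$ has cutwidth at least $c+1$ while every proper induced subdigraph of $D'$ has cutwidth at most $c$, so $D'$ is $(c+1)$-cutwidth-minimal.

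The running time is dominated by at most $n+1$ calls to the Fomin--Pilipczuk algorithm, each taking $2^{\Oh(\sqrt{c\log c})}\cdot n^2$ time, which gives the claimed $2^{\Oh(\sqrt{c\log c})}\cdot n^3$ bound. There is no real obstacle: the only point to double-check is the monotonicity argument propagating the cutwidth certificate from the moment of testing to the end of the sweep, and this follows immediately from the definition of cutwidth.
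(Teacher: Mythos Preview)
Your proposal is correct and follows essentially the same approach as the paper: run the Fomin--Pilipczuk test once on $D$, then greedily sweep through the vertices, deleting each one whose removal keeps the cutwidth above $c$, and argue via monotonicity of cutwidth under induced subdigraphs that the surviving digraph is $(c+1)$-cutwidth-minimal. The running-time accounting is identical as well.
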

\begin{proof}
Recall that Fomin and Pilipczuk~\cite{FominP13,Pil13} gave an algorithm that verifies whether the cutwidth of a given $n$-vertex semi-complete digraph is at most $c$ in time $2^{\Oh(\sqrt{c\log c})}\cdot n^2$.
Given a semi-complete digraph $D$, we use this algorithm first to verify whether $\cw(D)\leq c$. 
If this is not the case, we perform the following procedure.

Set $D':=D$, iterate through the vertices of $D$, and for each consecutive vertex $u$ check using the algorithm of Fomin and Pilipczuk whether $\cw(D'-u)\leq c$.
If this is the case, then keep $u$ in $D'$ and proceed, and otherwise remove $u$ from $D'$ and proceed. 
Note that thus $D'$ changes over the course of the algorithm but stays an induced subdigraph of $D$.
We claim that $D''$, defined as $D'$ obtained at the end of the procedure, is $(c+1)$-cutwidth-minimal.
On one hand, we remove a vertex from $D'$ only when this does not lead to decreasing the cutwidth below $c+1$, so we maintain the invariant that the cutwidth of $D'$ is always larger than $c$.
On the other hand, each vertex $u$ we keep in $D'$ had the property that removing it would decrease the cutwidth to at most $c$ at the moment it was considered.
Since cutwidth is closed under induced subdigraphs, this is also true in $D''$, so indeed $D''$ is $(c+1)$-cutwidth-minimal.
The procedure applies the algorithm of Fomin and Pilipczuk  $n+1$ times, so the total running time is $2^{\Oh(\sqrt{c\log c})}\cdot n^3$.
\end{proof}

We remark that in the tournament setting, computing the cutwidth is a polynomial-time solvable problem by Theorem~\ref{thm:poly-time}.
By plugging this subroutine instead of the algorithm of Fomin and Pilipczuk, we infer that 
for tournaments, the algorithm of Lemma~\ref{lem:find-minimal} works in fully polynomial time, with no exponential multiplicative factor depending on $c$.

By applying a standard branching strategy, we obtain a single-exponential FPT algorithm for {\sc $c$-Cutwidth Vertex Deletion} in semi-complete digraphs.

\begin{theorem}\label{thm:cvd-fpt}
The {\sc $c$-Cutwidth Vertex Deletion} problem on semi-complete digraphs can be solved in time $2^{\Oh(\sqrt{c\log c})}\cdot c^{\Oh(k)}\cdot n^{\Oh(1)}$, 
where $n$ is the number of vertices of the input digraph and $k$ is the budget for the deletion set.
\end{theorem}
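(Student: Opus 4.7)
The plan is to use a standard bounded-search-tree strategy that leverages the quadratic bound on cutwidth-minimal obstructions (Theorem~\ref{thm:obstructions-semi}) together with the obstruction-finding algorithm (Lemma~\ref{lem:find-minimal}). The core observation is that any deletion set $Z$ for cutwidth at most $c$ must hit every induced subdigraph of cutwidth larger than $c$; in particular it must contain at least one vertex of every $(c+1)$-cutwidth-minimal induced subdigraph, of which we know there is always a small one.

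First, given an instance $(D,k)$, I would apply Lemma~\ref{lem:find-minimal} to $D$ with parameter $c$ in time $2^{\Oh(\sqrt{c\log c})}\cdot n^{\Oh(1)}$. If it reports $\cw(D)\leq c$, then the empty set is a valid deletion set and we answer YES. Otherwise it returns an induced $(c+1)$-cutwidth-minimal subdigraph $H$ of $D$, which by Theorem~\ref{thm:obstructions-semi} applied with parameter $c+1$ satisfies $|V(H)|\leq 24(c+1)^2+1=\Oh(c^2)$. If $k=0$ we then report NO, because no vertex can be deleted and $\cw(D)>c$. Otherwise we branch: for every $v\in V(H)$, recursively solve the instance $(D-v,k-1)$, returning YES iff at least one branch returns YES. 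Note that $D-v$ remains semi-complete, so the recursion stays within the problem.

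Correctness is immediate: if $Z$ is a solution with $Z\cap V(H)=\emptyset$, then $H$ is an induced subdigraph of $D-Z$ and hence $\cw(D-Z)\geq\cw(H)\geq c+1$, contradicting that $Z$ is a solution. So any solution must hit $V(H)$, and the branching on which vertex of $H$ to delete is exhaustive.

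For the running time, the search tree has depth at most $k$ and branching factor $\Oh(c^2)$, hence at most $\Oh(c^2)^k=c^{\Oh(k)}$ nodes. At every node we perform one call to Lemma~\ref{lem:find-minimal}, costing $2^{\Oh(\sqrt{c\log c})}\cdot n^{\Oh(1)}$ time, and a constant number of elementary operations. Multiplying yields a total running time of $2^{\Oh(\sqrt{c\log c})}\cdot c^{\Oh(k)}\cdot n^{\Oh(1)}$, as required. There is no genuine obstacle in this argument — the theorem is essentially a direct corollary of the quadratic obstruction bound combined with Fomin--Pilipczuk cutwidth detection; the only thing to verify is that semi-completeness is preserved under vertex deletion, which is obvious.
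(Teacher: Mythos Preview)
Your proposal is correct and follows essentially the same approach as the paper: find a small $(c+1)$-cutwidth-minimal obstruction via Lemma~\ref{lem:find-minimal}, bound its size by $\Oh(c^2)$ via Theorem~\ref{thm:obstructions-semi}, and branch on which of its vertices to delete. Your write-up is in fact slightly more careful than the paper's, as you explicitly note that semi-completeness is preserved under vertex deletion and spell out the correctness argument for why any solution must intersect $V(H)$.
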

\begin{proof}
Let $D$ be the input digraph.
Run the algorithm of Lemma~\ref{lem:find-minimal} to either conclude that $\cw(D)\leq c$, or to find an induced $(c+1)$-cutwidth minimal semi-complete digraph $P$ in $D$.
By Theorem~\ref{thm:obstructions-semi}, $P$ has at most $\Oh(c^2)$ vertices.
Branch on which vertex of $P$ is included in the solution; that is, for each vertex $u$ of $P$ recurse on the digraph $D-u$ with budget $k-1$ for the deletion set.
Whenever the budget drops to $0$ and the considered digraph still does not have cutwidth at most $c$, we may discard the branch.
Conversely, if the budget is nonnegative and the considered digraph has cutwidth at most $k$, then we have found a deletion set to cutwidth at most $c$ of size at most $k$.
The recursion tree has depth at most $k$ and branching $\Oh(c^2)$, so the whole algorithm has running time $2^{\Oh(\sqrt{c\log c})}\cdot c^{\Oh(k)}\cdot n^{\Oh(1)}$.
\end{proof}

Similarly, we obtain a constant-factor approximation algorithm by greedily removing each encountered $(c+1)$-cutwidth-minimal induced subdigraph.

\begin{theorem}\label{thm:cvd-apx}
There exists an algorithm that given a semi-complete digraph $D$ on $n$ vertices and a nonnegative integer $c$, runs in time $2^{\Oh(\sqrt{c\log c})}\cdot n^{\Oh(1)}$ and 
finds a deletion set to cutwidth at most $c$ in $D$ whose size is at most $\Oh(c^2)$ times larger than the minimum size of a deletion set to cutwidth at most $c$ in $D$.
\end{theorem}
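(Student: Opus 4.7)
The plan is to turn the branching step used in Theorem~\ref{thm:cvd-fpt} into a greedy removal step, keeping track of vertices taken into the deletion set. More precisely, I would maintain the invariant that the current digraph $D'$ is an induced subdigraph of $D$ and $Z$ is a set of vertices already committed to the deletion set. Initially, $D':=D$ and $Z:=\emptyset$. In each round, I would apply the algorithm of Lemma~\ref{lem:find-minimal} to $D'$ with parameter $c$. If it concludes that $\cw(D') \leq c$, the procedure terminates and outputs $Z$. Otherwise, it outputs an induced $(c+1)$-cutwidth-minimal subdigraph $P$ of $D'$, which by Theorem~\ref{thm:obstructions-semi} has at most $24(c+1)^2 + 1 = \Oh(c^2)$ vertices. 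I would then add all vertices of $V(P)$ to $Z$, set $D':=D'-V(P)$, and proceed to the next round.

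The correctness and approximation guarantee come from the following observation. Let $Z^\star$ be a minimum deletion set to cutwidth at most $c$ in $D$, of size $k^\star$. In each iteration where the algorithm finds a $(c+1)$-cutwidth-minimal induced subdigraph $P$, the set $Z^\star \cap V(P)$ must be nonempty: otherwise $P$ would survive as an induced subdigraph of $D - Z^\star$, contradicting $\cw(D - Z^\star) \leq c$ since $\cw(P) \geq c+1$. Hence each round ``charges'' at least one previously untouched vertex of $Z^\star$, and since we additionally charge at most $\Oh(c^2)$ vertices into $Z$ per round, after at most $k^\star$ rounds the procedure must terminate with $|Z| \leq \Oh(c^2) \cdot k^\star$. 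Upon termination we have $\cw(D - Z) = \cw(D') \leq c$, so $Z$ is a valid deletion set.

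For the running time, each round costs $2^{\Oh(\sqrt{c\log c})}\cdot n^{\Oh(1)}$ by Lemma~\ref{lem:find-minimal}, and the number of rounds is bounded by $n/\Oh(1)$ since each round removes at least one vertex from $D'$. Hence the total running time is $2^{\Oh(\sqrt{c\log c})}\cdot n^{\Oh(1)}$, as required. I do not expect any significant obstacle in this argument: the main conceptual ingredients, namely the polynomial-size obstruction bound (Theorem~\ref{thm:obstructions-semi}) and the extraction algorithm for such an obstruction (Lemma~\ref{lem:find-minimal}), are already established, and the reduction from approximation to obstruction-hitting is the standard greedy scheme used, for example, in approximation algorithms for vertex deletion to planar graphs or to graphs of bounded treewidth.
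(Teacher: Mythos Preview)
Your proposal is correct and follows essentially the same greedy obstruction-removal scheme as the paper's own proof: repeatedly extract a $(c+1)$-cutwidth-minimal induced subdigraph via Lemma~\ref{lem:find-minimal}, delete all its vertices, and charge each deleted obstruction against a distinct vertex of the optimum solution using the $\Oh(c^2)$ size bound from Theorem~\ref{thm:obstructions-semi}. The only cosmetic difference is that you make the disjointness of successive obstructions (and hence the ``previously untouched'' charging) explicit, whereas the paper leaves this implicit.
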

\begin{proof}
Let $D$ be the input digraph and let $Y$ be a minimum-size deletion set to cutwidth at most $c$ in $D$.
Iteratively run the algorithm of Lemma~\ref{lem:find-minimal} to either conclude that the current digraph, initially set to $D$, has cutwidth at most $c$, 
or to find an induced $(c+1)$-cutwidth minimal semi-complete digraph $P$ in it. In the first case break the iteration and return the currently accumulated solution $X$, 
while in the second case remove all the vertices of $P$ from the current digraph, include them in $X$ (initially set to be empty), and continue the iteration.
By the condition of breaking the iteration we have that the obtained set $X$ satisfies $\cw(D-X)\leq c$.
On the other hand, each of the removed $(c+1)$-cutwidth-minimal induced subdigraphs $P$ had to include at least one vertex from $Y$. 
Since each such $P$ has at most $\Oh(c^2)$ vertices by Theorem~\ref{thm:obstructions-semi},
we conclude that $|X|\leq \Oh(c^2)\cdot |Y|$.
\end{proof}

Next we show that Theorem~\ref{thm:obstructions-semi} combined with the Sunflower Lemma approach to kernelization yields a polynomial kernel for {\sc $c$-Cutwidth Vertex Deletion} of size $k^{\Oh(c^2)}$.
We will use the following variant of the Sunflower Lemma due to Fomin et al.~\cite{FominSV13}, which is tailored to applications in kernelization. 
Here, a set $Z$ is a {\em{hitting set}} of a set family $\mathcal{F}$ if every set from $\mathcal{F}$ has a nonempty intersection
with $Z$. Also, $Z$ is a {\em{minimal}} hitting set of $\mathcal{F}$ if every its proper subset is not a hitting set of $\mathcal{F}$.

\begin{lemma}[Lemma 3.2 of~\cite{FominSV13}]\label{lem:sunflower}
Let $d$ be a fixed integer. Let $\mathcal{F}$ be a family of subsets of some universe $U$, each of cardinality at most $d$.
Then, given an integer $k$, one can in time $\Oh(|\mathcal{F}|\cdot (k+|\mathcal{F}|))$ compute a subfamily $\mathcal{F}'\subseteq \mathcal{F}$ with $|\mathcal{F}'|\leq d!(k+1)^d$ such that the following holds:
every subset $Z\subseteq U$ of size at most $k$ is a minimal hitting set for $\mathcal{F}$ if and only if it is a minimal hitting set for $\mathcal{F}'$.
\end{lemma}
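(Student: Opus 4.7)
The plan is to iteratively shrink $\mathcal{F}$ by detecting and breaking large sunflowers, invoking the classical Erdős--Ko--Rado Sunflower Lemma as the driving combinatorial tool. Recall that a \emph{sunflower} with $r$ petals is a collection of $r$ sets $S_1,\dots,S_r$ and a \emph{core} $Y$ such that $S_i\cap S_j=Y$ for all $i\neq j$. The Erdős--Ko--Rado lemma gives a constructive proof that whenever $|\mathcal{F}|>d!\cdot(k+1)^d$ and every set of $\mathcal{F}$ has size at most $d$, $\mathcal{F}$ contains a sunflower with at least $k+2$ petals, and such a sunflower can be extracted in time $\Oh(|\mathcal{F}|)$ by a standard recursive partition-by-element argument.

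The core reduction step is to argue that, given such a sunflower $S_0,S_1,\ldots,S_{k+1}\in\mathcal{F}$ with core $Y$, the set $S_0$ can be safely removed from $\mathcal{F}$ without affecting the collection of size-at-most-$k$ minimal hitting sets. For this, I first observe that the petals $S_i\setminus Y$, $i=1,\ldots,k+1$, are pairwise disjoint, so any hitting set $Z\subseteq U$ of $\mathcal{F}\setminus\{S_0\}$ with $|Z|\leq k$ necessarily satisfies $Z\cap Y\neq\emptyset$: otherwise $Z$ would have to contain one element of each of the $k+1$ disjoint petals, contradicting $|Z|\leq k$. Hence such $Z$ also hits $S_0\supseteq Y$ and is a hitting set of $\mathcal{F}$ as well. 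The converse (hitting $\mathcal{F}$ implies hitting $\mathcal{F}\setminus\{S_0\}$) is trivial. Applied both to $Z$ and to every proper subset $Z'\subsetneq Z$ (which still has size $\leq k$), this equivalence lifts from hitting sets to minimal hitting sets of cardinality at most $k$.

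Now I iterate: starting from $\mathcal{F}_0:=\mathcal{F}$, as long as $|\mathcal{F}_i|>d!\cdot(k+1)^d$, I extract a sunflower with $k+2$ petals and remove one of them, obtaining $\mathcal{F}_{i+1}$. By the reduction step, each $\mathcal{F}_{i+1}$ shares the same size-$\leq k$ minimal hitting sets as $\mathcal{F}_i$, so by transitivity the final family $\mathcal{F}'$ has the same size-$\leq k$ minimal hitting sets as $\mathcal{F}$, as required. The process terminates after at most $|\mathcal{F}|$ iterations, at which point $|\mathcal{F}'|\leq d!\cdot(k+1)^d$.

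For the running time, the main obstacle is implementing the sunflower search efficiently so as to match the claimed $\Oh(|\mathcal{F}|\cdot(k+|\mathcal{F}|))$ bound. A direct implementation of the inductive proof of the Sunflower Lemma---choosing an element contained in many sets, recursing inside those sets, and otherwise taking a maximal disjoint subfamily---yields a sunflower in time $\Oh(|\mathcal{F}|)$ using appropriate representations of set membership, plus an additional $\Oh(k)$ to identify which petal to drop. Amortizing over the at most $|\mathcal{F}|$ iterations gives the required total bound $\Oh(|\mathcal{F}|\cdot(k+|\mathcal{F}|))$, completing the proof.
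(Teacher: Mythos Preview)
The paper does not actually prove this lemma; it is quoted verbatim as Lemma~3.2 of~\cite{FominSV13} and used as a black box in the proof of Theorem~\ref{thm:cvd-kernel}. So there is no ``paper's own proof'' to compare against here.

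That said, your argument is the standard one and is correct. A small terminological slip: the relevant combinatorial tool is the \emph{Erd\H{o}s--Rado} Sunflower Lemma, not Erd\H{o}s--Ko--Rado (the latter is the theorem on intersecting families). Your core reduction step---that in a sunflower with $k+2$ petals and core $Y$, any size-$\leq k$ hitting set of the remaining $k+1$ petals must meet $Y$ and hence also the removed petal---is exactly right, and your lifting to \emph{minimal} hitting sets via ``every proper subset of $Z$ still has size $\leq k$'' is clean and correct. The iteration and termination are fine. The running-time sketch is a bit hand-wavy (the $\Oh(|\mathcal{F}|)$ per sunflower extraction relies on $d$ being a fixed constant and on careful bookkeeping across the $d$ levels of the recursive partition), but it is in line with what the cited reference does.
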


\begin{theorem}\label{thm:cvd-kernel}
Let $c$ be a fixed integer.
There exists an algorithm that given an instance $(D,k)$ of {\sc $c$-Cutwidth Vertex Deletion} where $D$ is a semi-complete digraph on $n$ vertices, runs in time $n^{\Oh(c^2)}$ and 
returns an equivalent instance $(D',k)$ of {\sc $c$-Cutwidth Vertex Deletion} where $D'$ is an induced subdigraph of $D$ on at most $k^{\Oh(c^2)}$ vertices.
\end{theorem}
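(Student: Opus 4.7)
The plan is to reduce the problem to a hitting set instance via $(c+1)$-cutwidth-minimal induced subdigraphs, then apply the Sunflower Lemma (Lemma~\ref{lem:sunflower}) to shrink the family of obstructions, and finally take the kernel as the induced subdigraph on the union of the surviving obstructions. The key pivot is the following equivalence: for a semi-complete digraph $H$, we have $\cw(H)\leq c$ if and only if $H$ contains no $(c+1)$-cutwidth-minimal induced subdigraph. Indeed, if $\cw(H)>c$, then iteratively removing vertices while preserving cutwidth at least $c+1$ yields a $(c+1)$-cutwidth-minimal induced subdigraph. Thus a set $Z\subseteq V(D)$ is a deletion set to cutwidth at most $c$ if and only if $Z$ hits the vertex set of every $(c+1)$-cutwidth-minimal induced subdigraph of $D$.

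First, I would enumerate the family $\mathcal{F}$ consisting of the vertex sets of all $(c+1)$-cutwidth-minimal induced subdigraphs of $D$. By Theorem~\ref{thm:obstructions-semi}, each such vertex set has size at most $d:=24(c+1)^2+1$, so enumeration amounts to inspecting each of the $\Oh(n^d)$ subsets of $V(D)$ of size at most $d$ and checking $(c+1)$-cutwidth-minimality, which for fixed $c$ takes constant time per subset (one can, for instance, precompute the cutwidths of all semi-complete digraphs on at most $d$ vertices). This step runs in time $n^{\Oh(c^2)}$ and produces a family $\mathcal{F}$ of sets of size at most $d$.

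Next, I would apply Lemma~\ref{lem:sunflower} to $\mathcal{F}$ with the given parameter $k$ to obtain a subfamily $\mathcal{F}'\subseteq \mathcal{F}$ of size at most $d!\cdot (k+1)^d=k^{\Oh(c^2)}$ such that a set of size at most $k$ is a minimal hitting set of $\mathcal{F}$ if and only if it is a minimal hitting set of $\mathcal{F}'$. Define $V'=\bigcup_{F\in \mathcal{F}'} F$ and $D'=D[V']$; then $|V(D')|\leq |\mathcal{F}'|\cdot d = k^{\Oh(c^2)}$, as required. The output instance is $(D',k)$.

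It remains to verify equivalence of $(D,k)$ and $(D',k)$. For the forward direction, if $Z$ is a deletion set for $D$ of size at most $k$, then $Z\cap V'$ is a set of size at most $k$ and $D'-(Z\cap V')$ is an induced subdigraph of $D-Z$, whose cutwidth is at most $c$. For the backward direction, suppose $Z'\subseteq V(D')$ has size at most $k$ and $\cw(D'-Z')\leq c$. By the pivotal equivalence, $Z'$ hits every $(c+1)$-cutwidth-minimal induced subdigraph of $D'$, which includes every set $F\in \mathcal{F}'$ (since each $F\in\mathcal{F}'$ satisfies $F\subseteq V'$ and $D[F]=D'[F]$ is $(c+1)$-cutwidth-minimal). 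Hence $Z'$ is a hitting set of $\mathcal{F}'$; by shrinking to a minimal hitting set $Z''\subseteq Z'$ of $\mathcal{F}'$, the Sunflower Lemma guarantees that $Z''$ is also a minimal hitting set of $\mathcal{F}$, so $Z''$ hits every $(c+1)$-cutwidth-minimal induced subdigraph of $D$, which by the pivotal equivalence means $\cw(D-Z'')\leq c$. As $|Z''|\leq k$, the instance $(D,k)$ is a yes-instance. The only subtle point in the whole argument is the asymmetry between ``hitting set'' and ``minimal hitting set'' in Lemma~\ref{lem:sunflower}, which is handled by passing to a minimal subset of $Z'$; beyond that, the proof is a direct assembly of Theorem~\ref{thm:obstructions-semi} and Lemma~\ref{lem:sunflower}.
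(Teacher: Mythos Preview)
Your proposal is correct and follows essentially the same approach as the paper: enumerate all $(c+1)$-cutwidth-minimal induced subdigraphs using the size bound from Theorem~\ref{thm:obstructions-semi}, apply the Sunflower Lemma (Lemma~\ref{lem:sunflower}) to prune the family, take $D'$ as the induced subdigraph on the union of the surviving sets, and argue equivalence via the passage to a minimal hitting set. The paper's proof differs only in cosmetic details (it invokes the Fomin--Pilipczuk algorithm to test cutwidth of small subsets rather than precomputing, and phrases the forward direction slightly differently), but the structure and all key steps are the same.
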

\begin{proof}
Let $d=\Oh(c^2)$ be the upper bound on the number of vertices of any $(c+1)$-cutwidth-minimal semi-complete digraph, given by Theorem~\ref{thm:obstructions-semi}.
Find all $(c+1)$-cutwidth-minimal semi-complete digraphs in $D$ by iterating through all subsets $A$ of vertices of size at most $d$, and verifying whether $D[A]$ is $(c+1)$-cutwidth-minimal.
For a given set $A$, this can be done by applying the algorithm of Fomin and Pilipczuk~\cite{FominP13,Pil13} $|X|+1$ times: to check whether $\cw(D[A])>c$ and $\cw(D[A\setminus \{a\}])\leq c$ for each $a\in A$.
Since we inspect only induced subdigraphs of size at most $d$, this takes time $2^{\Oh(\sqrt{c\log c})}\cdot n^{d}$ in total.

Let $\mathcal{F}$ be the obtained family of vertex sets of $(c+1)$-cutwidth-minimal induced subdigraphs of $D$.
It is clear that for any $X\subseteq V(G)$, we have that $\cw(D[X])\leq c$ if and only if $X$ does not fully contain any set $A$ from $\mathcal{F}$.
Conversely, a set $Z$ is a deletion set to cutwidth at most $c$ in $D$ if and only if $Z$ is a hitting set for $\mathcal{F}$.

Apply the algorithm of Lemma~\ref{lem:sunflower} to the family $\mathcal{F}$, yielding a subfamily $\mathcal{F}'$ of size at most $d!\cdot k^{d+1}$ that has exactly the same minimal hitting sets of
size at most $k$. Let $W\subseteq V(G)$ be the union of all the sets in $\mathcal{F}'$; then $|W|\leq d\cdot |\mathcal{F}'|\leq d\cdot d!\cdot k^{d+1}$.
Denote $D'=D[W]$. We claim that the instance $(D,k)$ is a yes-instance of {\sc $c$-Cutwidth Vertex Deletion} if and only if $(D',k)$ does; note that then $D'$ can be output by the algorithm.
The left-to-right implication is trivial: intersecting with $W$ maps every solution to $(D,k)$ to a non-larger solution to $(D',k)$.

For the converse implication, suppose $Z\subseteq W$ is such that $|Z|\leq k$ and $\cw(D'-Z)\leq c$.
In particular, $Z$ is a hitting set of $\mathcal{F}'$, so let $\widehat{Z}\subseteq Z$ be any minimal hitting set of $\mathcal{F}'$ contained in $Z$.
Then $|\widehat{Z}|\leq k$ and by Lemma~\ref{lem:sunflower} we infer that $\widehat{Z}$ is a minimal hitting set for $\mathcal{F}$ as well.
Since hitting sets for $\mathcal{F}$ are exactly deletions sets to cutwidth at most $c$ in $D$, we infer that $\cw(D-\widehat{Z})\leq c$ and $(D,k)$ is a yes-instance.
\end{proof}

We remark that for tournaments, we may plug in the bound of Theorem~\ref{thm:obstructions-tour} instead of the bound of Theorem~\ref{thm:obstructions-semi}.
This yields a more explicit running time of
$(2c+2\lceil \sqrt{2c}\rceil+1)^{k}\cdot n^{\Oh(1)}$ for the FPT algorithm, and reduces the approximation factor to $\Oh(c)$ and the kernel size to $k^{\Oh(c)}$.

Finally, we prove that the asymptotic running time of the algorithm of Theorem~\ref{thm:cvd-fpt} is likely to be tight: under ETH, there is no algorithm with running time subexponential in $k$.
This follows from an adaptation of the standard reduction from the \textsc{Vertex Cover} problem parameterized by solution size to \textsc{Feedback Vertex Set}.

\begin{figure}[b]
\centering
\begin{tikzpicture}
\tikzstyle{vertex}=[draw,circle,minimum size=0.5cm,inner sep=0pt]

\begin{scope}[shift={(-4,0)}]
\node {$C_{1,0}$};
\foreach \n in {1,2,3}{
\node[draw, circle] (a\n) at (\n*120-120:0.8) {};
}
\foreach \n in {1,2,3}{
\pgfmathtruncatemacro{\label}{mod(\n,3)+1}
\draw[->,>=latex] (a\n) -- (a\label);
}
\end{scope}

\draw[dotted] (-2.25,-1.2) -- (-2.25,1.3);

\node {$C_{2,1}$};
\foreach \n in {1,2,3,4,5}{
\node[vertex] (\n) at (\n*72-72:1.2) {$v_\n$};
\node[vertex] (\n') at (\n+1.25,0) {$v_\n$};
}
\draw[->,>=latex] (1) -- (5);
\draw[->,>=latex] (1') to [bend right=35] (5');
\draw[->,>=latex] (5) -- (2);
\draw[->,>=latex] (5') to [bend right=35] (2');
\foreach \n in {1,2,3,4}{
\foreach \m in {1,2}{
\pgfmathtruncatemacro{\label}{mod(\n+\m-1,5)+1}
\draw[->,>=latex] (\n) -- (\label);
\draw[->,>=latex] (\n') to [bend right=35] (\label');
}
}
\node at (4.25,-1.2) {$C_{2,1}$~sorted by $\pi_i = v_i$};
\end{tikzpicture}
\caption{The circular tournaments $C_{1,0}$ and $C_{2,1}$. 
}\label{fig:CircularTournament}
\end{figure}
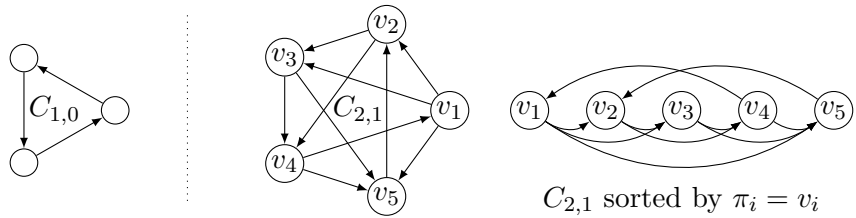

\begin{theorem}\label{thm:CVD-reduction}
For each fixed nonnegative integer $c$, the {\sc $c$-Cutwidth Vertex Deletion} problem is $\mathsf{NP}$-hard on tournaments. 
Moreover, unless ETH fails, it does not admit a $2^{o(n)}$- or a $2^{o(k)}\cdot n^{\Oh(1)}$-time algorithm in this setting.
\end{theorem}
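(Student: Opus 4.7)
The plan is to adapt the classical polynomial-time reduction from \textsc{Vertex Cover} to \textsc{Feedback Vertex Set} that replaces each edge of the instance by a triangle obstruction, generalizing the triangle to a tournament obstruction of cutwidth $c+1$. Under ETH, \textsc{Vertex Cover} admits neither a $2^{o(n+m)}$-time algorithm (via the Sparsification Lemma combined with a standard linear reduction from \textsc{3SAT}) nor a $2^{o(k)}\cdot (n+m)^{\Oh(1)}$-time algorithm. Consequently any reduction that produces a tournament $D$ of size $\Oh(n+m)$ and sets the deletion budget to $k'=\Oh(k)$ will simultaneously yield NP-hardness and both ETH lower bounds.

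Given an instance $(G,k)$ of \textsc{Vertex Cover} with $V(G)=\{v_1,\ldots,v_n\}$ and $E(G)=\{e_1,\ldots,e_m\}$, I would build $D$ as follows. For each $v_i$ introduce a vertex $\hat v_i$, and for each edge $e=\{v_i,v_j\}$ introduce a private copy $C_e$ of a fixed $(c+1)$-cutwidth-minimal tournament whose size depends only on $c$, for instance a circular tournament in the spirit of Figure~\ref{fig:CircularTournament} or an appropriate variant of the tournament of Figure~\ref{fig:minTournament}. The gadget $C_e$ is built so that two of its vertices play the role of $\hat v_i$ and $\hat v_j$, while its remaining vertices are fresh auxiliary vertices specific to $e$. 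Finally, all arcs between vertices in different gadgets (including the arcs among the $\hat v_i$'s themselves) are oriented along a single global linear ordering on $V(D)$ so that almost every cross-arc goes forward. The ordering is chosen so that the cut-structure \emph{outside} the gadgets fits the sortedness characterization of Lemma~\ref{lem:minimum-sorted}, which ensures that no spurious $(c+1)$-cutwidth configuration straddles several gadgets.

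The central equivalence to prove is that $G$ admits a vertex cover of size at most $k$ if and only if $D$ admits a deletion set to cutwidth at most $c$ of size at most $k$. For the forward direction, if $C\subseteq V(G)$ is a vertex cover, then $\{\hat v:v\in C\}$ hits every gadget $C_e$, and the global linear ordering restricted to $V(D)\setminus\{\hat v:v\in C\}$ certifies cutwidth at most $c$ by Lemma~\ref{lem:minimum-sorted}. For the backward direction, $(c+1)$-cutwidth-minimality forces any deletion set $Z$ of size at most $k$ to contain at least one vertex from every $C_e$. A swapping argument then shows that any auxiliary vertex of $C_e$ lying in $Z$ may be replaced by $\hat v_i$ or $\hat v_j$ (where $e=\{v_i,v_j\}$) without breaking the deletion property, so that one may assume $Z\subseteq\{\hat v_1,\ldots,\hat v_n\}$ and read off a vertex cover of $G$ directly.

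The principal obstacle is the simultaneous design of the gadget and of the global orientation: the gadget must be $(c+1)$-cutwidth-minimal with two identifiable interface vertices whose removal destroys the obstruction, and the global orientation must prevent the backward arcs contributed by many gadgets from combining into fresh $(c+1)$-cutwidth obstructions spanning several edges of $G$. The feasibility of both rests on a careful, sorted-like placement of gadgets analyzed via Lemma~\ref{lem:minimum-sorted}. Since each gadget has $\Oh(c)=\Oh(1)$ vertices by Theorem~\ref{thm:obstructions-tour}, the construction satisfies $|V(D)|=\Oh(n+m)$ and $k'=k$, and the hardness and ETH lower bounds for \textsc{Vertex Cover} transfer verbatim to $c$-\textsc{Cutwidth Vertex Deletion} on tournaments.
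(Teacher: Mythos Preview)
Your high-level plan (linear reduction from \textsc{Vertex Cover}, using small tournament obstructions) matches the paper's, but your gadget design is different and the difference creates a real gap. You propose one $(c{+}1)$-cutwidth-minimal tournament per \emph{edge}, with the two interface vertices $\hat v_i,\hat v_j$ shared among all gadgets containing them. You yourself name the obstacle---backward arcs from many edge gadgets piling up at a high-degree $\hat v_i$---but you do not resolve it: ``rests on a careful, sorted-like placement'' is not a construction. Concretely, after deleting a vertex cover $C$, a surviving vertex $\hat v_i$ (with $v_i\notin C$) still sits inside $\deg_G(v_i)$ edge gadgets, each of which has lost only one vertex; the within-gadget arcs incident to $\hat v_i$ from all those gadgets remain, and there is no argument that their contributions over the cut at $\hat v_i$ stay bounded by $c$. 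Lemma~\ref{lem:minimum-sorted} does not help here, since the relevant subtournaments overlap and the global indegrees of the auxiliary vertices are not controlled.

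The paper sidesteps this entirely by moving the gadgets from edges to vertices. For each $v_i$ it places a copy $C^i_{t,x}$ of a tournament with cutwidth \emph{exactly} $c$ (not $c{+}1$) together with a single distinguished vertex $v'_i$, and each edge $\{v_i,v_j\}$ of $G$ contributes just one backward arc $(v'_j,v'_i)$. The $(c{+}1)$-obstruction for an edge is then $C^i_{t,x}\cup\{v'_i,v'_j\}$, and any deletion set must meet $\{v'_i\}\cup V(C^i_{t,x})$ or $\{v'_j\}\cup V(C^j_{t,x})$, which maps directly to a vertex cover. In the forward direction, removing $\{v'_i:v_i\in C\}$ eliminates \emph{all} cross-block backward arcs, so the remaining digraph decomposes into the $C^i_{t,x}$'s (each of cutwidth $c$) with only forward arcs between them. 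This is exactly the accumulation-free structure your proposal needs but does not provide.
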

\begin{proof}
Fix a nonnegative integer $c$.
We show a reduction from {\sc Vertex Cover} parameterized by solution size to the $c$-{\sc Cutwidth Vertex Deletion} problem with a linear blow-up of the instance size, 
and we ensure that the instance produced by the reduction is a tournament.
It is well-known (see e.g.~\cite{CFK14}) that {\sc Vertex Cover} has no $2^{o(n+m)}$ algorithm under ETH, where $n$ and $m$ denote the number of vertices and edges of the input graph, respectively.
Thus, the reduction will refute, under ETH, the existence of a $2^{o(n)}$-time algorithm for $c$-{\sc Cutwidth Vertex Deletion}, 
so in particular a $2^{o(k)}\cdot n^{\Oh(1)}$-time parameterized algorithm will be refuted.

For the reduction, we will need a bounded-size tournament with cutwidth exactly $c$. To show that such a tournament exists for each $c \in \mathbb{N}$, we propose the family of {\em{circular tournaments}} (see Figure \ref{fig:CircularTournament}) defined as follows. Given two nonnegative integers $t$ and $x$, the circular tournament $C_{t,x}$ has vertex set $V_t = \{v_1,v_2,...,v_{2t+1}\}$ and arc set $E_{t,x}$ such that for all $i,j \in [1,2t+1]$ with $i<j$, we have: \begin{itemize}
\item if $j \in [i+1,2t]$, then $(v_i,v_j) \in E_{t,x}$ whenever  $j-i \leq t$, otherwise $(v_j,v_i) \in E_{t,x}$;
\item if $j = 2t+1$, then $(v_i,v_j) \in E_{t,x}$ whenever  $i \in [1,x]$ or $j-i \leq t$, otherwise $(v_j,v_i) \in E_{t,x}$.
\end{itemize}

Let $t$ be the smallest integer such that $\frac {t(t+1)}2 \geq c$ and $x = \frac {t(t+1)}2 - c$; then $x<t$ by the minimality of $t$. 
It can be checked that then $(v_1,v_2,...,v_{2t+1})$ is a sorted vertex ordering of $C_{t,x}$, hence it is minimum by Lemma~\ref{lem:minimum-sorted}.
This implies that $\cw(C_{t,x})$ is equal to the width of this ordering, which is $\frac {t(t+1)}2 - x = c$.

Let $(G,k)$ be a parameterized instance of {\sc Vertex Cover}, we construct the parameterized instance $(T(G),k)$ of $c$-{\sc Cutwidth Vertex Deletion} as follows. Without loss of generality, we may assume that the vertices in $V$ are ordered, that is $V = \{v_1,v_2,...,v_n\}$. Then $T(G)$ is constructed as follows:\begin{itemize}[nosep]

\item For each $v_i \in V(G)$, add a copy $C^i_{t,x}$ of $C_{t,x}$ to $T(G)$ plus a fresh vertex $v'_i$; denote $V^i=V(C^i_{t,x})$. Moreover, for each vertex $u \in V^i$, add an arc $(v'_i,u)$ to $T(G)$, so that $v'_i$ has all the vertices of $C_{t,x}^i$ as outneighbors. 

\item For each $1 \leq i < j \leq n$, for every $u_i \in V^i$ and $u_j \in V^j$, add $(v'_i, u_j)$, $(u_i, u_j)$ and $(u_i, v'_j)$ to $T(G)$. Also, if $(v_i,v_j) \in E(G)$ then add $(v'_j,v'_i)$ to $T(G)$, and otherwise add $(v'_i,v'_j)$.
\end{itemize}

Clearly, $T(G)$ is a tournament with a linear number of vertices, which can be computed in polynomial time. We prove that $(G,k)$ is a yes-instance of {\sc Vertex Cover} if and only if $(T(G),k)$ is a yes-instance of $c$-{\sc Cutwidth Vertex Deletion}.

On one hand, let $W'$ be a solution to the instance $(T(G),k)$ of $c$-{\sc Cutwidth Vertex Deletion}. We construct the solution $W$ of $(G,k)$ for {\sc Vertex Cover} as follows: if there exists $v \in (\{v'_i\} \cup V^i) \cap W'$ then put $v_i$ into $W$. Clearly, $|W'|\leq |W|\leq k$. 
For the sake of a contradiction, suppose that $W$ is not a vertex cover of $G$. Let $v_iv_j$ be an edge of $G$ which is not covered by $W$.
Then, the tournament $T(G)-W'$ still contains the induced subtournament $T_{i,j} = T[V^i \cup \{v'_i, v'_j \}]$.
By construction, we have $d^-_{T_{i,j}}(v'_i) = 1$, $d^-_{T_{i,j}}(v'_j) = |V^i|$ and for every $u \in V^i$, we have $d^-_{T_{i,j}}(u) = 1+d^-_{C^i_{t,x}}(u) \leq |V^i|$. Therefore, there exists a sorted ordering $\pi$ of $T_{i,j}$ such that $\onesmt{\pi}{1} = v'_i$ and $\onesmt{\pi}{|V(T_{i,j})|} = v'_j$. Also, since for every $u \in V^i_t$ we have $d^-_{T_{i,j}}(u) = 1+d^-_{C^i_{t,x}}(u)$, the vertices of $V^i$ are sorted in $\pi$ as in a sorted ordering of $C^i_{t,x}$, so due to the edge $(v_j',v_i')$ we have that $\cw(T_{i,j},\pi) = 1 + \cw(C^i_{t,x})$.
By Lemma \ref{lem:minimum-sorted}, we infer a contradiction: $$\cw(T(G)-W') \geq \cw(T_{i,j}) = \cw(T_{i,j},\pi) = 1 + \cw(C^i_{t,x}) = 1+c.$$

On the other hand, let $W$ be a solution of $(G,k)$ for the {\sc Vertex Cover} problem. We naturally infer a solution $W'$ to the instance $(T(G),k)$ of $c$-{\sc Cutwidth Vertex Deletion} as follows: whenever $v_i \in W$, we put $v'_i$ into $W'$. 
By construction, for each $1 \leq i < j \leq n$, if there exists an arc from $u_j \in \{v'_j\} \cup V^j$ to $u_i \in \{v'_i\} \cup V^i$ in $T(G)$, then $u_j = v'_j$, $u_i = v'_i$ and $v_iv_j \in E(G)$. We deduce that either $v_i \in W$ and $v'_i \in W'$, or $v_j \in W$ and $v'_j \in W'$. In both cases, the arc $(v'_j,v'_i)$ is removed when deleting $W'$ from $T(G)$. 
Therefore, the remaining digraph $T(G)-W'$ contains no arc from $(\{v'_j\} \cup V^j)\setminus W'$ to $(\{v'_i\} \cup V^i)\setminus W'$, and it is easy to see that the digraph $T(G)-W'$ breaks into multiple strongly connected components, each either consisting of a single vertex $v'_i$, or being one of the gadgets $C^i_{t,x}$. Each of these strongly connected components has cutwidth at most $c$, so it follows that $\cw(T(G)-W')\leq c$.
\end{proof}

\section{Lower bounds}\label{sec:lower-bound}


In this section, we prove 
almost tight lower bounds for the complexity of computing the cutwidth and the \ola-cost of a semi-complete digraph. 
Precisely, we prove the following result.

\begin{theorem}\label{thm:lower-bound}
For semi-complete digraphs, both computing the cutwidth and computing the \ola{}-cost are $\mathsf{NP}$-hard problems.
Moreover, unless the Exponential Time Hypothesis fails:
\begin{itemize}[nosep]
\item the cutwidth cannot be computed in time $2^{o(n)}$ nor in time $2^{o(\sqrt{k})}\cdot n^{\Oh(1)}$; and
\item the \ola{}-cost cannot be computed in time $2^{o(n)}$ nor in time $2^{o(k^{1/3})}\cdot n^{\Oh(1)}$.
\end{itemize}
Here, $n$ is the number of vertices of the input semi-complete digraph, and $k$ is the target width/cost.
\end{theorem}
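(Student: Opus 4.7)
The plan is to prove all four claims simultaneously via a single reduction from \textsc{NAE-3SAT} to the problem of computing the cutwidth (and, with the same construction, the \ola-cost) of a semi-complete digraph. By Corollary~\ref{cor:ETH}, it suffices to build, from any \textsc{NAE-3SAT} instance $\varphi$ with $n$ variables and $m$ clauses, a semi-complete digraph $D$ on $N=\Oh(m)$ vertices together with two thresholds $k_{\cw}=\Theta(m^2)$ and $k_{\ola}=\Theta(m^3)$, such that $\varphi$ is satisfiable iff $\cw(D)\leq k_{\cw}$, and iff $\ola(D)\leq k_{\ola}$. These scalings are exactly what is needed: $2^{o(\sqrt{k_{\cw}})}\cdot N^{\Oh(1)}$ and $2^{o(k_{\ola}^{1/3})}\cdot N^{\Oh(1)}$ would both yield a $2^{o(m)}$ algorithm for \textsc{NAE-3SAT}, and a $2^{o(N)}$ algorithm would as well. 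Plain $\mathsf{NP}$-hardness falls out of the same polynomial-time reduction.

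The building block I plan to use is a \emph{pair of symmetric arcs} $(u,v),(v,u)$ between two vertices at positions $i<j$ in an ordering $\pi$: such a pair contributes exactly $+1$ to $\cutvector{D}{\pi}(\ell)$ for every $\ell\in[i,j-1]$ and exactly $j-i$ to $\ola(D,\pi)$. Since the same gadget is simultaneously ``wide'' under the max-norm and ``heavy'' under the sum-norm, a single construction can be analyzed for both parameters. The digraph $D$ will consist of (i) a rigid \emph{skeleton}, a transitive tournament on $\Theta(m)$ vertices organized into $\Theta(m)$ consecutive blocks of size $\Theta(1)$ that are glued together using a globally sorted in-degree pattern (as in Lemma~\ref{lem:minimum-sorted}) so that any sub-$2$-approximate ordering of the skeleton alone is essentially unique; (ii) two \emph{variable slots} per variable $x_i$, located symmetrically inside designated blocks, so that the choice of which slot each variable-vertex occupies encodes a truth assignment; (iii) for each clause $C=(\ell_1,\ell_2,\ell_3)$, a \emph{clause gadget} consisting of $\Theta(m)$ parallel symmetric-arc pairs whose endpoints are attached to the three variable-vertices of $C$ in such a way that if and only if all three literals fall on the same side of a critical cut (i.e.\ the clause is \emph{not} satisfied in the NAE sense), an extra $\Theta(m)$ symmetric pairs of span $\Theta(m)$ are forced to cross that cut.

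To analyze the construction I would proceed in three steps. First, I would prove a \emph{rigidity lemma}: any ordering $\pi$ of $D$ with $\cw(D,\pi)\leq 2k_{\cw}$ (resp.\ $\ola(D,\pi)\leq 2k_{\ola}$) must place the skeleton vertices in the canonical block order, up to swaps inside the designated variable slots. This will follow from a pigeonhole argument on Lemma~\ref{lem:dispersed}-style displacement costs combined with Lemma~\ref{lem:minimum-sorted} applied to the relaxation of the skeleton-only part. Second, given rigidity, every cheap ordering of $D$ reads off an assignment $\alpha$ to $\varphi$; I would compute $\cw(D,\pi)$ and $\ola(D,\pi)$ in closed form as the sum of a ``background'' contribution (from the skeleton and satisfied clauses), which is $\Theta(m^2)$ and $\Theta(m^3)$ respectively, plus a per-unsatisfied-clause penalty of $\Theta(m)$ and $\Theta(m^2)$ respectively, stemming from the extra forced crossings. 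Choosing $k_{\cw}$ and $k_{\ola}$ to sit strictly between the satisfied and ``one-unsatisfied-clause'' values yields the desired equivalence. Third, I would argue the converse: given a satisfying NAE-assignment, the ordering that places false variables before true ones (inside their slots) achieves exactly the satisfied background cost.

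The main obstacle will be the rigidity lemma, and in particular making it work simultaneously for both parameters: the construction must punish misplacement harshly enough that under the max-norm no ordering can save $\Theta(m)$ at a single cut by globally permuting the skeleton, while under the sum-norm no ordering can save $\Theta(m^2)$ by redistributing the same amount of cut mass. The delicate point is that these two penalties scale differently in $m$, so the skeleton-block sizes, the number of symmetric pairs per clause gadget, and the span of those pairs all have to be tuned so that a single digraph $D$ simultaneously lands on the correct side of both thresholds. I expect to dedicate most of the technical work to this combined calibration, whereas the reductions themselves and the $\mathsf{NP}$-hardness follow for free once $D$ is defined.
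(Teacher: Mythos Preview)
Your high-level plan is sound: reduce from \textsc{NAE-3SAT} so that the output semi-complete digraph has $\Theta(m)$ vertices and target values $k_{\cw}=\Theta(m^2)$, $k_{\ola}=\Theta(m^3)$. That calibration is exactly what the paper achieves, and it is what drives all four lower bounds.

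However, your proposed construction and analysis are different from the paper's and, as sketched, contain a genuine gap. The paper does \emph{not} build the semi-complete instance directly. Instead it first constructs a sparse \emph{basic} digraph $D(\varphi)$ on $14m$ vertices and $24m$ arcs (variable gadgets are directed cycles, clause gadgets are two small $4$-vertex tournaments wired to the cycles), and proves a pointwise upper bound $\cutvector{D(\varphi)}{\pi}\preceq\lambda_m$ on the cut vector for \emph{every} ordering, with equality attainable iff $\varphi$ is NAE-satisfiable. Only then does it take the \emph{complement} $\bar D(\varphi)$, which is automatically semi-complete, and uses the identity $\cutvector{D}{\pi}(i)+\cutvector{\bar D}{\pi}(i)=i(n-i)$ to turn the pointwise upper bound on $D(\varphi)$ into a pointwise \emph{lower} bound $\bar\lambda_m\preceq\cutvector{\bar D(\varphi)}{\pi}$. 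This single cut-vector inequality then simultaneously yields the cutwidth threshold (via the maximum of $\bar\lambda_m$) and the \ola{} threshold (via its sum), with no rigidity argument needed at all.

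Your route tries to enforce structure directly in the dense semi-complete setting via a rigidity lemma, and the tools you invoke do not apply in that regime. Lemma~\ref{lem:dispersed} requires two vertices to be more than $3c$ positions apart in an ordering of width $c$; here $c=\Theta(m^2)$ while there are only $\Theta(m)$ positions, so the hypothesis is vacuous. Lemma~\ref{lem:minimum-sorted} controls minimum orderings of the \emph{relaxation} of the whole digraph, but once you add $\Theta(m)$ symmetric-arc pairs per clause, the indegree pattern of the skeleton is swamped, and a minimum ordering of the relaxation need not respect your intended block structure. Your clause gadget is also underspecified: symmetric-arc pairs have two endpoints, so it is unclear how a pair can be ``attached to three variable-vertices'' so that a $\Theta(m)$ penalty fires precisely when all three literals lie on one side of a cut. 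Finally, the calibration problem you flag---making a single gadget penalize correctly under both $\max$ and $\sum$---is exactly what the paper's pointwise cut-vector bound solves for free, whereas in your framework it remains an open design constraint. The complementation trick is the missing idea; adopting it dissolves the rigidity obstacle entirely.
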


The proof of Theorem~\ref{thm:lower-bound} spans the whole remainder of this section.
We start our reduction from an instance of the {\sc{NAE-3SAT}} problem, which was defined in Section~\ref{sec:prelims} and for which a complexity lower bound under ETH is given by Corollary~\ref{cor:ETH}.

Let us introduce some notation. For a formula $\varphi$ in CNF, the variable and clause sets of $\varphi$ are denoted by $\vars(\varphi)$ and $\cls(\varphi)$, respectively.
A variable assignment $\alpha\colon \vars(\varphi)\to \{\bot,\top\}$ {\em{NAE-satisfies}} $\varphi$ if every clause of $\varphi$ has at least one, but not all literals satisfied.
Formula $\varphi$ is {\em{NAE-satisfiable}} if there is a variable assignment $\alpha$ that NAE-satisfies it; equivalently, both $\alpha$ and its negation $\neg \alpha$ satisfy $\varphi$.
A digraph is called {\em{basic}} if it is simple and has no pair of symmetric arcs.
For an integer $m > 0$, let $\lambda_m$ be the following function $(14m+1)$-tuple:
$$\lambda_m(i)=
\left\{\begin{array}{ll}
2i,& \text{when } i \in [0,5m]\\
5m+i,& \text{when } i \in [5m+1,6m]\\
11m,& \text{when } i \in [6m+1,7m]\\
18m-i,& \text{when } i \in [7m+1,12m]\\
42m-3i,& \text{when } i \in [12m+1,14m]
\end{array}\right.
$$
The following lemma encapsulates the first, main step of our reduction.

\begin{lemma}\label{lem:Reduction}
There exists a polynomial-time algorithm that, given a 3CNF formula $\varphi$ with $m$ clauses, returns a basic digraph $D(\varphi)$ with $14m$ vertices and $24m$ edges such that:
\begin{enumerate}[nosep]
\item for every vertex ordering $\pi$, we have $\cutvector{D(\varphi)}{\pi} \preceq \lambda_m$;
\item if $\varphi$ is NAE-satisfiable, then there exists a vertex ordering $\pi$ with $\cutvector{D(\varphi)}{\pi} = \lambda_m$;
\item if there is a vertex ordering $\pi$ with $\max\{\cutvector{D(\varphi)}{\pi}\}\geq 11m$, then $\varphi$ is NAE-satisfiable.
\end{enumerate}
\end{lemma}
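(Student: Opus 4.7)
The plan is to construct $D(\varphi)$ out of one fixed-size gadget per clause, together with a small number of cross-clause arcs whose orientation depends on the literals involved. Concretely, I would associate with each clause $C$ a set of $14$ vertices arranged into several roles: a few ``left-anchor'' and ``right-anchor'' vertices that must end up near positions $0$ and $14m$ of any near-optimal ordering, together with three ``literal vertices'' encoding the three literals of $C$, which are free to be placed either to the left or to the right of the central plateau region $[6m,7m]$. The $24m$ arcs would be split into: backbone arcs inside each gadget that produce an essentially fixed contribution to every cut, and clause-linking arcs between the literal vertices of different clauses (oriented according to whether a shared variable appears positively or negatively) that realize the NAE constraint. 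The plateau height $11m$ and the precise slopes $2,1,0,-1,-3$ of $\lambda_m$ give the exact arc-count budget per layer, which determines the internal structure of the gadget.

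For property~1, I would prove $\cutvector{D(\varphi)}{\pi}(i)\le\lambda_m(i)$ for all $\pi$ and all $i$ by bucketing arcs by gadget and showing that for each position $i$ every gadget contributes to $E^i_\pi$ at most the budget allowed by $\lambda_m(i)/m$. Since $\lambda_m$ is piecewise linear, it suffices to check this on each of the five intervals separately, with the two extremal slopes $2$ and $-3$ corresponding to arcs internal to the left/right parts of a gadget and the plateau slope $0$ corresponding to the global cross-clause arcs; in each case the bound reduces to a short counting argument.

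For property~2, given a NAE-satisfying assignment $\alpha$, I would construct $\pi$ by placing, inside each clause gadget, the literal-vertices of $C$ that $\alpha$ evaluates to $\top$ on the left of the plateau and those evaluated to $\bot$ on the right (the NAE condition ensures both sides are nonempty for every clause); the anchor vertices are placed at fixed canonical positions. A direct cut-by-cut verification then shows $\cutvector{D(\varphi)}{\pi}(i)=\lambda_m(i)$ for every $i$, using that, because $\alpha$ is consistent across clauses, each clause-linking arc becomes forward (contributing exactly $1$ on the plateau and $0$ outside). Conversely for property~3, if some $\pi$ achieves $\cutvector{D(\varphi)}{\pi}(i)\ge 11m$, then necessarily $i\in[6m,7m]$ and all per-gadget contributions to $E^i_\pi$ must saturate the budget forced by property~1. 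Saturating the plateau budget in every gadget forces each clause to split its literal vertices strictly between positions $\le i$ and $>i$; reading off the side of each literal-vertex then defines an assignment $\alpha$, and the saturation of cross-clause arcs guarantees $\alpha$ is consistent and NAE-satisfies $\varphi$.

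The main obstacle, as I see it, is calibrating the gadget so that three conflicting demands are met at once: (i) the total arc count is exactly $24m$ and the digraph is basic (no pair of symmetric arcs), (ii) the upper bound of property~1 is sharp on every position, not merely at the plateau, and (iii) the only way to saturate the plateau value $11m$ uses a strict left/right split of each clause's literals, so that inconsistent or uniform assignments are ruled out. The calibration of the exact slopes of $\lambda_m$ (in particular, why the right descent has slope~$3$ while the left ascent has slope~$2$) is almost certainly dictated by an asymmetry between the ``opening'' and ``closing'' roles of the anchor vertices, and making this asymmetry compatible with the basic-digraph constraint will require the most careful case analysis in the proof.
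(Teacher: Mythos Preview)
Your plan diverges from the paper's construction in a way that creates a real obstacle. You propose one $14$-vertex gadget per clause, with consistency of the variable assignment enforced by ``clause-linking arcs between the literal vertices of different clauses''. But the number of such consistency arcs depends on how variables are distributed among clauses (a variable with $p$ occurrences needs on the order of $p$ links), so you cannot simultaneously keep the gadget fixed-size and hit exactly $24m$ arcs independently of the occurrence pattern. More seriously, a single arc between two literal vertices cannot force them to land on the same side of a cut; it only penalises one of the two relative orders. Turning such soft penalties into the hard consistency required for property~3, while keeping the bound of property~1 tight at \emph{every} position, is precisely the calibration you flag as the main obstacle---and with this architecture it does not obviously close.

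The paper sidesteps all of this by introducing \emph{separate variable gadgets}. For each variable $x$ with $p_x$ occurrences it builds a directed cycle $G_x$ of length $2p_x$ on vertices $\bot^x_1,\top^x_1,\ldots,\bot^x_{p_x},\top^x_{p_x}$; since $\sum_x p_x=3m$, these cycles contribute exactly $6m$ vertices and $6m$ arcs regardless of how occurrences are distributed. Each clause $C$ then gets two small $4$-vertex gadgets $G^C_\top$ and $G^C_\bot$ (a source vertex with arcs to a directed triangle), together $8m$ vertices and $12m$ arcs, plus one arc from each triangle vertex into the appropriate variable cycle ($6m$ arcs). Consistency is now structural rather than calibrated: the only way a cut $(A,V\setminus A)$ can collect all $p_x$ arcs from the cycle $G_x$ is for $A$ to contain exactly every second vertex, i.e.\ all the $\top^x_i$ or all the $\bot^x_i$---which \emph{is} a well-defined value for $x$. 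The NAE condition is handled locally by the clause gadgets: saturating their contribution forces one or two, but not zero or three, of the clause-to-variable arcs to cross the cut. Finally, for property~1 the paper does not bucket per gadget; it uses global facts (every indegree is at most~$2$, every outdegree is at most~$3$, and independent sets have at most $5m$ vertices) to bound $\cutvector{D(\varphi)}{\pi}(i)$ on each of the five linear pieces of $\lambda_m$, which is considerably cleaner than a per-gadget budget argument.
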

\begin{proof}
Without loss of generality, we may assume that each clause of $\varphi$ contains exactly $3$ literals, by repeating some literal if necessary. 
Then, we may also assume that every variable of $\vars(\varphi)$ appears at least twice, because a variable that appears only once can always be set in order 
that the clause in which it appears is NAE-satisfied, and thus such a variable and its associated clause may be safely removed. 
For every variable $x\in \vars(\varphi)$, let $p_x$ be the number of occurrences of $x$ in the clauses of $\varphi$; 
hence $3m=\underset{x\in \vars(\varphi)}\sum ~p_x$ and $p_x\geq 2$ for each $x\in \vars(\varphi)$. 
We finally assume the clauses and literals are ordered, so we may say that a literal $\ell_x$ is the $i_x$th occurrence of variable $x$ in the clauses of $\varphi$, with $i_x \in [1,p_x]$.

We now describe the construction of $D(\varphi)$; see Figure~\ref{fig:construction}. 
For every variable $x\in \vars(\varphi)$ construct a {\em{variable gadget}} $G_x$, which is a directed cycle of length $2p_x$ with vertices named as follows:
$\bot^x_1 \to \top^x_1 \to \bot^x_2 \to \top^x_2 \to \ldots \to \bot^x_{p_x} \to \top^x_{p_x}\to \bot^x_1$.
Note that this cycle has no symmetric arcs since $p_x > 1$.

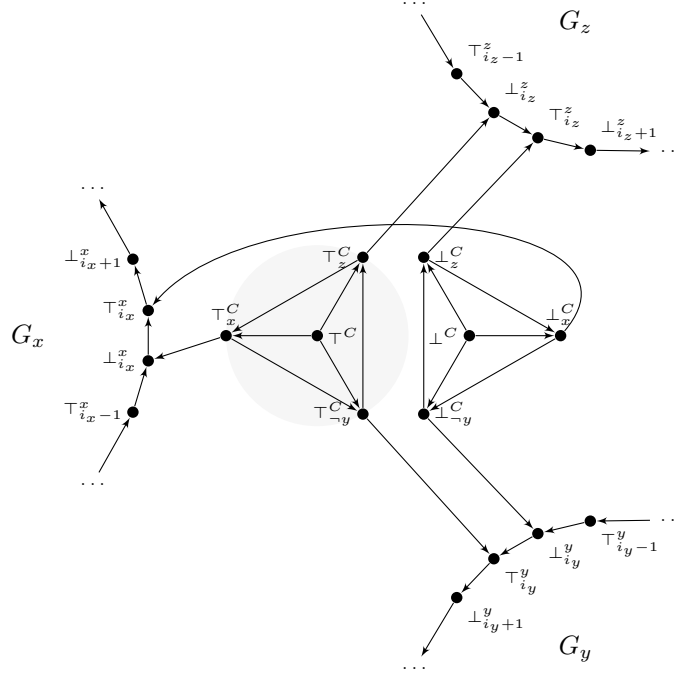
\begin{figure}[t]
\centering
\begin{tikzpicture}[scale=0.8]
   \tikzstyle{vertex}=[circle,fill=black,minimum size=0.15cm,inner sep=0pt]
   \fill[gray!7] (-1.25,0) circle (1.5);
   \foreach \n/\r/\p in {x/3/left,y/-1/below right,z/1/above right} {
     \begin{scope}[rotate={\r*60}, shift={(7,0)}]
      \node (c\n) at (-1,0) {{\small{$G_\n$}}};
      \node (u0\n) at (-50:-3.2) {{\tiny{$\ldots$}}};
      \node[vertex] (u1\n) at (-25:-3) {};
      \draw[\p] (u1\n) node {{\tiny{$\top^\n_{i_\n-1}$}}};
      \node[vertex] (u2\n) at (-8:-3) {};
      \draw[\p] (u2\n) node {{\tiny{$\bot^\n_{i_\n}$}}};
      \node[vertex] (u3\n) at (8:-3) {};
      \draw[\p] (u3\n) node {{\tiny{$\top^\n_{i_\n}$}}};
      \node[vertex] (u4\n) at (25:-3) {};
      \draw[\p] (u4\n) node {{\tiny{$\bot^\n_{i_\n+1}$}}};
      \node (u5\n) at (50:-3.2) {{\tiny{$\ldots$}}};
      \foreach \x/\y in {0/1,1/2,2/3,3/4,4/5} { 
      	\draw[thin,->,arrows={-latex'}] (u\x\n) -- (u\y\n);
      }
     \end{scope}
  }
  \node[vertex] (vx) at (-2.75,0) {};
  \draw[above] (vx) node {{\tiny{$\top^C_x$}}};
  \node[vertex] (vy) at (-.5,-1.3) {};
  \draw[left] (vy) node {{\tiny{$\top^C_{\neg y}$}}};
  \node[vertex] (vz) at (-.5,1.3) {};
  \draw[left] (vz) node {{\tiny{$\top^C_z$}}};
  \node[vertex] (w) at (-1.25,0) {};
  \draw[right] (w) node {{\tiny{$\top^C$}}};
  \node[vertex] (ux) at (2.75,0) {};
  \draw[above] (ux) node {{\tiny{$\bot^C_x$}}};
  \node[vertex] (uy) at (0.5,-1.3) {};
  \draw[right] (uy) node {{\tiny{$\bot^C_{\neg y}$}}};
  \node[vertex] (uz) at (0.5,1.3) {};
  \draw[right] (uz) node {{\tiny{$\bot^C_z$}}};
  \node[vertex] (uw) at (1.25,0) {};
  \draw[left] (uw) node {{\tiny{$\bot^C$}}};
  \foreach \x/\y in {w/vx,w/vy,w/vz,vx/vy,vy/vz,vz/vx,vx/u2x,vy/u3y,vz/u2z,uw/ux,uw/uy,uw/uz,ux/uy,uy/uz,uz/ux,uy/u2y,uz/u3z} {
	\draw[thin,->,arrows={-latex'}] (\x) -- (\y);
  }
  \draw[thin,->,arrows={-latex'}] (ux) to[out=50,in=50] (u3x);
  
\end{tikzpicture}
\caption{Part of the construction of $D(\varphi)$. 
The $\top$-clause gadget $G^C_\top$ (subgraph in the shaded circle) and the $\bot$-clause gadget $G^C_\bot$ for $C=x\vee \neg y\vee z$, 
together with neighboring parts of variable gadgets $G_x,G_y,G_z$.}\label{fig:construction}
\end{figure}

Then, for every clause $C\in \cls(\varphi)$, where $C=\ell_x\vee \ell_y\vee \ell_z$ for literals of variables $x,y,z\in \vars(\varphi)$, respectively, 
construct the following {\em{$\top$-clause gadget}} $G^C_\top$. Introduce a vertex $\top^C$ and a set of vertices $V^C_\top = \{\top^C_{\ell_x},\top^C_{\ell_y},\top^C_{\ell_z}\}$ 
together with the following arcs: 
\begin{itemize}[nosep]
\item A directed $3$-cycle $(\top^C_{\ell_x},\top^C_{\ell_y}),(\top^C_{\ell_y},\top^C_{\ell_z}),(\top^C_{\ell_z},\top^C_{\ell_x})$.
\item A set $\{(\top^C,\top^C_{\ell_x}),(\top^C,\top^C_{\ell_y}),(\top^C,\top^C_{\ell_z})\}$ of arcs from $\top^C$ to the vertices of $V_T^C$.
\end{itemize}
Similarly, construct the \emph{$\bot$-clause gadget} $G^C_\bot$, which is isomorphic to $G^C_\top$, but with vertices named $\bot$. 
Gadgets $G^C_\top$ and $G^C_\bot$ will differ in how we connect them with the rest of the graph.

Intuitively, the variable assignment $\alpha$, intended to NAE-satisfy $\varphi$, is encoded by choosing, in each variable gadget $G_x$, which vertices are placed in the first half of $\pi$, and which are placed in the second.
We use the gadget $G^C_\top$ to verify that $\alpha$ satisfies $C$,
whereas the gadget $G^C_\bot$ verifies that $\neg \alpha$ also satisfies $C$. 
For this purpose, connect the clause gadgets to variable gadgets as follows. Suppose $\ell_x \in C$ is the $i_x$th occurrence of $x$. If $\ell_x=x$ then add two arcs $(\top^C_{\ell_x},\bot^x_{i_x})$ and $(\bot^C_{\ell_x},\top^x_{i_x})$, and if $\ell_x=\neg x$ then add two arcs $(\top^C_{\ell_x},\top^x_{i_x})$ and $(\bot^C_{\ell_x},\bot^x_{i_x})$. 

This concludes the construction of $D(\varphi)$. Clearly $D(\varphi)$ is basic, and a straightforward verification using the equality $3m=\underset{x\in \vars(\varphi)}\sum ~p_x$ shows that conditions $|V(D(\varphi))| = 14m$ and $|E(D(\varphi))| = 24m$ hold as well.
We are left with verifying the three condition from the lemma statement. The following claim about the maximum size of a cut in $D(\varphi)$ will be useful.

\begin{claim}\label{cl:cut-any}
For any vertex subset $A\subseteq V$, we have that $|E(A,V\setminus A)|\leq 11m$.
\end{claim}
\begin{proof}
Denote $F=E(A,V\setminus A)$.
First, consider any variable $x\in \vars(\varphi)$. 
Since $G_x$ is a directed cycle of length $2p_x$, it can easily be seen that $|F \cap E(G_x)|\leq p_x$ and the equality holds if and only if $A$ contains every second vertex of the cycle $G_x$. 
Second, consider any clause $C=\ell_x\vee \ell_y\vee \ell_z \in \cls(\varphi)$. 
Let $R^C_\top$ be the set of three arcs connecting $G^C$ with the variable gadgets $G_x$, $G_y$, and $G_z$.
Since $\top^C$ has no incoming arcs, we can assume without loss of generality that $\top^C\in A$, as putting $\top^C$ into $A$ can only increase $|F|=|E(A,V\setminus A)|$. 
We now distinguish cases depending on the cardinality of $A \cap V^C_\top=A\cap \{\top^C_{\ell_x},\top^C_{\ell_y},\top^C_{\ell_z}\}$. 
The following implications follow from a straightforward analysis of the situation in $G^C_\top$ and on incident arcs.
\begin{itemize}
\item If $|A \cap V^C_\top|=0$ then $|F \cap R^C_\top|=0$ and $|F \cap E(G^C_\top)|=3$.
\item If $|A \cap V^C_\top|=1$ then $|F \cap R^C_\top|\leq 1$ and $|F \cap E(G^C_\top)|=3$.
\item If $|A \cap V^C_\top|=2$ then $|F \cap R^C_\top|\leq 2$ and $|F \cap E(G^C_\top)|=2$.
\item If $|A \cap V^C_\top|=3$ then $|F \cap R^C_\top|\leq 3$ and $|F \cap E(G^C_\top)|=0$.
\end{itemize}
In all the cases, we conclude that $|F \cap (E(G^C_\top)\cup R^C_\top)|\leq 4$; note that the equality can hold only in the two middle ones. The same analysis applies to the $\bot$-clause gadgets, yielding $|F \cap (E(G^C_\bot)\cup R^C_\bot)|\leq 4$, where $R^C_\bot$ is defined analogously. Since the set family 
$$\{E(G_x)\colon x\in \vars(\varphi)\}\cup \{(E(G^C_\top)\cup R^C_\top) \cup (E(G^C_\bot)\cup R^C_\bot) \colon C\in \cls(\varphi)\}$$
forms a partition of the edge set of $D(\varphi)$, we get:
\begin{align*}
|F| & =\sum_{x\in \vars(\varphi)} |F \cap E(G_x)|+\sum_{C\in \cls(\varphi)} |F \cap (E(G^C_\top)\cup R^C_\top \cup E(G^C_\bot)\cup R^C_\bot)|\\
& \leq \sum_{x\in \vars(\varphi)} p_x+\sum_{C\in \cls(\varphi)} 8=11m,
\end{align*}
which finishes the proof of the claim.
\cqed\end{proof}

We now verify the first condition from the lemma statement.

\begin{claim}\label{cl:pr1}
For any vertex ordering $\pi$ of $D(\varphi)$ and each position $i \in [0,14m]$, we have 
\begin{equation}\label{eq:main}
\cutvector{D(\varphi)}{\pi}(i) \leq \lambda_m(i).
\end{equation}
\end{claim}
\begin{proof}
Denote $V=V(D(\varphi))$ and $E=E(D(\varphi))$.
Observe that the in-degree of each vertex is at most two by construction. Thus, for any vertex ordering $\pi$ and each position $i$, we have
\begin{equation}\label{eq:1}
\cutvector{D(\varphi)}{\pi}(i) \leq |E(V,\fstsmt{\pi}{i})| \leq 2i.
\end{equation}
This verifies~\eqref{eq:main} for $i\in [0,5m]$.

Observe that, every independent set in $D(\varphi)$ has at most $5m$ vertices: at most $p_x$ vertices can be selected from each variable gadget $G_x$, for $x\in \vars(\varphi)$, and at most $1$ vertex from each clause gadget $G^C_\top$ and $G^C_\bot$, for $C \in \cls(\varphi)$. We deduce that every subset of $i$ vertices in $D(\varphi)$ induces a subdigraph with at least $i-5m$ arcs, and hence:
\begin{equation}\label{eq:2}
\cutvector{D(\varphi)}{\pi}(i) = |E(V,\fstsmt{\pi}{i})| - |E(\fstsmt{\pi}{i},\fstsmt{\pi}{i})| \leq 2i-(i-5m) = 5m+i.
\end{equation}
This verifies~\eqref{eq:main} for $i\in [5m+1,6m]$.

Reciprocally, the outdegree of each vertex of $D(\varphi)$ is at most three by construction, so
\begin{equation}\label{eq:3}
\cutvector{D(\varphi)}{\pi}(i) \leq |E(\lstsmt{\pi}{i},V)| \leq 3(|V|-i) = 42m-3i.
\end{equation}
This verifies~\eqref{eq:main} for $i\in [12m+1,14m]$.

To prove the next inequality, we make a finer analysis of vertices with outdegrees less than~$3$. 
For $1 \leq j \leq 3$ and a position $i$, let $V^j_i = \{u \in \lstsmt{\pi}{i} \colon d^+(u) = j\}$. 
Note that the set $V^3_i$ is a subset of the centers $\top^C$ and $\bot^C$ of the clause gadgets, thus $|V^3_i| \leq 2m$. 
We infer that
\begin{equation*}
|E(\lstsmt{\pi}{i},V)| = 3|V^3_i|+2|V^2_i|+|V^1_i| = 2|\lstsmt{\pi}{i}|+|V^3_i|-|V^1_i|\leq 2(|V|-i) + 2m - |V^1_i|.
\end{equation*}
Also, any independent set in $D(\varphi)[V^2_i \cup V^3_i]$ has size at most $2m$, since this digraph is an induced subdigraph of the union of the $2m$ clause gadgets. 
As before, we infer that this induced subdigraph has at least $|V^2_i|+|V^3_i|-2m$ edges, hence
\begin{equation*}
|E(\lstsmt{\pi}{i},\lstsmt{\pi}{i})| \geq |V^2_i|+|V^3_i|-2m = |V|-i-|V^1_i|-2m.
\end{equation*}
Piecing this altogether, we deduce 
\begin{equation}\label{eq:4}
\cutvector{D(\varphi)}{\pi}(i) = |E(\lstsmt{\pi}{i},V)| - |E(\lstsmt{\pi}{i},\lstsmt{\pi}{i})| \leq |V|+4m-i=18m-i.
\end{equation}
This verifies~\eqref{eq:main} for $i\in [7m+1,12m]$.

It remains to show the last inequality: for $i\in [6m+1,7m]$ it holds that
\begin{equation}\label{eq:5}
\cutvector{D(\varphi)}{\pi}(i) \leq 11m.
\end{equation}
This, however, follows immediately from applying Claim~\ref{cl:cut-any} to $A=\lstsmt{\pi}{i}$, so the proof of the claim is now complete.
\cqed\end{proof}

It remains to prove the last two conditions of the lemma. We proceed with the third one.

\begin{claim}\label{cl:pr3}
If there is a vertex ordering $\pi$ with $\max\{\cutvector{D(\varphi)}{\pi}\}\geq 11m$, then $\varphi$ is NAE-satisfiable.
\end{claim}
\begin{proof}
Suppose there exist a vertex ordering $\pi$ and position $i$ such that $\cutvector{D(\varphi)}{\pi}(i) \geq 11m$.
By applying Claim~\ref{cl:cut-any} to $A=\lstsmt{\pi}{i}$ we infer that also $\cutvector{D(\varphi)}{\pi}(i) \leq 11m$, so
$$|E^i_\pi|=|E(\lstsmt{\pi}{i},\fstsmt{\pi}{i})|=\cutvector{D(\varphi)}{\pi}(i)=11m$$
and all inequalities used in the proof of Claim~\ref{cl:cut-any} for $A=\lstsmt{\pi}{i}$ are in fact equalities.
In particular, from the examination of the proof it follows that:
\begin{itemize}
\item For every $x\in \vars(\varphi)$, $\lstsmt{\pi}{i} \cap V(G_x)$ is either $\{\bot^x_i\colon i \in [1,p_x]\}$ or $\{\top^x_i\colon i \in [1,p_x]\}$.
\item For every $C\in \cls(\varphi)$, either $|\lstsmt{\pi}{i} \cap V^C_\top|=1$ and $|E^i_\pi \cap R^C_\top|=1$, or $|\lstsmt{\pi}{i} \cap V^C_\top|=2$ and $|E^i_\pi \cap R^C_\top|=2$. In any case, $|E^i_\pi \cap R^C_\top|\geq 1$, and similarly $|E^i_\pi \cap R^C_\bot| \geq 1$.
\end{itemize}
Let $\alpha\colon \vars(\varphi)\to \{\bot,\top\}$ be an assignment defined as follows: for each $x\in \vars(\varphi)$, set $\alpha(x) = \top$ if $\lstsmt{\pi}{i} \cap V(G_x)=\{\top^x_i\colon i \in [1,p_x]\}$, and set $\alpha(x)=\bot$ otherwise. We prove that $\alpha$ NAE-satisfies~$\varphi$.

Take any $C\in \cls(\varphi)$. Since $|E^i_\pi \cap R^C_\top|\geq 1$, there must exist an arc $(\top^C_{\ell_x},\beta^x_i) \in R^C_\top$ such that $\top^C_{\ell_x} \in \lstsmt{\pi}{i}$ and $\beta^x_i \not \in \lstsmt{\pi}{i}$. By the definition of $\alpha$, the latter assertion is equivalent to $\alpha(x) = \neg \beta$. By the construction of $D(\varphi)$, the existence of an arc from $\top^C_{\ell_x}$ to $\beta^x_i$ exactly means that $\ell_x$ is satisfied by setting $\alpha(x)=\neg \beta$. Since $|E^i_\pi \cap R^C_\bot|\geq 1$ as well, a symmetric reasoning shows that there also exists a literal of $C$ which is not satisfied by $\alpha$. Therefore, both $\alpha$ and $\neg \alpha$ satisfy $\varphi$, which means that $\alpha$ NAE-satisfies $\varphi$.
\cqed\end{proof}

We are left with the second condition from the lemma statement.
\begin{claim}\label{cl:pr2}
If $\varphi$ is NAE-satisfiable, then there exists a vertex ordering $\pi$ with $\cutvector{D(\varphi)}{\pi} = \lambda_m$.
\end{claim}
\begin{proof}
 Suppose that $\varphi$ is NAE-satisfiable, and let $\alpha\colon \vars(\varphi)\to \{\bot,\top\}$ be a NAE-satisfying assignment for $\varphi$. We construct a vertex ordering $\pi$ such that $\cutvector{D(\varphi)}{\pi} = \lambda_m$ as follows. We list consecutive vertices in $\pi$ from left to right, thinking of introducing them along the ordering. Whenever we do not
specify the order of introducing some vertices, this order can be chosen arbitrarily.
\begin{itemize}
\item ($i \in [1,3m]$) For each $x \in \vars(\varphi)$, if $\alpha(x) = \top$, then introduce the vertices $\bot^x_i$ of $G_x$, or introduce the vertices $\top^x_i$ of $G_x$ otherwise.
\item ($i \in [3m+1,5m]$) For each clause $C \in \cls(\varphi)$, introduce a vertex $\top^C_{\ell_x} \in V^C_\top$ and a vertex $\bot^C_{\ell_y} \in V^C_\bot$ such that $\alpha$ satisfies $\ell_y$ but does not satisfy $\ell_x$.
Since $\alpha$ NAE-satisfies $\varphi$, such vertices exist.
\item ($i \in [5m+1,6m]$) Since we only introduced $2$ vertices per clause so far, for each clause $C \in \cls(\varphi)$ there remains an unused literal $\ell_z \in C$, i.e., a literal for which no corresponding vertex is already introduced. Introduce $\bot^C_{\ell_z}$ if $\alpha$ satisfies $\ell_z$, or $\top^C_{\ell_z}$ otherwise.
\item ($i \in [6m+1,7m]$) For each $C \in \cls(\varphi)$, introduce $\top^C_{\ell_z}$ if $\alpha$ satisfies $\ell_z$, or $\bot^C_{\ell_z}$ otherwise, where $\ell_z$ is as above.
\item ($i \in [7m+1,9m]$) For each $C \in \cls(\varphi)$, introduce $\bot^C_{\ell_x}$ and $\top^C_{\ell_y}$, where $\ell_x$ and $\ell_y$ are as three points above.
\item ($i \in [9m+1,12m]$) For each $x \in \vars(\varphi)$, if $\alpha(x) = \top$, introduce the vertices $\top^x_i$ of $G_x$, or introduce the vertices $\bot^x_i$ of $G_x$ otherwise.
\item ($i \in [12m+1,14m]$) For each $C \in \cls(\varphi)$, introduce $\top^C$ and $\bot^C$.
\end{itemize}
It is not hard to analyze the number of arcs that each introduced vertex $\onesmt{\pi}{i}$ brings and removes, when moving from $E^{i-1}_\pi$ to $E^{i}_\pi$.
In fact, in the interval $[1,5m]$ the cutsize is incremented by~$2$ with each new vertex, in $[5m+1,6m]$ it is incremented by $1$, in $[6m+1,7m]$ it stays the same,
in $[7m+1,12m]$ it decreases by $1$, and in $[12m+1,14m]$ it decreases by $3$.
This shows that the cut vector of $\pi$ is exactly equal to $\lambda_m$, as claimed.
\cqed\end{proof}

Claims~\ref{cl:pr1},~\ref{cl:pr3}, and~\ref{cl:pr2} together finish the proof of Lemma~\ref{lem:Reduction}.
\end{proof}

Note that Lemma~\ref{lem:Reduction} expresses a reduction from {\sc{NAE-3SAT}} to a maximization problem: NAE-satisfiability of $\varphi$ is equivalent to $D(\varphi)$ admitting a vertex ordering of width {\em{at least}} $11m$. 
\blue{By complementing the resulting digraph, we turn this maximization into a minimization problem.}
Precisely, given a simple digraph $D=(V,E)$, define its \emph{complement} as $\bar D = (V,\bar E)$, where $\bar E = V^2 \setminus (E\cup \{(u,u)\colon u\in V\})$.
That is, we take the complete digraph without self-loops on the vertex set $V$, and we remove all the arcs that are present in $D$.
Note that the complement of a basic digraph is semi-complete.

Now, let $\bar \lambda_m$ be the tuple such that for all $i \in [0,14m]$, we have $\lambda_m(i) + \bar \lambda_m(i) = i(14m-i)$. 
It is not hard to check that $\max\{\bar \lambda_m\} = \bar \lambda_m(7m) = 49m^2-11m$. A simple verification of how the conditions of
Lemma~\ref{lem:Reduction} are transformed under complementation yields the following.

\begin{lemma}\label{pro:Reduction}
The complement of $D(\varphi)$ is a semi-complete digraph $\bar D(\varphi)$ satisfying:
\begin{enumerate}[nosep]
\item for every vertex ordering $\pi$, we have $\bar \lambda_m \preceq \cutvector{\bar D(\varphi)}{\pi}$;
\item if $\varphi$ is NAE-satisfiable, then there exists a vertex ordering $\pi$ with $\cutvector{\bar D(\varphi)}{\pi} = \bar \lambda_m$;
\item if $\bar D(\varphi)$ admits a vertex ordering $\pi$ of width at most $49m^2-11m$, then $\varphi$ is NAE-satisfiable.
\end{enumerate}
\end{lemma}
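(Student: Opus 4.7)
The plan is to derive everything from a single algebraic identity relating the cut vectors of a digraph and of its complement. Specifically, for any simple digraph $D$ on $n$ vertices, any vertex ordering $\pi$, and any $i \in [0,n]$, the prefix $\fstsmt{\pi}{i}$ and suffix $\lstsmt{\pi}{i}$ are disjoint, so the $i(n-i)$ ordered pairs $(u,v)$ with $u\in\lstsmt{\pi}{i}$ and $v\in\fstsmt{\pi}{i}$ are all distinct non-loops, and each is present in exactly one of $D$ and $\bar D$. This yields
\begin{equation*}
\cutvector{D}{\pi}(i)+\cutvector{\bar D}{\pi}(i)=i(n-i).
\end{equation*}
First I would record this identity and note that $\bar D(\varphi)$ is indeed semi-complete: since $D(\varphi)$ is basic, at most one of $(u,v),(v,u)$ belongs to $E(D(\varphi))$ for each pair of distinct vertices $u,v$, so at least one of them belongs to $E(\bar D(\varphi))$.

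With the identity in hand, conditions (1) and (2) are immediate translations of the corresponding conditions of Lemma~\ref{lem:Reduction} applied to $D(\varphi)$ (with $n=14m$) using the definition $\bar\lambda_m(i)=i(14m-i)-\lambda_m(i)$. For (1), Lemma~\ref{lem:Reduction}(1) gives $\cutvector{D(\varphi)}{\pi}(i)\leq \lambda_m(i)$ coordinate-wise, so subtracting from $i(14m-i)$ reverses the inequality and yields $\cutvector{\bar D(\varphi)}{\pi}(i)\geq \bar\lambda_m(i)$. For (2), an ordering $\pi$ realizing $\cutvector{D(\varphi)}{\pi}=\lambda_m$ produced by Lemma~\ref{lem:Reduction}(2) automatically realizes $\cutvector{\bar D(\varphi)}{\pi}=\bar\lambda_m$.

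For (3), I would use the critical position $i=7m$, where $i(14m-i)=49m^2$, so the identity at this position reads $\cutvector{D(\varphi)}{\pi}(7m)+\cutvector{\bar D(\varphi)}{\pi}(7m)=49m^2$. If $\pi$ is an ordering of width at most $49m^2-11m$ for $\bar D(\varphi)$, then in particular $\cutvector{\bar D(\varphi)}{\pi}(7m)\leq 49m^2-11m$, which forces $\cutvector{D(\varphi)}{\pi}(7m)\geq 11m$, so $\max\{\cutvector{D(\varphi)}{\pi}\}\geq 11m$ and Lemma~\ref{lem:Reduction}(3) concludes that $\varphi$ is NAE-satisfiable.

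There is no real obstacle: the only routine piece is the ``simple verification'' announced in the text that $\max\{\bar\lambda_m\}=\bar\lambda_m(7m)=49m^2-11m$, needed only to interpret the number $49m^2-11m$ as an attainable width. I would dispatch it by splitting the interval $[0,14m]$ according to the five pieces of the piecewise definition of $\lambda_m$; on each piece, $\bar\lambda_m(i)=i(14m-i)-\lambda_m(i)$ is a concave quadratic (or affine) function of $i$, so comparing the values at the breakpoints $0,5m,6m,7m,12m,14m$ (all of which are at most $49m^2-11m$, with equality precisely at $7m$) suffices.
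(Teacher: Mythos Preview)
Your proposal is correct and follows essentially the same approach as the paper: both establish the identity $\cutvector{D}{\pi}(i)+\cutvector{\bar D}{\pi}(i)=i(n-i)$, use it to translate conditions (1) and (2) directly from Lemma~\ref{lem:Reduction}, and for (3) specialize to the position $i=7m$ to convert the width bound into $\cutvector{D(\varphi)}{\pi}(7m)\geq 11m$. Your write-up is in fact slightly more careful, as you explicitly justify semi-completeness and sketch the verification of $\max\{\bar\lambda_m\}=49m^2-11m$, both of which the paper states without proof just before the lemma.
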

\begin{proof}
Observe that for any digraph $D$ on $n$ vertices, any its vertex ordering $\pi$, and any $i \in [0,n]$, we have $\cutvector{D}{\pi}(i) + \cutvector{\bar D}{\pi}(i) = i(n-i)$, and then
\begin{equation}\label{eq:sum}
\cutvector{D(\varphi)}{\pi}(i) + \cutvector{\bar D(\varphi)}{\pi}(i) = \lambda_m(i) + \bar \lambda_m(i).
\end{equation}
Equality~\eqref{eq:sum} implies that $\bar \lambda_m(i) \leq \cutvector{\bar D(\varphi)}{\pi}(i)$ is equivalent to $\cutvector{D(\varphi)}{\pi}(i) \leq \lambda_m(i)$. By the first claim of Lemma~\ref{lem:Reduction}, we deduce that $\bar \lambda_m \preceq \cutvector{\bar D(\varphi)}{\pi}$ for every vertex ordering~$\pi$.
Similarly, if $\varphi$ is NAE-satisfiable, then by the second claim of Lemma~\ref{lem:Reduction} we have an ordering $\pi$ such that $\cutvector{D(\varphi)}{\pi}=\lambda_m(i)$,
and then $\cutvector{\bar D(\varphi)}{\pi}=\bar \lambda_m(i)$ due to~\eqref{eq:sum}.
Finally, if we have a vertex ordering $\pi$ of width at most $49m^2-11m$, that is, with $\max\{\cutvector{\bar D(\varphi)}{\pi}\} = 49m^2-11m$, then 
in particular $\cutvector{\bar D(\varphi)}{\pi}(7m) \leq 49m^2-11m$, which is equivalent to $\cutvector{D(\varphi)}{\pi}(7m) \geq 11m$, due to~\eqref{eq:sum}.
By the third claim of Lemma~\ref{lem:Reduction}, this implies that $\varphi$ is NAE-satisfiable.
\end{proof}

Thus, Lemma~\ref{pro:Reduction} shows that NAE-satisfiability of $\varphi$ is equivalent to $\bar D(\varphi)$ having cutwidth at most $49m^2-11m$.
However, the fact that NAE-satisfiability of $\varphi$ implies that  $\bar D(\varphi)$ admits a vertex ordering with a very concrete cut vector $\bar \lambda_m$, which is the best possible in the sense
of the first claim of Lemma~\ref{pro:Reduction}, also enables us to derive a lower bound for \ola. 
All these observations, together with the linear bound on the number of vertices of $\bar D(\varphi)$, make the proof of Theorem~\ref{thm:lower-bound} essentially complete. We now give a formal verification that Theorem~\ref{thm:lower-bound} follows from Lemmas~\ref{lem:Reduction} and~\ref{pro:Reduction}.

\begin{proof}[Proof of Theorem~\ref{thm:lower-bound}]
Suppose we are given an instance $\varphi$ of the {\sc{NAE-3SAT}} problem. Compute the semi-complete digraph $\bar D(\varphi)$; this takes polynomial time, and $\bar D(\varphi)$ has $14m$ vertices.
By Lemmas~\ref{lem:Reduction} and~\ref{pro:Reduction}, we have that $\varphi$ is NAE-satisfiable if and only if the cutwidth of $\bar D(\varphi)$ is at most $49m^2-11m$.
Indeed, if $\varphi$ is NAE-satisfiable, then $\bar D(\varphi)$ has an ordering with cut vector $\bar \lambda_m$, whose maximum is $49m^2-11m$, and the converse implication is exactly the third claim of Lemma~\ref{pro:Reduction}. Therefore, if we could verify whether a given semi-complete digraph $D$ on $n$ vertices has cutwidth at most $k$ in time $2^{o(n)}$ or $2^{o(\sqrt{k})}\cdot n^{\Oh(1)}$, then
by applying this algorithm to $\bar D(\varphi)$ we would resolve whether $\varphi$ is NAE-satisfiable in time $2^{o(m)}\cdot (n+m)^{\Oh(1)}$, contradicting ETH by Corollary~\ref{cor:ETH}.

For \ola, we adapt the argument slightly. Let $k=\sum\{\bar\lambda_m\}$; note that $k\in \Theta(m^3)$.
Again, by Lemmas~\ref{lem:Reduction} and~\ref{pro:Reduction} we infer that $\varphi$ is NAE-satisfiable if and only if the \ola-cost of $\bar D(\varphi)$ is at most $k$.
Indeed, if $\varphi$ is NAE-satisfiable, then $\bar D(\varphi)$ has an ordering with cut vector $\bar \lambda_m$, whose sum is $k$. 
On the other hand, by the first claim of Lemma~\ref{pro:Reduction} we have that every vertex ordering $\pi$ of $\bar D(\varphi)$ satisfies $\bar \lambda_m \preceq \cutvector{\bar D(\varphi)}{\pi}$,
so the only possibility of obtaining an ordering of cost at most $k$ is to have an ordering with cut vector equal to $\bar \lambda_m$.
This cut vector would in particular have maximum equal to $49m^2-11m$, which by the third claim of Lemma~\ref{pro:Reduction} implies that $\varphi$ is NAE-satisfiable.
As before, this means that using an algorithm for computing \ola-cost with running time $2^{o(n)}$ or $2^{o(\sqrt[3]{k})}\cdot n^{\Oh(1)}$ we would be able to resolve NAE-satisfiability of $\varphi$
in time $2^{o(m)}\cdot (n+m)^{\Oh(1)}$, contradicting ETH by Corollary~\ref{cor:ETH}.
\end{proof}

\section{Parameterization by the number of pure vertices}\label{sec:pure}

Consider the problems of computing the cutwidth and the \ola-cost of a semi-complete digraph.
\blue{On one hand,} the reduction of Lemma~\ref{lem:Reduction} constructs a basic digraph whose complement has a pair of symmetric arcs between almost every pair of vertices.
On the other hand, on tournaments the problems are polynomial-time solvable.
\blue{These observations suggest taking a closer look at the parameterization by the number of vertices incident to symmetric arcs, as this parameter is zero in the tournament case and high
in instances constructed in our hardness reduction.}
We indeed show that this parameterization leads to an FPT problem, even in a larger generality.
Call a vertex $u$ of a simple digraph $D$ {\em{pure}} if for any other vertex $v$, exactly one of the arcs $(u,v)$ or $(v,u)$ is present in $D$.
In this section we prove the following algorithmic result.

\begin{theorem}\label{thm:partitionFPT}
There is an algorithm that, given a simple digraph $D$ on $n$ vertices, computes the cutwidth and the \ola-cost of $D$ in time $2^k\cdot n^{\Oh(1)}$, where $k$ is the number of non-pure vertices in $D$.
The algorithm can also report orderings certifying the output values.
\end{theorem}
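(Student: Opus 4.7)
The plan is to set up a subset dynamic programming over the $2^k$ subsets of non-pure vertices, using a structural lemma that pins down the relative order of pure vertices in some optimal ordering. Write $S$ for the set of non-pure vertices and $P=V(D)\setminus S$ for the set of pure vertices, so $|S|=k$. By the definition of purity, for any $u\in P$ and any $X\subseteq V(D)\setminus\{u\}$ we have $d^-_X(u)+d^+_X(u)=|X|$, since exactly one of $(u,w)$ and $(w,u)$ lies in $E(D)$ for each $w\in X$.

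The key normalization step is an exchange argument. Suppose $\pi$ is an arbitrary vertex ordering and $u,v\in P$ are two pure vertices with $\pi(u)=\pi(v)+1$ and $d^-(u)<d^-(v)$. Using the identity $|E(V\setminus X,X)|=\sum_{w\in X}d^-(w)-|E(X,X)|$ together with $d^-_{X_0}(w)+d^+_{X_0}(w)=|X_0|$ for $X_0=\fstsmts{\pi}{\pi(v)}$ and $w\in\{u,v\}$, one verifies that swapping $u$ and $v$ modifies only the cut at position $\pi(v)$, decreasing its size by $d^-(v)-d^-(u)>0$ while leaving all other cuts unchanged. Iterating this swap (a bubble-sort on pure vertices) on any $\cw$-optimal or $\ola$-optimal ordering produces an optimal ordering $\pi^\star$ in which $P$ appears in non-decreasing order of $d^-$: the $\ola$-cost strictly decreases with each swap, and the cutwidth is monotone non-increasing, so termination and correctness are both guaranteed. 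This exchange lemma is the main technical step of the proof; the rest is a standard subset DP.

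Fix an enumeration $P=\{p_1,\ldots,p_{n-k}\}$ in non-decreasing order of $d^-$. Every prefix of a normalized optimal ordering then has the form $X_{S',j}:=S'\cup\{p_1,\ldots,p_j\}$ for some $S'\subseteq S$ and $j\in[0,n-k]$. I would then run a DP over states $(S',j)$, storing at each state the optimum, over orderings of $X_{S',j}$ consistent with the normalization, of the running cost along the prefix: for $\ola{}$ the sum of cut sizes along the prefix, and for cutwidth the maximum cut size. Transitions correspond to appending either the next pure vertex $p_{j+1}$ or some non-pure vertex $s\in S\setminus S'$, and the new cut size $|E(V\setminus X_{S'',j''},X_{S'',j''})|$ is computable in polynomial time; the DP update is the standard $+$-min for $\ola{}$ and $\max$-min for cutwidth. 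With $2^k(n-k+1)$ states and polynomial work per transition, the total running time is $2^k\cdot n^{\Oh(1)}$, and backtracking through the DP recovers a certifying ordering.
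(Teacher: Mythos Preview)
Your approach is essentially the paper's: first an exchange lemma (Lemma~\ref{lem:pure-ordering} in the paper) showing that the pure vertices can be sorted by indegree within some optimal ordering, then a subset DP over states $(S',j)$ with $S'\subseteq S$ and $j$ counting how many pure vertices have been placed.

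There is, however, a gap in your exchange step. You analyze only the swap of two pure vertices at \emph{consecutive} positions, $\pi(u)=\pi(v)+1$. A bubble sort restricted to such swaps cannot reorder pure vertices that are separated by non-pure ones: if the ordering reads $v,s,u$ with $s\in S$ non-pure and $d^-(u)<d^-(v)$, no adjacent pure--pure swap is available, and you have not argued that swapping a pure vertex with a non-pure neighbour is safe (in general it is not). The remedy is to swap $u$ and $v$ directly, across whatever lies between them. For every position $i$ with $\pi(v)\le i<\pi(u)$ the prefix $X=\fstsmt{\pi}{i}$ is replaced by $X'=(X\setminus\{v\})\cup\{u\}$; the purity identity you already invoke gives $|E(X,X)|=|E(X',X')|$ (both equal $|E(X\setminus\{v\},X\setminus\{v\})|+|X|-1$), so each such cut changes by exactly $d^-(u)-d^-(v)\le 0$, while all other cuts are unchanged. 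This is precisely the computation carried out in the paper's proof of Lemma~\ref{lem:pure-ordering}. With this correction your argument is complete and coincides with the paper's.
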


The proof of Theorem~\ref{thm:partitionFPT} is based on the fine understanding of minimum orderings that we developed in Section~\ref{sec:polynomial}.
Recall that in Lemma~\ref{lem:minimum-sorted} we have shown that in the case when all vertices are pure---that is, when the digraph is a tournament---minimum orderings are exactly the sorted ones.
We now extend this observarion to the case when not all vertices are pure, by showing that at least the pure ones may be sorted greedily according to their indegrees.

\begin{lemma}\label{lem:pure-ordering}
Let $P$ be the set of pure vertices of a simple digraph $D$, and let $\pi^P$ be a sorted ordering of $P$, that is, for each $u,v \in P$, if $d_D^-(u) < d_D^-(v)$ then $\pi^P(u) < \pi^P(v)$. There exists a $\cw$-optimal ordering $\pi$ of $D$ and an \ola-optimal ordering $\pi'$ of $D$ such that the restrictions of $\pi$ and $\pi'$ to $P$ are equal to $\pi^P$.
\end{lemma}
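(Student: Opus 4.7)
The plan is to prove Lemma~\ref{lem:pure-ordering} by an exchange argument that extends the sorting reasoning behind Lemma~\ref{lem:minimum-sorted} to the case where only some vertices are pure. Starting from an arbitrary $\cw$-optimal (respectively $\ola$-optimal) ordering $\sigma$, I would perform a sequence of transpositions of pairs of pure vertices, each of which does not increase the cut vector coordinate-wise, until the restriction of $\sigma$ to $P$ equals $\pi^P$.

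The technical core of the argument will be the following single-swap computation. For two vertices $u, v$ with $\sigma(u) = i < j = \sigma(v)$, let $\sigma'$ denote the ordering obtained from $\sigma$ by swapping the positions of $u$ and $v$. For $k \notin [i, j-1]$ one has $\fstsmt{\sigma'}{k} = \fstsmt{\sigma}{k}$ and the cut at position $k$ is unchanged. For $k \in [i, j-1]$, writing $X_k := \fstsmt{\sigma}{k} \setminus \{u\}$, we have $\fstsmt{\sigma}{k} = X_k \cup \{u\}$ and $\fstsmt{\sigma'}{k} = X_k \cup \{v\}$, with $u, v \notin X_k$. Applying the identity $|E(V \setminus Y, Y)| = \sum_{w \in Y} d^-(w) - |E(Y, Y)|$ to both sets and subtracting yields
\[
\cutvector{D}{\sigma'}(k) - \cutvector{D}{\sigma}(k) = \bigl(d^-(v) - d^-(u)\bigr) - \bigl(\alpha(v) - \alpha(u)\bigr),
\]
where $\alpha(x) := |E(X_k, \{x\})| + |E(\{x\}, X_k)|$ counts arcs between $x$ and $X_k$, tallying symmetric arcs twice. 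The crucial observation is that if $x$ is pure then exactly one arc exists between $x$ and each element of $X_k$, so $\alpha(x) = |X_k|$. Hence when both $u$ and $v$ are pure, the right-hand side collapses to $d^-(v) - d^-(u)$.

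Having established this, I would conclude that transposing two pure vertices $u, v$ with $\sigma(u) < \sigma(v)$ and $d^-(u) \geq d^-(v)$ produces an ordering $\sigma'$ with $\cutvector{D}{\sigma'} \preceq \cutvector{D}{\sigma}$, so that neither the width nor the \ola-cost increases. A bubble-sort argument will then finish the proof: as long as the restriction $\sigma|_P$ differs from $\pi^P$, there exists a pair of pure vertices that are consecutive in $\sigma|_P$ but inverted with respect to $\pi^P$; since $\pi^P$ is sorted, such a pair satisfies the hypothesis above, so the transposition maintains optimality while strictly decreasing the number of inversions of $\sigma|_P$ with respect to $\pi^P$. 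After at most $\binom{|P|}{2}$ swaps the process terminates at the required ordering. Because the same swaps preserve both $\cw$- and $\ola$-optimality simultaneously, a single argument takes care of both parts of the lemma.

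The main difficulty lies entirely in the cut-difference formula; once the cancellation $\alpha(v) - \alpha(u) = 0$ is secured by requiring both $u$ and $v$ to be pure, everything else is a routine bubble-sort induction. In particular, the argument breaks down if one attempts to transpose a pure vertex with a non-pure one: the term $\alpha(v) - \alpha(u)$ can then have arbitrary sign and magnitude, and the swap may genuinely increase cuts. This explains why only pure vertices are guaranteed to be sortable, while non-pure vertices must be left in their original positions.
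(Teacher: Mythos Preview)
Your proposal is correct and takes essentially the same approach as the paper: an exchange argument in which one repeatedly swaps two pure vertices whose relative order violates $\pi^P$, using purity to show the cut vector does not increase coordinate-wise. The only cosmetic difference is that the paper fixes the first mismatch between $\sigma|_P$ and $\pi^P$ at each step (a selection-sort style progress measure on the length of the matching prefix), whereas you swap adjacent-in-$\sigma|_P$ inversions (bubble-sort style, with the inversion count as the progress measure); the underlying cut-difference computation and the conclusion are identical.
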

\begin{proof}
We only give the proof for cutwidth as the proof for \ola ~follows directly by changing each occurrence of $\cw$ by \ola. Let $\sigma$ be an arbitrary $\cw$-optimal vertex ordering of $D$. 
Suppose that the restriction $\sigma^P$ of $\sigma$ to $P$ differs from $\pi^P$. 
We show how to transform $\sigma$ into an $\cw$-optimal ordering $\pi$ of $D$ such that the restriction of $\pi$ to $P$ is $\pi^P$. 

Let $j$ be the smallest index such that $\onesmt{\pi^P}{j}\neq\onesmt{\sigma^P}{j}$. 
Let $u=\onesmt{\pi^P}{j}$ and $v=\onesmt{\sigma^P}{j}$.
By the minimality of $j$, the prefixes of $\pi^P$ and $\sigma^P$ up to position $j-1$ match, so $v$ must be later than $u$ in $\pi^P$, but $v$ is before $u$ in $\sigma^P$.
Since $\sigma^P$ is equal to $\sigma$ restricted to $P$, we have that $v$ is before $u$ in $\sigma$.
On the other hand, since $\pi^P$ is sorted with respect to indegrees in $D$, from $\pi^P(u)<\pi^P(v)$ we infer the inequality
\begin{equation}\label{eq:diamond}
d^-(u) \leq d^-(v).
\end{equation}

We define a vertex ordering $\sigma^*$ of $D$ as a copy of $\sigma$ except that we exchange the position of vertices $u$ and $v$.
Note that $\sigma^*$ and $\pi$ match on a prefix of length $i$, one longer than on which $\sigma$ and $\pi$ matched.
We claim that $\cw(D) = \cw(D,\sigma) = \cw(D,\sigma^*)$, that is, $\sigma^*$ is a $\cw$-optimal ordering of $D$. 
Note that if this was the case, then we could apply the same reasoning to $\sigma^*$ and further on, eventually obtaining an $\cw$-optimal ordering of $D$ whose restriction to $P$ is exactly equal to $\pi^P$.
This would conclude the proof.

It remains to prove the claim. Recall that $\sigma(v)<\sigma(u)$. 
By definition of a cut vector, for all $i\notin [\sigma(v),\sigma(u)]$, we have $\cutvector{D}{\sigma}(i) = \cutvector{D}{\sigma^*}(i)$. 
Let us consider the cuts at positions $i\in [\sigma(v),\sigma(u)]$. We have:
\begin{align}
\cutvector{D}{\sigma}(i) & = |E(\lstsmt{\sigma}{i},\fstsmt{\sigma}{i})|\nonumber \\ 
& = |E(V,\fstsmt{\sigma}{i})| - |E(\fstsmt{\sigma}{i},\fstsmt{\sigma}{i})|\nonumber \\
 & = |E(V,\fstsmt{\sigma}{i} \setminus \{v\})| + d^-(v) - |E(\fstsmt{\sigma}{i},\fstsmt{\sigma}{i})|\label{eq:sigma}
\end{align}
The same reasoning for $\sigma^*$ yields:
\begin{align}
\cutvector{D}{\sigma^*}(i) & = |E(V,\fstsmt{\sigma^*}{i}\setminus \{u\})| + d^-(u) - |E(\fstsmt{\sigma^*}{i},\fstsmt{\sigma^*}{i})|\label{eq:sigma-star}
\end{align}
Also, for any pure vertex $w$ belonging to any vertex set $W$, we have: \begin{align*}
|E(W,W)| & = |E(W \setminus \{w\},W \setminus \{w\})| + d_{D[W]}^+(w)+d_{D[W]}^-(w) - |E(x,x)| \\
& = |E(W \setminus \{w\},W \setminus \{w\})| + |W| - 1.
\end{align*}
Since $u$ and $v$ are pure vertices and $\fstsmt{\sigma}{i} \setminus \{v\} = \fstsmt{\sigma^*}{i} \setminus \{u\}$, it follows that:
\begin{align*}
|E(\fstsmt{\sigma}{i},\fstsmt{\sigma}{i})|& =|E(\fstsmt{\sigma}{i} \setminus \{v\},\fstsmt{\sigma}{i} \setminus \{v\})|+|\fstsmt{\sigma}{i} \setminus \{v\}|-1\\
& =|E(\fstsmt{\sigma^*}{i} \setminus \{u\},\fstsmt{\sigma^*}{i} \setminus \{u\})|+|\fstsmt{\sigma^*}{i} \setminus \{u\}|-1=|E(\fstsmt{\sigma^*}{i},\fstsmt{\sigma^*}{i})|
\end{align*}
By combining this with~\eqref{eq:sigma} and~\eqref{eq:sigma-star}, we conclude that 
\begin{align*}
\cutvector{D}{\sigma}(i) - \cutvector{D}{\sigma^*}(i) = d^-(v) - d^-(u),
\end{align*}
which is a positive value by~\eqref{eq:diamond}.

We conclude that $\cutvector{D}{\sigma^*}(i) \leq \cutvector{D}{\sigma}(i)$ for each $i \in [0,|V(D)|]$. 
Thus, we deduce that $\cw(D) \leq \cw(D,\sigma^*) \leq \cw(D,\sigma) = \cw(D)$, so $\sigma^*$ is indeed a $\cw$-optimal vertex ordering of $D$. This concludes the proof.
\end{proof}

We can now proceed to the proof of Theorem~\ref{thm:partitionFPT}.

\begin{proof}[Proof of Theorem~\ref{thm:partitionFPT}]
We focus on computing the cutwidth of $D$, at the end we will briefly argue how the algorithm can be adjusted for the \ola-cost.
Let $P$ and $Q$ be the sets of pure and non-pure vertices in $D$, respectively. 
Compute any ordering $\nu^P$ of $P$ that sorts the vertices of $P$ according to non-decreasing indegrees in $D$.
By Lemma~\ref{lem:pure-ordering}, we know that there exists a $\cw$-optimal vertex ordering $\pi$ of $D$ such that the restriction of $\pi$ to $P$ is equal to $\nu^P$.
We give a dynamic programming algorithm that attempts at reconstructing $\pi$ based on $\nu^P$.

The set space of the dynamic programming consists of pairs $(X,i)$, where $X$ is a subset of $Q$ and $i$ is an integer with $0\leq i\leq |P|$.
Thus, we have at most $2^k\cdot (n+1)$ states.
For a state $(X,i)$, let $S(X,i)=X\cup \nu^P[i]$ be the associated candidate for a prefix of $\pi$. We define the following value function for the states:
\begin{equation*}
\phi(X,i)=\min_{\sigma\colon \textrm{ordering of } S(X,i)}\ \left(\max_{j=1,2,\ldots,i}\{|E(V(D)\setminus \sigma[j],\sigma[j])|\}\right).
\end{equation*}
In other words, assuming that $S(X,i)$ is a prefix of the constructed ordering $\pi$, $\phi(X,i)$ tells us how small maximum cutsize we can obtain among cuts within this prefix.
It is straightforward to verify that function $\phi(X,i)$ satisfies the following recurrence, which corresponds to choosing whether the last vertex of the ordering of $S(X,i)$ belongs to $X$ or to $\nu^P[i]$.
\begin{eqnarray*}
\phi(\emptyset,0)=& 0 & \\
\phi(X,i)= & \max( &  |E(V(D)\setminus S(X,i),S(X,i))|,\\
& & \min(\{\phi(X,i-1)\}\cup \{\phi(X\setminus \{x\},i)\colon x\in X\})).
\end{eqnarray*}
Here, we use the convention that $\phi(X,-1)=+\infty$ for all $X\subseteq Q$.
Thus, we can compute the values of function $\phi(\cdot,\cdot)$ in the dynamic programming manner, by iterating on sets $X$ of increasing size and increasing $i$.
The computation of each value takes polynomial time and there are $2^k\cdot n$ values to compute, hence the running time follows.
The optimum value of cutwidth can be found as the value $\phi(Q,|P|)$, and an ordering certifying this value can be recovered in polynomial time using the standard method of back-links.
To see that the algorithm is correct, observe that since $\pi$ restricted to $P$ is equal to $\nu^P$, it follows that $\pi$ gives rise to a computation path of the dynamic programming above
that results in finding the optimum value of the cutwidth.

To adjust the algorithm to computing the OLA-cost of $D$, consider the adjusted value function for the states:
\begin{equation*}
\psi(X,i)=\min_{\sigma\colon \textrm{ordering of } S(X,i)}\ \left(\sum_{j=1}^i |E(V(D)\setminus \sigma[j],\sigma[j])|\right).
\end{equation*}
Then, $\psi(\cdot,\cdot)$ satisfies the following recurrence:
\begin{eqnarray*}
\psi(\emptyset,0)& = & 0 \\
\psi(X,i)& = & |E(V(D)\setminus S(X,i),S(X,i))|+\min(\{\psi(X,i-1)\}\cup \{\psi(X\setminus \{x\},i)\colon x\in X\}).
\end{eqnarray*}
Thus, the values of $\psi(\cdot,\cdot)$ can be computed in the same manner within the same time complexity. 
Again, the OLA-cost of $D$ is equal to $\psi(Q,|P|)$ and the ordering certifying this value can be recovered in polynomial time.
The argument for the correctness is the same.
\end{proof}

\section{Conclusions}\label{sec:conclusions}

In this work we have charted the computational complexity of cutwidth and \ola{} on semi-complete digraphs by proving almost tight algorithmic lower bounds 
under ETH and showing that cutwidth admits a quadratic Turing kernel, even though a classic polynomial kernel cannot be expected unless $\mathsf{NP}\subseteq \mathsf{coNP}/\textrm{poly}$.
A particular question that we leave open is whether the size of our Turing kernel for cutwidth could be improved to linear.
This might be suggested by the linear bound on the size of a $c$-cutwidth minimal tournament (Theorem~\ref{thm:obstructions-tour}).

Another interesting direction is to investigate further the complexity of graph modification problems related to cutwidth: 
apply at most $k$ modifications to the given semi-complete digraph in order to obtain a digraph of cutwidth at most $c$.
Some immediate corollaries for vertex deletions are discussed in  Section~\ref{sec:obstructions}, but it is also interesting to look at the arc reversal variant, where the allowed modification is reversing an arc.
For $c=0$, this problem is equivalent to the {\sc{Feedback Arc Set}} problem,
which has been studied intensively in tournaments and semi-complete digraphs~\cite{AlonLS09,BliznetsCKMP16,Faster,FominP13}. 
Further results on the vertex deletion and arc reversal problems related to cutwidth in semi-complete digraphs will be the topic
of a future paper, currently under preparation.


\bibliographystyle{abbrv}

\end{document}